\documentclass[draft,onecolumn]{IEEEtran}
\IEEEoverridecommandlockouts
\usepackage{cite}
\usepackage{amsmath,amssymb,amsfonts,amsthm}
\usepackage{algorithmic}
\usepackage{graphicx}
\usepackage{textcomp}
\usepackage{xcolor}
\usepackage{multirow}
\usepackage{bm}
\usepackage{setspace} 
\usepackage{tikz}
 \usetikzlibrary{arrows.meta,positioning}
\usepackage{url}
\usepackage[table]{xcolor}
\usetikzlibrary{positioning}
\usetikzlibrary{calc,positioning}
\usepackage{makecell}
\usepackage{subcaption}

\usepackage{blkarray}
\usepackage{comment}

\newif\ifcomment
\commenttrue

\begin{document}

\newtheorem{theorem}{Theorem}
\newtheorem{proposition}{Proposition}
\newtheorem{definition}{Definition}
\newtheorem{corollary}{Corollary}
\newtheorem{lemma}{Lemma}
\newtheorem{example}{Example}
\newtheorem{remark}{Remark}
\newtheorem{construction}{Construction}
\newtheoremstyle{lpstyle}{}{}{\itshape}{}{\bfseries}{.}{.5em}
  {\thmname{#1}\thmnumber{#2}\thmnote{ (#3)}}
\theoremstyle{lpstyle}
\newtheorem{linearprogram}{LP}
\theoremstyle{plain}

\title{Generalized Function-Correcting Partition Codes: Bounds and constructions}

\author{\IEEEauthorblockN{Charul Rajput$^*$, Mahak$^\dagger$, V. Lalitha$^\dagger$}\\
\IEEEauthorblockA{$^*$University of Bristol Mumbai Enterprise Campus, India\\
$^\dagger$Signal Processing and Communication Research Center (SPCRC),\\
International Institute of Information Technology Hyderabad, India\\
Emails: $^*$charul.rajput@mumbai.bristol.ac.uk, $^\dagger$\{mahak@research., lalitha.v@\}iiit.ac.in}}

\maketitle

\begin{abstract}
We introduce generalized function-correcting partition codes (GFCPCs) that simultaneously protect multiple partitions of the message space against different numbers of errors. Given partitions with respective distance requirements, a GFCPC is a systematic encoding that guarantees, for each partition, a specified minimum Hamming distance between codewords whose messages lie in different blocks. This framework unifies and generalizes both function-correcting partition codes, which protect multiple functions with a common error-correction level, and function-correcting codes with data protection, which assign different levels of protection to data and a single function. We define the distance requirement matrix $\mathcal{D}$ for GFCPCs and use it to characterize the optimal redundancy in terms of the shortest length of an associated $\mathcal{D}$-code. We derive general upper and lower bounds on the optimal redundancy, including an upper bound that considers the join of different combinations of the partitions. For two partitions of the message space over the binary field, we establish improved lower bounds on the optimal redundancy under specific neighborhood conditions on the partitions. We develop a linear programming bound on the optimal redundancy of GFCPCs. The bound specializes to function-correcting codes and function-correcting codes with data protection, for which, to the best of our knowledge, no such bound was previously known. We present a multi-step construction procedure for these codes and illustrate it with some examples. Through several examples, we demonstrate that the proposed framework can yield strictly smaller redundancy than both the sum of the individual FCPC redundancies and the redundancy of a single FCPC designed for the join partition with the highest distance (strongest protection required).
\end{abstract}
%

 
\section{Introduction}

In a standard communication or storage system, classical error-correcting codes are designed to protect the entire message against errors. However, in many practical scenarios, the receiver is not interested in recovering the full message but only in computing a specific function of it. For instance, in distributed computing, intermediate or terminal nodes may only need aggregate quantities such as sums, maxima, or classification labels derived from the data. Protecting the entire message in such settings introduces unnecessary redundancy. Function-correcting codes (FCCs) were introduced by Lenz et al. in \cite{LBZY2023} to address this, and protect only the value of a desired function against errors. The main idea behind FCCs is that codewords corresponding to messages with the same function value need not be separated by any minimum distance, since confusing them does not affect the receiver's computation. Distance constraints are imposed only between codewords whose messages correspond to different function values. This targeted protection allows FCCs to achieve lower redundancy compared to classical error-correcting codes that protect the full message.

Since a function naturally partitions its domain into preimage sets, the encoding of an FCC depends only on the partition induced by the function, not on the specific function itself. This observation led to the introduction of function-correcting partition codes (FCPCs) in \cite{RRHH20261}. An FCPC is defined directly on a partition of the message space, and any FCC for a function $f$ is exactly an FCPC with respect to the partition induced by $f$. This abstraction offers two advantages. First, a single FCPC can simultaneously serve all functions that induce the same domain partition, providing a notion of structural privacy since the transmitter need not know the exact function desired by the receiver. Second, when multiple functions need to be protected, FCPCs employ the join of the corresponding domain partitions to construct a single code that protects all functions simultaneously with a common error-correction level, potentially saving redundancy compared to using separate codes for each function.

A separate but related direction was pursued in \cite{RRHH20251}, where FCCs with data protection were introduced. In this framework, the encoding provides two levels of error protection, a baseline level for the data itself and a stronger level for the function value. This is motivated by network scenarios where reliable data flow is essential even though certain attributes of the data require stronger protection. The work in \cite{RRHH20251} proposed a two-step construction, derived bounds on the optimal redundancy, and showed that in some cases data protection can be added without increasing the redundancy beyond that of the original FCC.

However, both of these frameworks have limitations when considered individually. FCPCs in \cite{RRHH20261} allow protection of multiple functions but only with a common error-correction capability. FCCs with data protection in \cite{RRHH20251} allow two different levels of protection but only for a single function alongside the data. In many practical settings, one may need to protect multiple functions or attributes of the data, each requiring a different level of error protection. For example, in a data storage system, a coarse classification of the stored data may need only moderate protection, while a finer attribute may require strong protection, and the raw data itself may need only basic protection. Similarly, in a broadcast setting with multiple receivers, different receivers may require protection for different functions at different reliability levels.

To address this, we propose generalized FCPCs, a framework in which multiple partitions of the message space are simultaneously protected, each with its own distance requirement. This framework subsumes both FCCs with data protection and FCPCs as special cases. We study the optimal redundancy of these codes through combinatorial bounds, a linear programming bound, and a multi-step construction method.

\subsection{Related Work}

FCCs were formally introduced in \cite{LBZY2023}, where the authors established their equivalence to irregular-distance codes and derived general upper and lower bounds on the optimal redundancy. They also provided constructions for specific function families, including locally binary functions, the Hamming weight function, and the Hamming weight distribution function. The work in \cite{PR2024} derived new lower bounds on the redundancy of FCCs and studied FCCs for linear functions. Ge et al. in \cite{GXZZ2025} obtained improved bounds on the optimal redundancy for Hamming weight and Hamming weight distribution functions.

The FCC framework has also been extended to different channel models. Xia et al. \cite{XLC2024} studied FCCs for symbol-pair read channels, and Singh et al. \cite{SSY2025} generalized FCCs to $b$-symbol read channels over finite fields. The work in \cite{RRHH20252} introduced the class of locally $(\rho, \lambda)$-bounded functions and developed redundancy bounds for these functions. Further,  \cite{VSA2025} introduced locally $(\lambda, \rho, b)$-functions and investigated the optimal redundancy of Function-Correcting $b$-Symbol Codes (FCBSCs). In addition, \cite{SR2025} studied FCCs for linear functions in the $b$-symbol read channel and derived a Plotkin-like bound, which reduces to the corresponding bound for symbol-pair channels when $b = 2$. Ly and Soljanin in \cite{LS2025} derived bounds on the redundancy of FCCs over finite fields and proposed encoding schemes achieving optimal redundancy for sufficiently large fields. More recently, non-existence results for strict FCCs with data protection were established in \cite{RRHH20262} using a graph-theoretic framework based on the distance graphs of codes.

Several variants and generalizations of FCCs have been studied. In \cite{RRHH20251}, a framework was proposed to simultaneously protect both the function value $f(u)$ and the message $u$ with different levels of error protection. The concept of Function-Correcting Partition Codes (FCPCs), based on partitions of the message space rather than a specific function, was introduced in \cite{RRHH20261}.

In addition to the Hamming metric, FCCs have been investigated under various metrics and algebraic structures. In \cite{HH2026}, FCCs with homogeneous distance were introduced over $\mathbb{Z}_{2^l}~(l\geq 2)$, where the homogeneous metric coincides with the Lee metric over $\mathbb{Z}_4$. This line of research was extended in \cite{VS2025} to codes over $\mathbb{Z}_m~(m\geq 2)$ under the Lee metric, focusing on optimal redundancy. Further generalization to the chain ring $\mathbb{Z}_{2^s}$ was carried out in \cite{VS2026}, covering broader classes of functions such as locally bounded functions, linear functions, weight function, and modular sum functions. Additionally, \cite{HUR2025} proposed a Plotkin-like bound for irregular Lee-distance codes and constructed explicit Function-Correcting Lee Codes (FCLCs) for several function classes, including Lee weight, Lee weight distribution, modular sum, and locally bounded functions.

In \cite{SS2025}, the authors introduced function-correcting insertion, deletion, and insertion–deletion (insdel) codes (FCIDCs) and showed the equivalence of these formulations. Finally, in \cite{PBMR2026}, optimal Single-Error Correcting FCCs (SEFCCs) were studied for maximally unbalanced Boolean functions, while \cite{DMKPR2026} constructed SEFCCs for the Hamming Code Membership Function (HCMF), which determines whether a vector in $\mathbb{F}_2^7$ belongs to the $[7, 4, 3]$-Hamming code.

The bounds available for FCCs and their variants are combinatorial in nature. For classical error-correcting codes, the linear programming bound of Delsarte \cite{D1973} gives a well-known general upper bound on the size of a code with a prescribed minimum distance. The bound is obtained from the distance distribution of the code together with the MacWilliams inequalities, and is treated in detail in \cite[Ch.~17]{MS1977}. No linear programming bound has previously been given for FCCs, for FCCs with data protection, or for FCPCs. The distance constraints in these settings depend on the partition blocks of the messages, so the classical distance distribution does not capture the relevant structure.

\subsection{Contributions}

The main contributions of this paper are as follows.
\begin{enumerate}
    \item We introduce the notion of $(\mathcal{P}_h : d_h;\, h \in [H])$-FCPCs, which simultaneously protect $H$ partitions of the message space with respective distance requirements. We show that this framework generalizes both FCPCs for multiple functions \cite{RRHH20261} and FCCs with data protection \cite{RRHH20251}.

    \item We define the distance requirement matrix (DRM) for generalized FCPCs and use it to characterize the optimal redundancy in terms of the shortest length of an associated $\mathcal{D}$-code. This provides a tool for computing lower bounds by restricting attention to suitably chosen subsets of messages.

    \item We derive general lower and upper bounds on the optimal redundancy of generalized FCPCs. The upper bound is obtained by minimizing over all possible groupings of the $H$ partitions, where each group is protected by a single FCPC for the join of its partitions, at the maximum distance required by any partition in that group.

    \item For the binary case with two partitions, we establish improved lower bounds on the optimal redundancy under certain neighborhood conditions on the partitions.

      \item We develop a linear programming bound on the optimal redundancy of generalized FCPCs. We introduce distance distributions indexed by the subsets of partitions in which a pair of messages lies in different blocks, and derive MacWilliams-type inequalities for these distributions. The resulting linear program bounds the number of codewords of any generalized FCPC of a given length. The bound specializes to FCCs and to FCCs with data protection, for which, to the best of our knowledge, no such bound was previously known.

     \item We present an $H$-step construction method that builds the encoding iteratively. At each step, the distance guarantee is upgraded for the partitions that require stronger protection. The method fixes the structure of the encoding, and the component encodings at each step are chosen according to the given partitions. We prove the correctness of the method and illustrate it through several examples.
\end{enumerate}

The rest of the paper is organized as follows. Section~\ref{sec:prelim} recalls the necessary background on partitions, function-correcting codes, and FCPCs. Section~\ref{sec:main} introduces the generalized FCPC framework, defines the distance requirement matrix, and states the main results. Section~\ref{sec:bounds} derives combinatorial upper and lower bounds on the optimal redundancy, including lower bounds obtained from the distance requirement matrix and improved lower bounds for the binary case with two partitions. Section~\ref{sec:lp} develops the linear programming bound. Section~\ref{sec:construction} presents the multi-step construction method and demonstrates it through examples. Section~\ref{sec:conclusion} concludes the paper.

\subsection{Notation}
Let $\mathbb{F}_q$ denote the finite field with $q$ elements, and let $\mathbb{F}_q^k$ denote the vector space of all $k$-tuples over $\mathbb{F}_q$. For a positive integer $n$, we write $[n] = \{1, 2, \ldots, n\}$. For a finite set $A$, we write $|A|$ for its cardinality. The \emph{support} of a vector $u = (u_1, \ldots, u_k) \in \mathbb{F}_q^k$ is the set of coordinates at which $u$ is nonzero, denoted by $\mathrm{supp}(u) = \{ i \in [k] : u_i \neq 0 \}$. The Hamming weight of a vector $u \in \mathbb{F}_q^k$, denoted by $\mathrm{wt}(u)$, is the number of nonzero coordinates of $u$, so $\mathrm{wt}(u) = |\mathrm{supp}(u)|$. The Hamming distance between two vectors $u, v \in \mathbb{F}_q^k$ is denoted by $d(u, v)$ and equals the number of coordinates in which $u$ and $v$ differ. A vector $v \in \mathbb{F}_q^k$ is called a \emph{neighbor} of $u \in \mathbb{F}_q^k$ if $d(u, v) = 1$. An encoding $\mathcal{C} : \mathbb{F}_q^k \to \mathbb{F}_q^n$ is called \emph{systematic} if for every $u \in \mathbb{F}_q^k$, the codeword $\mathcal{C}(u)$ has $u$ as a prefix, i.e., $\mathcal{C}(u) = (u, p(u))$ where $p(u) \in \mathbb{F}_q^{n-k}$ is the redundancy (or parity) vector. The integer $r = n - k$ is the \emph{redundancy} of the encoding. The image of the encoding is the code $C = \mathrm{Im}(\mathcal{C}) = \{ \mathcal{C}(u) : u \in \mathbb{F}_q^k \}$. We write $\mathbb{N}_0$ for the set of nonnegative integers, and $[x]^+ \triangleq \max\{x, 0\}$ for a real number $x$. 


\section{Preliminaries}\label{sec:prelim}

In this section, we recall the basic definitions and results from the literature that are used in this paper.

\subsection{Partitions}

We recall the following standard definitions on partitions from~\cite{Aumann1976, Quint2014}.

\begin{definition}[Partition]
A partition $\mathcal{P}$ of a finite set $S$ is a collection of pairwise disjoint nonempty subsets of $S$ whose union is $S$. The elements of $\mathcal{P}$ are called its \emph{blocks}.
\end{definition}

\begin{definition}[Refinement]
Let $\mathcal{P}$ and $\mathcal{Q}$ be two partitions of a set $S$. The partition $\mathcal{Q}$ is called a \emph{refinement} of $\mathcal{P}$ if every block of $\mathcal{Q}$ is contained in some block of $\mathcal{P}$. In this case, we say that $\mathcal{Q}$ is \emph{finer} than $\mathcal{P}$, or equivalently, $\mathcal{P}$ is \emph{coarser} than $\mathcal{Q}$.
\end{definition}

\begin{definition}[Join]
Given two partitions $\mathcal{P}$ and $\mathcal{Q}$ of a set $S$, their \emph{join}, denoted by $\mathcal{P} \vee \mathcal{Q}$, is the coarsest partition that is a refinement of both $\mathcal{P}$ and $\mathcal{Q}$. Each block of the join is the intersection of a block from $\mathcal{P}$ and a block from $\mathcal{Q}$. More generally, for partitions $\mathcal{P}_1, \mathcal{P}_2, \ldots, \mathcal{P}_H$ of $S$, their join $\mathcal{P}_1 \vee \mathcal{P}_2 \vee \cdots \vee \mathcal{P}_H$ is the coarsest partition that refines all of $\mathcal{P}_1, \ldots, \mathcal{P}_H$.
\end{definition}

We note that the \emph{finest partition} of $S$ is $\big\{\{a\} : a \in S\big\}$, in which every block is a singleton.

For a function $f : \mathbb{F}_q^k \to S$, the \emph{domain partition induced by $f$} is the partition $\mathcal{P}_f = \{f^{-1}(s) : s \in \mathrm{Im}(f)\}$ of $\mathbb{F}_q^k$ into the preimage sets of $f$.

\subsection{Function-Correcting Codes}

Function-correcting codes were introduced in \cite{LBZY2023} to protect the value of a function against errors while using less redundancy than classical error-correcting codes.

\begin{definition}[Function-correcting code {\cite{LBZY2023}}]\label{def:fcc}
Let $f : \mathbb{F}_q^k \to S$ be a function and $t$ be a positive integer. A systematic encoding $\mathcal{C} : \mathbb{F}_q^k \to \mathbb{F}_q^{k+r}$ is called a $t$-error \emph{function-correcting code} for $f$, denoted $(f, t)$-FCC, if for all $u, v \in \mathbb{F}_q^k$ with $f(u) \neq f(v)$,
\[
d\big(\mathcal{C}(u), \mathcal{C}(v)\big) \geq 2t + 1.
\]
The optimal redundancy of an $(f, t)$-FCC, denoted $r_f(k, t)$, is the smallest integer $r$ for which such an encoding exists.
\end{definition}

The work in \cite{LBZY2023} established the equivalence between FCCs and irregular-distance codes or $\mathcal{D}$-codes, which are codes where the minimum distance requirement between pairs of codewords can vary.

\begin{definition}[$\mathcal{D}$-code {\cite{LBZY2023}}]\label{def:dcode}
Let $\mathcal{D} \in \mathbb{N}_0^{M \times M}$. A collection of $M$ vectors $p_1, p_2, \ldots, p_M \in \mathbb{F}_q^r$, not necessarily distinct, is called a \emph{$\mathcal{D}$-code} if
\[
d(p_i, p_j) \geq \mathcal{D}_{i,j}, \quad \text{for all } i, j \in [M].
\]
The minimum length of a $\mathcal{D}$-code is denoted by $N(\mathcal{D})$.
\end{definition}

The optimal redundancy of an $(f, t)$-FCC can be characterized using $\mathcal{D}$-codes. Define the \emph{distance requirement matrix} $\mathcal{D}_f(t, u_1, \ldots, u_{q^k}) \in \mathbb{N}_0^{q^k \times q^k}$ for an $(f, t)$-FCC as the matrix with entries
\[
[\mathcal{D}_f(t, u_1, \ldots, u_{q^k})]_{ij} = \begin{cases}
[2t + 1 - d(u_i, u_j)]^+, & \text{if } f(u_i) \neq f(u_j), \\
0, & \text{otherwise},
\end{cases}
\]
for some fixed ordering $u_1, u_2, \ldots, u_{q^k}$ of $\mathbb{F}_q^k$. Then the optimal redundancy satisfies $r_f(k, t) = N(\mathcal{D}_f(t, u_1, \ldots, u_{q^k}))$ \cite{LBZY2023}.

\subsection{Function-Correcting Partition Codes}

FCPCs were introduced in \cite{RRHH20261} as a generalization of FCCs, defined directly on partitions of the message space rather than on specific functions.

\begin{definition}[Function-correcting partition code {\cite{RRHH20261}}]\label{def:fcpc}
Let $\mathcal{P} = \{P^1, P^2, \ldots, P^E\}$, where $E = |\mathcal{P}|$, be a partition of $\mathbb{F}_q^k$ and $t$ be a positive integer. A systematic encoding $\mathcal{C} : \mathbb{F}_q^k \to \mathbb{F}_q^{k+r}$ is called a $(\mathcal{P}, t)$-\emph{encoding} (or $t$-error FCPC for $\mathcal{P}$) if for all $u \in P^i$ and $v \in P^j$ with $i \neq j$,
\[
d\big(\mathcal{C}(u), \mathcal{C}(v)\big) \geq 2t + 1.
\]
\end{definition}

The optimal redundancy of a $(\mathcal{P}, t)$-encoding, denoted $r_\mathcal{P}(k, t)$, is the smallest integer $r$ for which such an encoding exists. When expressed in terms of the distance requirement $d = 2t+1$, we write $r_\mathcal{P}(k : d)$ for the optimal redundancy. That is, $r_\mathcal{P}(k, t) = r_\mathcal{P}(k : 2t+1)$.

As observed in \cite{RRHH20261}, any $(f, t)$-FCC is exactly a $(\mathcal{P}_f, t)$-encoding with respect to the domain partition $\mathcal{P}_f$ induced by $f$, making FCPCs a natural generalization of FCCs. When multiple functions $f_1, f_2, \ldots, f_L$ need to be protected with a common error-correction level $t$, the work in \cite{RRHH20261} showed that a single $(\mathcal{P}_{f_1} \vee \mathcal{P}_{f_2} \vee \cdots \vee \mathcal{P}_{f_L},\, t)$-encoding simultaneously protects all functions, potentially with smaller redundancy than constructing separate FCCs for each function.

\subsection{Function-Correcting Codes with Data Protection}

FCCs with data protection were introduced in \cite{RRHH20251} to provide different levels of error protection to the data and the function value simultaneously.

\begin{definition}[FCC with data protection {\cite{RRHH20251}}]\label{def:fcc_dp}
Let $f : \mathbb{F}_q^k \to \mathrm{Im}(f)$ be a function. A systematic encoding $\mathcal{C} : \mathbb{F}_q^k \to \mathbb{F}_q^{k+r}$ is called an $(f:\, d_d, d_f)$-FCC if the following two conditions hold:
\begin{enumerate}
    \item For all $u, v \in \mathbb{F}_q^k$ with $u \neq v$,
    \[
    d\big(\mathcal{C}(u), \mathcal{C}(v)\big) \geq d_d.
    \]
    \item For all $u, v \in \mathbb{F}_q^k$ with $f(u) \neq f(v)$,
    \[
    d\big(\mathcal{C}(u), \mathcal{C}(v)\big) \geq d_f.
    \]
\end{enumerate}
Here $d_d$ is the minimum distance for data protection, and $d_f (\geq d_d)$ is the minimum distance for function-value protection. The encoding corrects up to $t_d = \lfloor(d_d - 1)/2\rfloor$ errors for the data and up to $t_f = \lfloor(d_f - 1)/2\rfloor$ errors for the function value.
\end{definition}

\begin{remark}\label{rem:dp_as_fcpc}
An $(f:\, d_d, d_f)$-FCC can be viewed as a code with two partitions: $\mathcal{P}_1$, the finest partition $\big\{\{u\}: u \in \mathbb{F}_q^k\big\}$, with distance requirement $d_d$, and $\mathcal{P}_2$, the domain partition induced by $f$, with distance requirement $d_f$. This viewpoint motivates the generalization to multiple partitions with different distance requirements that we introduce in this paper.
\end{remark}

\subsection{The Linear Programming Bound for Codes}\label{subsec:lp-prelim}
We recall the linear programming bound of Delsarte \cite{D1973} for codes over $\mathbb{F}_q$. The binary case is given in \cite[Ch.~5, Ch.~17]{MS1977}, and it was later generalized to arbitrary $q$ in \cite[Ch. 5]{vL1999}. Let $C \subseteq \mathbb{F}_q^n$ be a code with $|C| = M$. The \emph{distance distribution} of $C$ is the sequence $B_0, B_1, \ldots, B_n$ given by
\[
B_i = \frac{1}{M} \big| \{ (x, y) \in C \times C : d(x, y) = i \} \big|.
\]
Thus $B_0 = 1$, each $B_i$ is nonnegative, and $\sum_{i=0}^{n} B_i = M$. The distance distribution does not require $C$ to be linear. It is distinguished from the weight distribution $A_0, A_1, \ldots, A_n$, where $A_i$ counts the codewords of Hamming weight $i$. The two coincide when $C$ is linear.
For $0 \leq j \leq n$, the \emph{Krawtchouk polynomial} of degree $j$ is
\begin{equation}
K_j(i) = \sum_{l=0}^{j} (-1)^l (q-1)^{j-l} \binom{i}{l} \binom{n-i}{j-l}, \quad 0 \leq i \leq n.
\label{eq:krawtchouk}
\end{equation}

The distance distribution of any code over $\mathbb{F}_q$ satisfies the following inequalities.
\begin{lemma}
[{\cite[Lemma 5.3.3]{vL1999}}]\label{thm:macwilliams}
Let $C \subseteq \mathbb{F}_q^n$ be a code with distance distribution $B_0, B_1, \ldots, B_n$. Then
\[
\sum_{i=0}^{n} B_i K_j(i) \geq 0, \quad \text{for all } j \in \{0, 1, \ldots, n\}.
\]  
\end{lemma}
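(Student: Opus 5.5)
We prove the classical MacWilliams--Delsarte inequalities by expressing each sum $\sum_i B_i K_j(i)$ as a nonnegative quantity built from characters of $\mathbb{F}_q^n$. Fix a nontrivial additive character $\chi$ of $\mathbb{F}_q$, and for $z \in \mathbb{F}_q^n$ put $\chi_z(x) = \chi(\langle z, x\rangle)$.

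The first step is the standard character-sum identity. For any fixed $w \in \mathbb{F}_q^n$ with $\mathrm{wt}(w) = i$,
\[
\sum_{z \in \mathbb{F}_q^n,\ \mathrm{wt}(z) = j} \chi_z(w) = K_j(i).
\]
To verify it, factor the sum coordinatewise. Choose the support $S$ of $z$, with $|S| = j$, and let $l = |S \cap \mathrm{supp}(w)|$. On each of the $j-l$ coordinates of $S$ outside $\mathrm{supp}(w)$, the sum over nonzero $z_m$ is $q-1$. On each of the $l$ coordinates inside $\mathrm{supp}(w)$, the sum $\sum_{a \neq 0} \chi(a w_m)$ equals $-1$, by orthogonality of the nontrivial character. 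The number of such supports $S$ is $\binom{i}{l}\binom{n-i}{j-l}$. Summing over $l$ gives exactly \eqref{eq:krawtchouk}.

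The second step applies this identity to $w = x - y$ and averages over the code:
\[
\sum_{i=0}^{n} B_i K_j(i) = \frac{1}{M} \sum_{(x,y) \in C \times C} \ \sum_{\mathrm{wt}(z)=j} \chi_z(x-y).
\]
Since $\chi_z(x-y) = \chi_z(x)\overline{\chi_z(y)}$, we can interchange the sums to get
\[
\sum_{i=0}^{n} B_i K_j(i) = \frac{1}{M} \sum_{\mathrm{wt}(z)=j} \Big| \sum_{x \in C} \chi_z(x) \Big|^2 \ \geq\ 0.
\]
This is the claimed inequality.

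There is no real obstacle in this argument. The only point needing care is the coordinatewise factorization in the first step, specifically the evaluation $\sum_{a \in \mathbb{F}_q^*} \chi(ab) = -1$ for $b \neq 0$. This follows because $a \mapsto ab$ permutes $\mathbb{F}_q$ and $\sum_{c \in \mathbb{F}_q} \chi(c) = 0$. Nothing in the proof uses linearity of $C$, so the inequality holds for arbitrary codes.
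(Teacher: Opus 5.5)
Your proof is correct and is essentially the paper's argument: the paper obtains this lemma as the case $E=1$ of Lemma~\ref{lem:macwilliams-block}, where $M\sum_i B_i K_j(i)$ is written as a sum of squared moduli of character sums via the identity of Lemma~\ref{lem:char-sum}, exactly as you do. The only difference is cosmetic. The paper relabels the alphabet as $\mathbb{Z}_q$ through a distance-preserving bijection, uses the characters $\omega^{u\cdot x}$, and cites \cite{vL1999} for the character-sum identity, whereas you work directly with a nontrivial additive character of $\mathbb{F}_q$ and prove that identity yourself.
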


The result is proved in \cite{vL1999} for codes over the ring $\mathbb{Z}/q\mathbb{Z}$, and since the inequalities depend on $C$ only through its Hamming distances, they hold over any alphabet of size $q$.
The inequalities in Lemma~\ref{thm:macwilliams} are linear in the $B_i$. Since the size of the code is $\sum_{i=0}^{n} B_i$, an upper bound on $|C|$ is obtained by maximizing this sum over all nonnegative sequences satisfying the inequalities together with the distance constraints imposed on $C$. For a code of minimum distance $d$, the constraints are $B_i = 0$ for $1 \leq i \leq d-1$, and the resulting optimal value is denoted by $A_{\mathrm{LP}}(n, d)$. Writing $A(n, d)$ for the largest size of a code over $\mathbb{F}_q$ of length $n$ and minimum distance $d$, we have $A(n, d) \leq A_{\mathrm{LP}}(n, d)$. 

\begin{theorem}[{\cite[Th.~5.3.4]{vL1999}}]\label{thm:lp-ecc}
Let $n$ and $d$ be positive integers and let $q \geq 2$. Then
\[
A(n, d) \;\leq\; \max \Big\{ \sum_{i=0}^{n} B_i \;:\; B_0 = 1,\; B_i = 0 \text{ for } 1 \leq i \leq d-1,
\]
\[
B_i \geq 0 \text{ for } 0 \leq i \leq n,\; \sum_{i=0}^{n} B_i K_j(i) \geq 0 \text{ for } j \in \{0, 1, \ldots, n\} \Big\}.
\]
\end{theorem}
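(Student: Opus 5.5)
The bound follows from Lemma~\ref{thm:macwilliams}: I will check that the distance distribution of any code meeting the hypotheses is a feasible point of the stated linear program, and that the objective at that point equals the code size. Fix an optimal code $C \subseteq \mathbb{F}_q^n$ with $|C| = M = A(n,d)$ and minimum distance $d$. Let $B_0, \ldots, B_n$ be its distance distribution as defined in Section~\ref{subsec:lp-prelim}. The plan is to verify each constraint of the LP for this vector $(B_i)$ and then compare the objective with $M$.

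First, the normalisation and nonnegativity. For each $x \in C$ the pair $(x,x)$ is the only pair with first coordinate $x$ at distance $0$, since the codewords are distinct elements of $C$. Hence $|\{(x,y): d(x,y)=0\}| = M$ and $B_0 = 1$. Each $B_i$ is a count divided by $M>0$, so $B_i \geq 0$.

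Second, the distance constraints. If $1 \le i \le d-1$, there is no pair of distinct codewords at distance $i$, because the minimum distance is $d$. Pairs with $x=y$ have distance $0 \ne i$, so $B_i = 0$.

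Third, the MacWilliams-type constraints $\sum_i B_i K_j(i) \ge 0$ for every $j$ are exactly Lemma~\ref{thm:macwilliams} applied to $C$. This is the only nontrivial ingredient, and it is assumed from earlier in the paper. If one wanted to prove it, one would write
\[
\sum_i B_i K_j(i) = \frac{1}{M}\sum_{\mathrm{wt}(z)=j}\Big|\sum_{x\in C}\chi(\langle x,z\rangle)\Big|^2 \ge 0,
\]
using the character-sum expression $K_j(\mathrm{wt}(x-y)) = \sum_{\mathrm{wt}(z)=j}\chi(\langle x-y,z\rangle)$ for a nontrivial additive character $\chi$. This identity is the main (and only real) obstacle; it is already supplied by the cited lemma.

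Finally, every pair in $C\times C$ has some distance in $\{0,\ldots,n\}$, so $\sum_i B_i = M^2/M = M$. Thus $(B_i)$ is feasible and its objective value is $A(n,d)$. The LP maximum is at least the value at any feasible point, which gives $A(n,d)$ bounded above by the stated maximum.
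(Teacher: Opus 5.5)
Your proposal is correct and is essentially the paper's argument. The paper gives no separate proof; the paragraph before the theorem (and the $E=1$ case of Lemma~\ref{lem:macwilliams-block}) justify the bound exactly as you do, by noting that the distance distribution of any code of minimum distance $d$ is a feasible point whose objective value $\sum_i B_i$ equals $|C|$. Your character-sum sketch with a nontrivial additive character of $\mathbb{F}_q$ is also valid, and it bypasses the paper's transfer to $\mathbb{Z}_q$ via the distance-preserving bijection $\sigma$.
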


The bounds of Section~\ref{sec:lp} fix the message length and constrain the block length, so we restate $A_{\mathrm{LP}}(n,d)$ as a bound on length. A code encoding $q^k$ messages injectively with length $n$ and minimum distance $d$ has $q^k$ codewords, so its existence forces $A_{\mathrm{LP}}(n,d) \geq q^k$. Every admissible length therefore lies in $\{\, n' : A_{\mathrm{LP}}(n',d) \geq q^k \,\}$, and the least element of that set is a lower bound on the length. Therefore, we have 
\begin{equation}
\label{eq:nlp-ecc}
n \;\geq\; \min \{\, n' \;:\; A_{\mathrm{LP}}(n',d) \geq q^k \,\}.
\end{equation}
 
In Section~\ref{sec:lp} we extend this approach to generalized FCPCs. The distance constraints there depend on the partition blocks of the messages rather than on a single minimum distance, so the distance distribution is refined according to the partitions in which a pair of messages lies in different blocks. 
The bound on the length is then read as in \eqref{eq:nlp-ecc}, with $A_{\mathrm{LP}}(n', d)$ replaced by the optimal value of the corresponding program.


\section{Generalized Function-Correcting Partition Codes}\label{sec:main}

In this section, we introduce generalized FCPCs that protect multiple partitions of the message space against different numbers of errors. We define the distance requirement matrix for these codes and use it to characterize the optimal redundancy. We then state the main results of the paper.

\subsection{The Framework}\label{subsec:framework}

\begin{definition}[Generalized FCPC]\label{def:gfcpc}
Let $\mathcal{P}_1, \ldots, \mathcal{P}_H$ be partitions of $\mathbb{F}_q^k$, and let $d_1, d_2, \ldots, d_H$ be positive integers. A systematic encoding $\mathcal{C} : \mathbb{F}_q^k \to \mathbb{F}_q^{k+r}$ is called a $(\mathcal{P}_h : d_h;\, h \in [H])$-FCPC if for each $h \in [H]$ the following holds: if $\mathcal{P}_h = \{P_h^1, P_h^2, \ldots, P_h^{E_h}\}$, where $E_h = |\mathcal{P}_h|$, and $u \in P_h^i$, $v \in P_h^j$ with $i \neq j$, then
\[
d\big(\mathcal{C}(u),\, \mathcal{C}(v)\big) \geq d_h.
\]
When $d_h = 2t_h + 1$ for all $h \in [H]$, such an encoding is also called a $(\mathcal{P}_h, t_h;\, h \in [H])$-FCPC.
\end{definition}

The optimal redundancy of a $(\mathcal{P}_h : d_h;\, h \in [H])$-FCPC is the smallest integer $r$ for which such an encoding exists. We denote it by
$r_{\mathcal{P}_1, \ldots, \mathcal{P}_H}(k : \bm{d})$ when expressed in terms of the distance requirement vector $\bm{d}=(d_1, \ldots, d_H)$,
and alternatively by
$r_{\mathcal{P}_1, \ldots, \mathcal{P}_H}(k,\, \bm{t})$ when expressed in terms of vector $\bm{t}=(t_1, \ldots, t_H)$, which represents error-correction capabilities,
where $t_h = \lfloor(d_h - 1)/2\rfloor$. These two notations satisfy
\[
r_{\mathcal{P}_1, \ldots, \mathcal{P}_H}(k,\, (t_1, \ldots, t_H)) = r_{\mathcal{P}_1, \ldots, \mathcal{P}_H}(k : (2t_1 + 1, \ldots, 2t_H + 1)).
\]

In the special case $H = 1$, with a single partition $\mathcal{P}$ and distance requirement $d$, the vector $\bm{d}$ reduces to the scalar $d$, and we write $r_{\mathcal{P}}(k : d)$ accordingly.

Throughout the paper we assume without loss of generality that the distance requirements are ordered,
\[
d_1 \leq d_2 \leq \cdots \leq d_H.
\]
This ordering is used in the definition of the distance requirement matrix and in the statements of the bounds.

\begin{remark}\label{rem:generalization}
Definition~\ref{def:gfcpc} unifies and generalizes two existing frameworks.
\begin{enumerate}
    \item \textbf{FCPCs for multiple functions \cite{RRHH20261}.} In \cite{RRHH20261}, multiple functions are protected using a single FCPC defined on the join of their domain partitions with a common error-correction level. Our framework generalizes this by allowing different partitions (equivalently, different functions) to be protected against different numbers of errors.
      \item \textbf{FCCs with data protection \cite{RRHH20251}.} When $H = 2$, $\mathcal{P}_1 = \big\{\{u\} : u \in \mathbb{F}_q^k\big\}$ is the finest partition, and $\mathcal{P}_2$ is the domain partition induced by a function $f : \mathbb{F}_q^k \to S$, then a $(\mathcal{P}_h : d_h;\, h \in [2])$-FCPC is precisely an $(f: \, d_1, d_2)$-FCC as defined in \cite{RRHH20251}.
\end{enumerate}
\end{remark}


\subsection{Distance Requirement Matrix}\label{subsec:drm}

We now introduce the distance requirement matrix (DRM) for generalized FCPCs, which provides a tool for characterizing the optimal redundancy. In the subscript of the DRM, we write $\mathcal{P}$ for the collection $\mathcal{P}_1, \ldots, \mathcal{P}_H$ of all $H$ partitions.

\begin{definition}[Distance requirement matrix]\label{def:drm}
Let $\mathcal{P}_1, \mathcal{P}_2, \ldots, \mathcal{P}_H$ be partitions of
$\mathbb{F}_q^k$ with distance requirements
$d_1 \leq d_2 \leq \cdots \leq d_H$.
Let $u_1, u_2, \ldots, u_M \in \mathbb{F}_q^k$. The
\emph{distance requirement matrix} (DRM)
$\mathcal{D}_{\mathcal{P}}(d_h, h \in [H] : u_1, \ldots, u_M)$
for a $(\mathcal{P}_h : d_h;\, h \in [H])$-FCPC is an $M \times M$
matrix $\mathcal{D}$ with entries
\[
\mathcal{D}_{i,j} =
\begin{cases}
0, & \text{if } u_i \text{ and } u_j \text{ are in the same block of } \ \mathcal{P}_h \text{ for all } h \in [H], \\[2mm]
\max(d_{h'} - d(u_i, u_j),\, 0), & \text{otherwise,}
\end{cases}
\]
where $h' = \max\{h \in [H] : u_i, u_j
\text{ are in different blocks of } \mathcal{P}_h\}$,
for $i, j \in \{1, 2, \ldots, M\}$.
\end{definition}

\begin{remark}
Equivalently, $\mathcal{D}_{\mathcal{P}}(d_h, h \in [H] : u_1, \ldots, u_M)
= \max_{h \in [H]} \mathcal{D}_{\mathcal{P}_h}(d_h : u_1, \ldots, u_M)$,
where the maximum is taken entrywise and $\mathcal{D}_{\mathcal{P}_h}(d_h : u_1, \ldots, u_M)$
is the DRM for the single partition $\mathcal{P}_h$ with distance requirement $d_h$.
\end{remark}

The key idea behind the DRM is the following. For a pair $(u_i, u_j)$ that lies in different blocks of some partition $\mathcal{P}_h$, the strongest distance requirement is determined by the partition with the largest index $h'$ for which $u_i$ and $u_j$ are separated. Since $d_1 \leq d_2 \leq \cdots \leq d_H$, this largest-index partition imposes the strongest constraint. The DRM entry then specifies how much additional parity distance is needed beyond what the Hamming distance between $u_i$ and $u_j$ already provides.

\begin{example}\label{ex:drm}
Consider the space $\mathbb{F}_2^4$, and the following three partitions of $\mathbb{F}_2^4$:
\begin{align*}
\mathcal{P}_1 &= \Big\{\{0000, 1000\},\, \{0100, 0010, 1100\},\, \{0001, 1010, 1001, 0110, 0101, 0011, 1110, 1101\},\\
&\qquad \{1011, 0111\},\, \{1111\}\Big\}, \\[1mm]
\mathcal{P}_2 &= \Big\{\{0000\},\, \{1000, 0100, 0010, 0001\},\, \{1100, 1010, 1001, 0110, 0101, 0011\},\\
&\qquad \{1110, 1101, 1011, 0111\},\, \{1111\}\Big\}, \\[1mm]
\mathcal{P}_3 &= \Big\{\{0000, 1000, 0100, 1100\},\, \{0010, 1010, 0110, 1110\},\, \{0001, 1001, 0101, 1101\},\\
&\qquad \{0011, 1011, 0111, 1111\}\Big\}.
\end{align*}
Let
\[
\textcolor{blue}{d_1 = 3}, \qquad \textcolor{cyan}{d_2 = 5}, \qquad \textcolor{red}{d_3 = 7}.
\]
Order the vectors of $\mathbb{F}_2^4$ as
\[
0000,\, 1000,\, 0100,\, 0010,\, 0001,\, 1100,\, 1010,\, 1001,\, 0110,\, 0101,\, 0011,\, 1110,\, 1101,\, 1011,\, 0111,\, 1111.
\]
For this ordering, the distance requirement matrix $\mathcal{D}_{\mathcal{P}}(d_h, h \in [3] : u_1, \ldots, u_{16})$ is given by

\[
\scriptsize
\setlength{\arraycolsep}{3.5pt}
\renewcommand{\arraystretch}{1.15}
\begin{array}{c|cccccccccccccccc}
 & 0000 & 1000 & 0100 & 0010 & 0001 & 1100 & 1010 & 1001 & 0110 & 0101 & 0011 & 1110 & 1101 & 1011 & 0111 & 1111\\
\hline
0000 &
\textcolor{black}{0} &
\textcolor{cyan}{(4)_2} &
\textcolor{cyan}{(4)_2} &
\textcolor{red}{(6)_3} &
\textcolor{red}{(6)_3} &
\textcolor{cyan}{(3)_2} &
\textcolor{red}{(5)_3} &
\textcolor{red}{(5)_3} &
\textcolor{red}{(5)_3} &
\textcolor{red}{(5)_3} &
\textcolor{red}{(5)_3} &
\textcolor{red}{(4)_3} &
\textcolor{red}{(4)_3} &
\textcolor{red}{(4)_3} &
\textcolor{red}{(4)_3} &
\textcolor{red}{(3)_3}
\\
1000 &
\textcolor{cyan}{(4)_2} &
\textcolor{black}{0} &
\textcolor{blue}{(1)_1} &
\textcolor{red}{(5)_3} &
\textcolor{red}{(5)_3} &
\textcolor{cyan}{(4)_2} &
\textcolor{red}{(6)_3} &
\textcolor{red}{(6)_3} &
\textcolor{red}{(4)_3} &
\textcolor{red}{(4)_3} &
\textcolor{red}{(4)_3} &
\textcolor{red}{(5)_3} &
\textcolor{red}{(5)_3} &
\textcolor{red}{(5)_3} &
\textcolor{red}{(3)_3} &
\textcolor{red}{(4)_3}
\\
0100 &
\textcolor{cyan}{(4)_2} &
\textcolor{blue}{(1)_1} &
\textcolor{black}{0} &
\textcolor{red}{(5)_3} &
\textcolor{red}{(5)_3} &
\textcolor{cyan}{(4)_2} &
\textcolor{red}{(4)_3} &
\textcolor{red}{(4)_3} &
\textcolor{red}{(6)_3} &
\textcolor{red}{(6)_3} &
\textcolor{red}{(4)_3} &
\textcolor{red}{(5)_3} &
\textcolor{red}{(5)_3} &
\textcolor{red}{(3)_3} &
\textcolor{red}{(5)_3} &
\textcolor{red}{(4)_3}
\\
0010 &
\textcolor{red}{(6)_3} &
\textcolor{red}{(5)_3} &
\textcolor{red}{(5)_3} &
\textcolor{black}{0} &
\textcolor{red}{(5)_3} &
\textcolor{red}{(4)_3} &
\textcolor{cyan}{(4)_2} &
\textcolor{red}{(4)_3} &
\textcolor{cyan}{(4)_2} &
\textcolor{red}{(4)_3} &
\textcolor{red}{(6)_3} &
\textcolor{cyan}{(3)_2} &
\textcolor{red}{(3)_3} &
\textcolor{red}{(5)_3} &
\textcolor{red}{(5)_3} &
\textcolor{red}{(4)_3}
\\
0001 &
\textcolor{red}{(6)_3} &
\textcolor{red}{(5)_3} &
\textcolor{red}{(5)_3} &
\textcolor{red}{(5)_3} &
\textcolor{black}{0} &
\textcolor{red}{(4)_3} &
\textcolor{red}{(4)_3} &
\textcolor{cyan}{(4)_2} &
\textcolor{red}{(4)_3} &
\textcolor{cyan}{(4)_2} &
\textcolor{red}{(6)_3} &
\textcolor{red}{(3)_3} &
\textcolor{cyan}{(3)_2} &
\textcolor{red}{(5)_3} &
\textcolor{red}{(5)_3} &
\textcolor{red}{(4)_3}
\\
1100 &
\textcolor{cyan}{(3)_2} &
\textcolor{cyan}{(4)_2} &
\textcolor{cyan}{(4)_2} &
\textcolor{red}{(4)_3} &
\textcolor{red}{(4)_3} &
\textcolor{black}{0} &
\textcolor{red}{(5)_3} &
\textcolor{red}{(5)_3} &
\textcolor{red}{(5)_3} &
\textcolor{red}{(5)_3} &
\textcolor{red}{(3)_3} &
\textcolor{red}{(6)_3} &
\textcolor{red}{(6)_3} &
\textcolor{red}{(4)_3} &
\textcolor{red}{(4)_3} &
\textcolor{red}{(5)_3}
\\
1010 &
\textcolor{red}{(5)_3} &
\textcolor{red}{(6)_3} &
\textcolor{red}{(4)_3} &
\textcolor{cyan}{(4)_2} &
\textcolor{red}{(4)_3} &
\textcolor{red}{(5)_3} &
\textcolor{black}{0} &
\textcolor{red}{(5)_3} &
\textcolor{black}{\textbf{0}} &
\textcolor{red}{(3)_3} &
\textcolor{red}{(5)_3} &
\textcolor{cyan}{(4)_2} &
\textcolor{red}{(4)_3} &
\textcolor{red}{(6)_3} &
\textcolor{red}{(4)_3} &
\textcolor{red}{(5)_3}
\\
1001 &
\textcolor{red}{(5)_3} &
\textcolor{red}{(6)_3} &
\textcolor{red}{(4)_3} &
\textcolor{red}{(4)_3} &
\textcolor{cyan}{(4)_2} &
\textcolor{red}{(5)_3} &
\textcolor{red}{(5)_3} &
\textcolor{black}{0} &
\textcolor{red}{(3)_3} &
\textcolor{black}{\textbf{0}} &
\textcolor{red}{(5)_3} &
\textcolor{red}{(4)_3} &
\textcolor{cyan}{(4)_2} &
\textcolor{red}{(6)_3} &
\textcolor{red}{(4)_3} &
\textcolor{red}{(5)_3}
\\
0110 &
\textcolor{red}{(5)_3} &
\textcolor{red}{(4)_3} &
\textcolor{red}{(6)_3} &
\textcolor{cyan}{(4)_2} &
\textcolor{red}{(4)_3} &
\textcolor{red}{(5)_3} &
\textcolor{black}{\textbf{0}} &
\textcolor{red}{(3)_3} &
\textcolor{black}{0} &
\textcolor{red}{(5)_3} &
\textcolor{red}{(5)_3} &
\textcolor{cyan}{(4)_2} &
\textcolor{red}{(4)_3} &
\textcolor{red}{(4)_3} &
\textcolor{red}{(6)_3} &
\textcolor{red}{(5)_3}
\\
0101 &
\textcolor{red}{(5)_3} &
\textcolor{red}{(4)_3} &
\textcolor{red}{(6)_3} &
\textcolor{red}{(4)_3} &
\textcolor{cyan}{(4)_2} &
\textcolor{red}{(5)_3} &
\textcolor{red}{(3)_3} &
\textcolor{black}{\textbf{0}} &
\textcolor{red}{(5)_3} &
\textcolor{black}{0} &
\textcolor{red}{(5)_3} &
\textcolor{red}{(4)_3} &
\textcolor{cyan}{(4)_2} &
\textcolor{red}{(4)_3} &
\textcolor{red}{(6)_3} &
\textcolor{red}{(5)_3}
\\
0011 &
\textcolor{red}{(5)_3} &
\textcolor{red}{(4)_3} &
\textcolor{red}{(4)_3} &
\textcolor{red}{(6)_3} &
\textcolor{red}{(6)_3} &
\textcolor{red}{(3)_3} &
\textcolor{red}{(5)_3} &
\textcolor{red}{(5)_3} &
\textcolor{red}{(5)_3} &
\textcolor{red}{(5)_3} &
\textcolor{black}{0} &
\textcolor{red}{(4)_3} &
\textcolor{red}{(4)_3} &
\textcolor{cyan}{(4)_2} &
\textcolor{cyan}{(4)_2} &
\textcolor{cyan}{(3)_2}
\\
1110 &
\textcolor{red}{(4)_3} &
\textcolor{red}{(5)_3} &
\textcolor{red}{(5)_3} &
\textcolor{cyan}{(3)_2} &
\textcolor{red}{(3)_3} &
\textcolor{red}{(6)_3} &
\textcolor{cyan}{(4)_2} &
\textcolor{red}{(4)_3} &
\textcolor{cyan}{(4)_2} &
\textcolor{red}{(4)_3} &
\textcolor{red}{(4)_3} &
\textcolor{black}{0} &
\textcolor{red}{(5)_3} &
\textcolor{red}{(5)_3} &
\textcolor{red}{(5)_3} &
\textcolor{red}{(6)_3}
\\
1101 &
\textcolor{red}{(4)_3} &
\textcolor{red}{(5)_3} &
\textcolor{red}{(5)_3} &
\textcolor{red}{(3)_3} &
\textcolor{cyan}{(3)_2} &
\textcolor{red}{(6)_3} &
\textcolor{red}{(4)_3} &
\textcolor{cyan}{(4)_2} &
\textcolor{red}{(4)_3} &
\textcolor{cyan}{(4)_2} &
\textcolor{red}{(4)_3} &
\textcolor{red}{(5)_3} &
\textcolor{black}{0} &
\textcolor{red}{(5)_3} &
\textcolor{red}{(5)_3} &
\textcolor{red}{(6)_3}
\\
1011 &
\textcolor{red}{(4)_3} &
\textcolor{red}{(5)_3} &
\textcolor{red}{(3)_3} &
\textcolor{red}{(5)_3} &
\textcolor{red}{(5)_3} &
\textcolor{red}{(4)_3} &
\textcolor{red}{(6)_3} &
\textcolor{red}{(6)_3} &
\textcolor{red}{(4)_3} &
\textcolor{red}{(4)_3} &
\textcolor{cyan}{(4)_2} &
\textcolor{red}{(5)_3} &
\textcolor{red}{(5)_3} &
\textcolor{black}{0} &
\textcolor{black}{\textbf{0}} &
\textcolor{cyan}{(4)_2}
\\
0111 &
\textcolor{red}{(4)_3} &
\textcolor{red}{(3)_3} &
\textcolor{red}{(5)_3} &
\textcolor{red}{(5)_3} &
\textcolor{red}{(5)_3} &
\textcolor{red}{(4)_3} &
\textcolor{red}{(4)_3} &
\textcolor{red}{(4)_3} &
\textcolor{red}{(6)_3} &
\textcolor{red}{(6)_3} &
\textcolor{cyan}{(4)_2} &
\textcolor{red}{(5)_3} &
\textcolor{red}{(5)_3} &
\textcolor{black}{\textbf{0}} &
\textcolor{black}{0} &
\textcolor{cyan}{(4)_2}
\\
1111 &
\textcolor{red}{(3)_3} &
\textcolor{red}{(4)_3} &
\textcolor{red}{(4)_3} &
\textcolor{red}{(4)_3} &
\textcolor{red}{(4)_3} &
\textcolor{red}{(5)_3} &
\textcolor{red}{(5)_3} &
\textcolor{red}{(5)_3} &
\textcolor{red}{(5)_3} &
\textcolor{red}{(5)_3} &
\textcolor{cyan}{(3)_2} &
\textcolor{red}{(6)_3} &
\textcolor{red}{(6)_3} &
\textcolor{cyan}{(4)_2} &
\textcolor{cyan}{(4)_2} &
\textcolor{black}{0}
\end{array}
\]
\normalsize

Here black entries correspond to pairs that lie in the same block of $\mathcal{P}_1$, $\mathcal{P}_2$, and $\mathcal{P}_3$. Among these, the off-diagonal pairs are shown in bold to distinguish them from the diagonal entries, which are zero because $u_i = u_j$.
 Entries of the form $(\cdot)_1$ (shown in \textcolor{blue}{blue}) correspond to pairs for which
\[
h' = \max\{h \in [3] : u_i, u_j \text{ lie in different blocks of } \mathcal{P}_h\} = 1.
\]
Entries of the form $(\cdot)_2$ (in \textcolor{cyan}{cyan}) correspond to $h' = 2$, and entries of the form $(\cdot)_3$ (in \textcolor{red}{red}) correspond to $h' = 3$.

\end{example}

The DRM provides an exact characterization of the optimal redundancy of a generalized FCPC, by reducing the problem to finding the shortest $\mathcal{D}$-code for the DRM.

\begin{theorem}\label{thm:drm_exact}
The optimal redundancy of a $(\mathcal{P}_h : d_h;\, h \in [H])$-FCPC satisfies
\[
r_{\mathcal{P}_1, \ldots, \mathcal{P}_H}(k : \bm{d}) = N\big(\mathcal{D}_{\mathcal{P}}(d_h, h \in [H] : u_1, \ldots, u_{q^k})\big),
\]
where $\bm{d}=(d_1, \ldots, d_H)$ and $u_1, \ldots, u_{q^k}$ is a fixed ordering of all vectors in $\mathbb{F}_q^k$.
\end{theorem}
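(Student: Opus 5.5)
The plan is to prove the two inequalities separately, using that for a systematic encoding $\mathcal{C}(u)=(u,p(u))$ the Hamming distance splits additively:
\[
d\big(\mathcal{C}(u_i),\mathcal{C}(u_j)\big)=d(u_i,u_j)+d\big(p(u_i),p(u_j)\big).
\]
This is the same reduction used for the FCC result $r_f(k,t)=N(\mathcal{D}_f(t,u_1,\ldots,u_{q^k}))$ in \cite{LBZY2023}. The only new ingredient is to check that the DRM of Definition~\ref{def:drm} encodes exactly the conjunction of the $H$ constraints of Definition~\ref{def:gfcpc}.

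The key step is a pairwise equivalence. For a fixed pair $(u_i,u_j)$, write $S_{ij}=\{h\in[H]: u_i,u_j \text{ lie in different blocks of } \mathcal{P}_h\}$. The FCPC conditions for this pair read $d(u_i,u_j)+d(p(u_i),p(u_j))\ge d_h$ for all $h\in S_{ij}$. Because $d_1\le\cdots\le d_H$, this family of inequalities is equivalent to the single inequality for $h'=\max S_{ij}$. Since Hamming distances are nonnegative, that inequality is in turn equivalent to
\[
d\big(p(u_i),p(u_j)\big)\ge \max\big(d_{h'}-d(u_i,u_j),0\big)=\mathcal{D}_{i,j}.
\]
If $S_{ij}=\emptyset$, there is no FCPC constraint and $\mathcal{D}_{i,j}=0$, which is trivially satisfied. (Equivalently, by the Remark, $\mathcal{D}_{i,j}=\max_h[\mathcal{D}_{\mathcal{P}_h}]_{i,j}$, and the constraints for the individual $h$ combine by taking the maximum.) Hence a systematic encoding with parity map $p$ is a $(\mathcal{P}_h:d_h;\,h\in[H])$-FCPC if and only if $(p(u_1),\ldots,p(u_{q^k}))$ is a $\mathcal{D}$-code for $\mathcal{D}=\mathcal{D}_{\mathcal{P}}(d_h,h\in[H]:u_1,\ldots,u_{q^k})$.

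Both directions then follow from this equivalence.

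\textbf{($\ge$)} Take an optimal FCPC with redundancy $r$. Its parity vectors form a $\mathcal{D}$-code of length $r$, so $N(\mathcal{D})\le r$.

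\textbf{($\le$)} Take a $\mathcal{D}$-code $p_1,\ldots,p_{q^k}$ of length $N(\mathcal{D})$ and define $\mathcal{C}(u_i)=(u_i,p_i)$. This is well defined because the $u_i$ are distinct and list every vector of $\mathbb{F}_q^k$, and the equivalence shows it is an FCPC of redundancy $N(\mathcal{D})$. Note that the $p_i$ need not be distinct, consistent with Definition~\ref{def:dcode}.

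There is no real obstacle here. The one point needing care is the monotonicity step, namely that only $h'$ matters. This relies on the standing ordering of the $d_h$, and without it one would instead have to use $\max_{h\in S_{ij}} d_h$. One should also check that the positive part $[\cdot]^+$ does not change the constraint, which holds because distances are nonnegative.
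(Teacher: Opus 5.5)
Your proposal is correct and follows essentially the same route as the paper: split the codeword distance as $d(u_i,u_j)+d(p_i,p_j)$, use the ordering $d_1\le\cdots\le d_H$ to reduce the constraints for a pair to the single one for $h'$, and conclude that a systematic encoding is a valid FCPC if and only if its parity vectors form a $\mathcal{D}$-code for the DRM. The only difference is that you write out the two inequalities and the well-definedness of $\mathcal{C}(u_i)=(u_i,p_i)$ explicitly, where the paper states the equivalence and concludes directly.
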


\begin{proof}
A systematic encoding $\mathcal{C} : \mathbb{F}_q^k \to \mathbb{F}_q^{k+r}$ maps each message $u_i$ to the codeword $\mathcal{C}(u_i) = (u_i, p_i)$, where $p_i \in \mathbb{F}_q^r$ is the parity vector. For any two messages $u_i$ and $u_j$, the codeword distance satisfies $d(\mathcal{C}(u_i), \mathcal{C}(u_j)) = d(u_i, u_j) + d(p_i, p_j)$.

If $u_i$ and $u_j$ are in the same block of $\mathcal{P}_h$ for all $h \in [H]$, no distance constraint is imposed between them, so the DRM entry is $0$. Otherwise, let $h' = \max\{h \in [H] : u_i, u_j \text{ are in different blocks of } \mathcal{P}_h\}$. Since $d_1 \leq \cdots \leq d_H$, the binding constraint is $d(\mathcal{C}(u_i), \mathcal{C}(u_j)) \geq d_{h'}$, which requires $d(p_i, p_j) \geq \max(d_{h'} - d(u_i, u_j), 0)$.

Therefore, the encoding $\mathcal{C}$ is a valid $(\mathcal{P}_h : d_h;\, h \in [H])$-FCPC if and only if the parity vectors $p_1, \ldots, p_{q^k}$ form a $\mathcal{D}$-code for the DRM $\mathcal{D}_{\mathcal{P}}(d_h, h \in [H] : u_1, \ldots, u_{q^k})$. Consequently, the optimal redundancy equals the minimum length of such a $\mathcal{D}$-code.
\end{proof}


\subsection{Main Results}\label{subsec:summary}

We now state the main results of the paper. Detailed proofs, examples, and the accompanying discussion appear in the sections indicated. Throughout, $\mathcal{P}_1, \ldots, \mathcal{P}_H$ are partitions of $\mathbb{F}_q^k$ with distance requirements $d_1 \leq \cdots \leq d_H$ and $\bm{d} = (d_1, \ldots, d_H)$.

\subsubsection{Combinatorial bounds}
Section~\ref{sec:bounds} derives general bounds on the optimal redundancy. For each $h \in [H]$, let $\mathcal{Q}_h = \mathcal{P}_h \vee \mathcal{P}_{h+1} \vee \cdots \vee \mathcal{P}_H$. Theorem~\ref{thm:lower} and Theorem~\ref{thm:upper} give
\[
\max_{h \in [H]}\, r_{\mathcal{Q}_h}(k : d_h)
\ \leq\ r_{\mathcal{P}_1, \ldots, \mathcal{P}_H}(k : \bm{d}) \ \leq\
\min_{\mathcal{A}} \sum_{A \in \mathcal{A}} r_{\big(\bigvee_{i \in A} \mathcal{P}_i\big)}\!\big(k :\, d_{A}\big),
\]
where the minimum is over all partitions $\mathcal{A}$ of $[H]$ and $d_A = \max_{i \in A} d_i$. Theorem~\ref{thm:drm_lower} applies Theorem~\ref{thm:drm_exact} to submatrices, giving $N\big(\mathcal{D}_{\mathcal{P}}(d_h, h \in [H] : u_1, \ldots, u_M)\big) \leq r_{\mathcal{P}_1, \ldots, \mathcal{P}_H}(k : \bm{d})$ for any $u_1, \ldots, u_M \in \mathbb{F}_q^k$. For $q = 2$ and $H = 2$ this is applied to three-vector submatrices. If $\mathcal{P}_2$ contains vectors $u, v, w$ in three distinct blocks with $d(u,v) = d(u,w) = 1$ and $d(v,w) = 2$, then Proposition~\ref{prop:binary_P2} gives $r_{\mathcal{P}_1, \mathcal{P}_2}(k : (d_1, d_2)) \geq \lceil 3d_2/2 - 2 \rceil$. If instead some block of $\mathcal{P}_2$ contains vectors in different blocks of $\mathcal{P}_1$ satisfying one of the two neighborhood conditions stated there, then Theorem~\ref{thm:binary_bound} gives $r_{\mathcal{P}_1, \mathcal{P}_2}(k : (d_1, d_2)) \geq \lceil d_2 + d_1/2 - 2 \rceil$.

\subsubsection{Linear programming bound}
Section~\ref{sec:lp} derives a linear programming lower bound on the optimal redundancy of GFCPCs over $\mathbb{F}_q$.
For $T \subseteq [H]$, write $\mathcal{P}_T = \bigvee_{j \in T} \mathcal{P}_j$, and let $\widetilde{E}(\mathcal{P}_T)$ denote the effective number of blocks of $\mathcal{P}_T$, defined in Section~\ref{subsec:lp-dist}. Theorem~\ref{thm:lp-multi} gives
\[
r_{\mathcal{P}_1, \ldots, \mathcal{P}_H}(k : \bm{d}) \ge \min\bigl\{\, n : M_{\mathrm{LP}}\bigl(n, d_1, \dots, d_H, \{\widetilde{E}(\mathcal{P}_T)\}_{\emptyset \ne T \subseteq [H]}\bigr) \ge 2^k \,\bigr\} - k,
\]
where $M_{\mathrm{LP}}\bigl(n, d_1, \dots, d_H, \{\widetilde{E}(\mathcal{P}_T)\}_{\emptyset \ne T \subseteq [H]}\bigr)$ is the optimum value of LP\ref{lp:multi}, given in Section~\ref{subsec:lp-multi}.

 The programs for a single partition and for two partitions are stated separately as LP\ref{lp:single} and LP\ref{lp:two}, with the corresponding bounds in Theorem~\ref{thm:lp-single} and Theorem~\ref{thm:lp-two}.
 The case $H = 1$ applies to any FCPC, and hence to any FCC. The case $H = 2$ with $\mathcal{P}_1$ the finest partition applies to any FCC with data protection. No linear programming bound was given for these classes before this work. Several examples are given in Section \ref{sec:lp} to illustrate this bound.

\subsubsection{Multi-step construction}
Section~\ref{sec:construction} presents an $H$-step construction method. Step~1 protects the join $\mathcal{Q}_1$ of all $H$ partitions at distance $d_1$. Step~$h$ appends parity to the image of the previous step and upgrades the distance to $d_h$ for the join $\mathcal{Q}_h$. Theorem~\ref{thm:construction} proves its correctness. 
Let $r^{\mathrm{ms}}_{\mathcal{P}_1, \ldots, \mathcal{P}_H}(k : \bm{d})$ denote the smallest total
redundancy $\sum_{h=1}^{H} r_h$ over all encodings produced by the $H$-step construction, where
$r_h$ is the redundancy added at Step~$h$.  Theorem~\ref{thm:construction} shows that every such
encoding is a valid $(\mathcal{P}_h : d_h;\, h \in [H])$-FCPC. Hence
\[
r_{\mathcal{P}_1, \ldots, \mathcal{P}_H}(k : \bm{d}) \leq r^{\mathrm{ms}}_{\mathcal{P}_1, \ldots, \mathcal{P}_H}(k : \bm{d}).
\]
Examples~\ref{ex:ms-sum} and~\ref{ex:ms-join} show that the construction can use strictly smaller redundancy than the sum of the individual FCPC redundancies and than a single FCPC for the join partition at the largest distance.



\section{Combinatorial Bounds on Optimal Redundancy}\label{sec:bounds}

In this section, we derive general lower and upper bounds on the optimal redundancy of generalized FCPCs. We then use the distance requirement matrix of Section~\ref{subsec:drm} to obtain lower bounds from submatrices. Finally, we present improved lower bounds for two partitions over the binary field under specific structural conditions.

\subsection{General Lower Bound}

\begin{theorem}\label{thm:lower}
Let $\mathcal{P}_1, \mathcal{P}_2, \ldots, \mathcal{P}_H$ be partitions of $\mathbb{F}_q^k$ with distance requirements $d_1 \leq d_2 \leq \cdots \leq d_H$. For each $h \in [H]$, define $\mathcal{Q}_h = \mathcal{P}_h \vee \mathcal{P}_{h+1} \vee \cdots \vee \mathcal{P}_H$. Then the optimal redundancy of a $(\mathcal{P}_h : d_h;\, h \in [H])$-FCPC is lower bounded by
\[
\max_{h \in [H]}\, r_{\mathcal{Q}_h}(k : d_h) \leq r_{\mathcal{P}_1, \ldots, \mathcal{P}_H}(k : \bm{d}),
\]
where $\bm{d}=(d_1, \ldots, d_H)$.
\end{theorem}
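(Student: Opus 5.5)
The plan is to show that any optimal $(\mathcal{P}_h : d_h;\, h \in [H])$-FCPC is, for each fixed $h \in [H]$, also a $(\mathcal{Q}_h : d_h)$-encoding in the sense of an FCPC for the single partition $\mathcal{Q}_h$ with distance requirement $d_h$. Once this is established, the optimal redundancy of the generalized code is at least $r_{\mathcal{Q}_h}(k : d_h)$ for every $h$, and taking the maximum over $h$ gives the claim. No use of the DRM is strictly needed, although the same argument can be phrased as an entrywise comparison of DRMs followed by Theorem~\ref{thm:drm_exact}.

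Concretely, fix $h$ and let $\mathcal{C}$ be a $(\mathcal{P}_h : d_h;\, h \in [H])$-FCPC with redundancy $r = r_{\mathcal{P}_1, \ldots, \mathcal{P}_H}(k : \bm{d})$. Take $u, v \in \mathbb{F}_q^k$ lying in different blocks of $\mathcal{Q}_h = \mathcal{P}_h \vee \cdots \vee \mathcal{P}_H$. The first step uses the description of the join: each block of $\mathcal{Q}_h$ is an intersection of one block from each of $\mathcal{P}_h, \ldots, \mathcal{P}_H$. Hence if $u$ and $v$ were in the same block of every $\mathcal{P}_j$ with $j \geq h$, they would lie in the same block of $\mathcal{Q}_h$. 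Therefore there is some $j \in \{h, \ldots, H\}$ for which $u$ and $v$ are in different blocks of $\mathcal{P}_j$.

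The second step applies Definition~\ref{def:gfcpc} for that $j$, giving
\[
d(\mathcal{C}(u), \mathcal{C}(v)) \geq d_j.
\]
The ordering $d_h \leq d_j$ then gives $d(\mathcal{C}(u), \mathcal{C}(v)) \geq d_h$. Since $\mathcal{C}$ is systematic with redundancy $r$, it is a $(\mathcal{Q}_h : d_h)$-encoding, so $r_{\mathcal{Q}_h}(k : d_h) \leq r$.

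There is no real obstacle here. The only point requiring care is the justification that separation in the join forces separation in some constituent partition $\mathcal{P}_j$ with $j \geq h$, and this follows directly from the stated characterization of join blocks as intersections. One should also note that the definition of $r_{\mathcal{Q}_h}(k:d)$ for a general (possibly even) $d$ is meant in the same sense as Definition~\ref{def:fcpc} with $2t+1$ replaced by $d$, as the paper's notation indicates.
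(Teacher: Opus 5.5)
Your proof is correct and follows the paper's argument: for each $h$, an optimal $(\mathcal{P}_h : d_h;\, h \in [H])$-FCPC is also a $(\mathcal{Q}_h : d_h)$-encoding, because separation in $\mathcal{Q}_h$ forces separation in some $\mathcal{P}_j$ with $j \geq h$, and $d_j \geq d_h$. Your justification of that key step, via the description of join blocks as intersections, is the correct one; the paper cites only that $\mathcal{Q}_h$ refines each $\mathcal{P}_j$, which by itself gives the converse implication.
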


\begin{proof}
Let $\mathcal{C} : \mathbb{F}_q^k \to \mathbb{F}_q^{k+r}$ be an optimal $(\mathcal{P}_h : d_h;\, h \in [H])$-FCPC with redundancy $r = r_{\mathcal{P}_1, \ldots, \mathcal{P}_H}(k : \bm{d})$. Fix any $h \in [H]$. We show that $\mathcal{C}$ is also a valid $(\mathcal{Q}_h : d_h)$-FCPC.

Let $u, v \in \mathbb{F}_q^k$ lie in different blocks of $\mathcal{Q}_h = \mathcal{P}_h \vee \mathcal{P}_{h+1} \vee \cdots \vee \mathcal{P}_H$. Since the join $\mathcal{Q}_h$ is a refinement of each of $\mathcal{P}_h, \mathcal{P}_{h+1}, \ldots, \mathcal{P}_H$, the vectors $u$ and $v$ must lie in different blocks of $\mathcal{P}_\ell$ for some $\ell \in \{h, h+1, \ldots, H\}$. Since $\mathcal{C}$ is a $(\mathcal{P}_h : d_h;\, h \in [H])$-FCPC, we have
\[
d\big(\mathcal{C}(u),\, \mathcal{C}(v)\big) \geq d_\ell \geq d_h,
\]
where the last inequality follows from the ordering $d_h \leq d_{h+1} \leq \cdots \leq d_H$. Therefore, $\mathcal{C}$ satisfies the distance requirement of a $(\mathcal{Q}_h : d_h)$-FCPC, which gives
\[
r_{\mathcal{Q}_h}(k : d_h) \leq r.
\]
Since this holds for every $h \in [H]$, taking the maximum over all $h$ yields the desired bound.
\end{proof}

\begin{remark}\label{rem:lower_special}
For $h = H$, we have $\mathcal{Q}_H = \mathcal{P}_H$, and the bound gives $r_{\mathcal{P}_H}(k : d_H) \leq r_{\mathcal{P}_1, \ldots, \mathcal{P}_H}(k : \bm{d})$. For $h = 1$, we have $\mathcal{Q}_1 = \mathcal{P}_1 \vee \cdots \vee \mathcal{P}_H$, and the bound gives $r_{\mathcal{P}_1 \vee \cdots \vee \mathcal{P}_H}(k : d_1) \leq r_{\mathcal{P}_1, \ldots, \mathcal{P}_H}(k : \bm{d})$.
\end{remark}

\subsection{General Upper Bound}

\begin{theorem}\label{thm:upper}
Let $\mathcal{P}_1, \mathcal{P}_2, \ldots, \mathcal{P}_H$ be partitions of $\mathbb{F}_q^k$ with distance requirements $d_1 \leq d_2 \leq \cdots \leq d_H$, and let $\bm{d} = (d_1, \ldots, d_H)$. Then
\[
r_{\mathcal{P}_1, \ldots, \mathcal{P}_H}(k : \bm{d}) \leq \min_{\mathcal{A}} \sum_{A \in \mathcal{A}} r_{\big(\bigvee_{i \in A} \mathcal{P}_i\big)}\!\big(k :\, d_{A}\big),
\]
where the minimum is taken over all partitions $\mathcal{A}$ of the index set $[H]$, and $d_{A} = \max_{i \in A}\, d_i$.
\end{theorem}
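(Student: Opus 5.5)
Fix an arbitrary partition $\mathcal{A} = \{A_1, \ldots, A_m\}$ of the index set $[H]$; it suffices to build a $(\mathcal{P}_h : d_h;\, h \in [H])$-FCPC whose redundancy is $\sum_{\ell=1}^{m} r_{(\bigvee_{i \in A_\ell} \mathcal{P}_i)}(k : d_{A_\ell})$, since minimizing over $\mathcal{A}$ then gives the claim. For each $\ell \in [m]$, let $\mathcal{R}_\ell = \bigvee_{i \in A_\ell} \mathcal{P}_i$ and let $\mathcal{C}_\ell(u) = (u, p_\ell(u))$ be an optimal $(\mathcal{R}_\ell : d_{A_\ell})$-FCPC, with $p_\ell(u) \in \mathbb{F}_q^{r_\ell}$ and $r_\ell = r_{\mathcal{R}_\ell}(k : d_{A_\ell})$. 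The plan is to concatenate the parities and define
\[
\mathcal{C}(u) = \big(u,\, p_1(u),\, p_2(u),\, \ldots,\, p_m(u)\big),
\]
which is systematic with redundancy $\sum_\ell r_\ell$.

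To verify the FCPC property, fix $h \in [H]$ and messages $u, v$ lying in different blocks of $\mathcal{P}_h$. Let $\ell$ be the unique index with $h \in A_\ell$. Because the join $\mathcal{R}_\ell$ refines $\mathcal{P}_h$, the vectors $u$ and $v$ also lie in different blocks of $\mathcal{R}_\ell$. Hence
\[
d(u,v) + d\big(p_\ell(u), p_\ell(v)\big) = d\big(\mathcal{C}_\ell(u), \mathcal{C}_\ell(v)\big) \geq d_{A_\ell} \geq d_h .
\]
Since Hamming distance is additive over coordinates and the other parity blocks contribute nonnegative amounts, we get $d(\mathcal{C}(u), \mathcal{C}(v)) \geq d(u,v) + d(p_\ell(u), p_\ell(v)) \geq d_h$.

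There is no real obstacle here. The only points needing care are the refinement direction of the join (the join is finer, so separation in $\mathcal{P}_h$ implies separation in $\mathcal{R}_\ell$) and the observation that the systematic part $u$ is shared, so it must be counted only once. This is why concatenating parities, rather than whole codewords, yields exactly the sum of the redundancies.
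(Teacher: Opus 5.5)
Your proposal is correct and follows the paper's proof essentially verbatim. For a fixed partition $\mathcal{A}$ of $[H]$ you concatenate the parity parts of optimal FCPCs for the joins $\bigvee_{i \in A_j}\mathcal{P}_i$ at distance $d_{A_j}$, then verify the requirement for $\mathcal{P}_h$ using that the join of the group containing $h$ refines $\mathcal{P}_h$, that $d_{A_j} \geq d_h$, and that the other parity blocks only add distance, before minimizing over $\mathcal{A}$ exactly as the paper does.
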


\begin{proof}
Let $\mathcal{A} = \{A_1, A_2, \ldots, A_{\alpha}\}$ be a partition of $[H]$ with $|\mathcal{A}| = \alpha$. For each $j \in [\alpha]$, denote
\[
r_j = r_{\big(\bigvee_{i \in A_j} \mathcal{P}_i\big)}\!\big(k :\, d_{A_j}\big),
\]
and let $\mathcal{C}_j$ be an optimal FCPC for the partition $\bigvee_{i \in A_j} \mathcal{P}_i$ with distance requirement $d_{A_j}$ and redundancy $r_j$. For each $j \in [\alpha]$ and message $u \in \mathbb{F}_q^k$, let $c_j(u)$ denote the parity vector of $\mathcal{C}_j$ corresponding to $u$.

Define the encoding
\[
\mathcal{C} : \mathbb{F}_q^k \to \mathbb{F}_q^{k + \sum_{j=1}^{\alpha} r_j}
\]
as
\[
\mathcal{C}(u) = \big(u,\, c_1(u),\, c_2(u),\, \ldots,\, c_{\alpha}(u)\big).
\]

We now show that $\mathcal{C}$ is a valid $(\mathcal{P}_h : d_h;\, h \in [H])$-FCPC. Fix any $h \in [H]$, and let $u, v \in \mathbb{F}_q^k$ lie in different blocks of $\mathcal{P}_h$. Since $\mathcal{A}$ is a partition of $[H]$, there exists $j \in [\alpha]$ such that $h \in A_j$.

Since the join $\bigvee_{i \in A_j} \mathcal{P}_i$ is a refinement of $\mathcal{P}_h$ (as $h \in A_j$), the vectors $u$ and $v$ also lie in different blocks of $\bigvee_{i \in A_j} \mathcal{P}_i$. By the construction of $\mathcal{C}_j$, we have
\[
d\big((u,\, c_j(u)),\, (v,\, c_j(v))\big) \geq d_{A_j}.
\]
Therefore,
\[
d\big(\mathcal{C}(u),\, \mathcal{C}(v)\big) \geq d\big((u,\, c_j(u)),\, (v,\, c_j(v))\big) \geq d_{A_j} \geq d_h,
\]
where the last inequality follows from $h \in A_j$ and the definition $d_{A_j} = \max_{i \in A_j} d_i \geq d_h$.

Thus, $\mathcal{C}$ satisfies the distance requirement for every partition $\mathcal{P}_h$, and hence is a valid $(\mathcal{P}_h : d_h;\, h \in [H])$-FCPC with total redundancy $\sum_{j=1}^{\alpha} r_j$. Since $\mathcal{A}$ was an arbitrary partition of $[H]$, taking the minimum over all such partitions completes the proof.
\end{proof}

The following example illustrates that the minimum in the upper bound of Theorem~\ref{thm:upper} is not always achieved by one of the two extreme partitions of $[H]$ (the finest or the coarsest). In particular, it demonstrates that a non-trivial grouping of the partitions can yield a strictly tighter upper bound.

\begin{example}\label{ex:upper-grouping}
Let $k = 3$ and $q = 2$. For $1 \leq i \leq 3$, consider the coordinate projection functions
\[
f_i : \mathbb{F}_2^3 \to \mathbb{F}_2, \qquad f_i(u_1, u_2, u_3) = u_i.
\]
The domain partitions induced by these functions are
\begin{align*}
\mathcal{P}_1 &= \big\{\{000, 001, 010, 011\},\, \{100, 101, 110, 111\}\big\}, \\
\mathcal{P}_2 &= \big\{\{000, 001, 100, 101\},\, \{010, 011, 110, 111\}\big\}, \\
\mathcal{P}_3 &= \big\{\{000, 010, 100, 110\},\, \{001, 011, 101, 111\}\big\}.
\end{align*}

We set the distance requirements $d_1 = 3$, $d_2 = 3$, and $d_3 = 11$ (corresponding to error-correction capabilities $t_1 = 1$, $t_2 = 1$, and $t_3 = 5$, respectively). We compute the relevant pairwise joins:
\begin{align*}
\mathcal{P}_1 \vee \mathcal{P}_2 &= \big\{\{000, 001\},\, \{010, 011\},\, \{100, 101\},\, \{110, 111\}\big\}, \\
\mathcal{P}_1 \vee \mathcal{P}_3 &= \big\{\{000, 010\},\, \{001, 011\},\, \{100, 110\},\, \{101, 111\}\big\}, \\
\mathcal{P}_2 \vee \mathcal{P}_3 &= \big\{\{000, 100\},\, \{001, 101\},\, \{010, 110\},\, \{011, 111\}\big\}, \\
\mathcal{P}_1 \vee \mathcal{P}_2 \vee \mathcal{P}_3 &= \big\{\{000\},\, \{001\},\, \{010\},\, \{011\},\, \{100\},\, \{101\},\, \{110\},\, \{111\}\big\}.
\end{align*}

Table~\ref{tab:ex3_encodings} shows optimal encodings for each individual and join partition at the required distance. These encodings establish the redundancy values used in the evaluation of the Theorem~\ref{thm:upper} upper bound. The optimality of each redundancy value in Table~\ref{tab:ex3_encodings} was verified by an exhaustive computer search.

\begin{table*}[ht]
\centering
\caption{Optimal encodings for individual and join partitions in Example~\ref{ex:upper-grouping}.}
\label{tab:ex3_encodings}

\begin{subtable}[t]{0.25\textwidth}
\centering
\caption{$\mathcal{P}_1$, $d_1\!=\!3$, $r\!=\!2$.}
\label{tab:ex3_P1}
\begin{tabular}{c|c}
\hline
Block & Red. \\
\hline
$\{000,001,010,011\}$ & $00$ \\
$\{100,101,110,111\}$ & $11$ \\
\hline
\end{tabular}
\end{subtable}
\hfill
\begin{subtable}[t]{0.25\textwidth}
\centering
\caption{$\mathcal{P}_2$, $d_2\!=\!3$, $r\!=\!2$.}
\label{tab:ex3_P2}
\begin{tabular}{c|c}
\hline
Block & Red. \\
\hline
$\{000,001,100,101\}$ & $00$ \\
$\{010,011,110,111\}$ & $11$ \\
\hline
\end{tabular}
\end{subtable}
\hfill
\begin{subtable}[t]{0.40\textwidth}
\centering
\caption{$\mathcal{P}_3$, $d_3\!=\!11$, $r\!=\!10$.}
\label{tab:ex3_P3}
\begin{tabular}{c|c}
\hline
Block & Redundancy \\
\hline
$\{000,010,100,110\}$ & $000\;000\;000\;0$ \\
$\{001,011,101,111\}$ & $111\;111\;111\;1$ \\
\hline
\end{tabular}
\end{subtable}

\vspace{1em}

\begin{subtable}[t]{0.22\textwidth}
\centering
\caption{$\mathcal{P}_1\!\vee\!\mathcal{P}_2$, $d_2\!=\!3$, $r\!=\!3$.}
\label{tab:ex3_P1vP2}
\begin{tabular}{c|c}
\hline
Block & Redundancy \\
\hline
$\{000,001\}$ & $000$ \\
$\{010,011\}$ & $011$ \\
$\{110,111\}$ & $100$ \\
$\{100,101\}$ & $111$ \\
\hline
\end{tabular}
\end{subtable}
\hfill
\begin{subtable}[t]{0.36\textwidth}
\centering
\caption{$\mathcal{P}_2\!\vee\!\mathcal{P}_3$, $d_3\!=\!11$, $r\!=\!15$.}
\label{tab:ex3_P2vP3}
\begin{tabular}{c|c}
\hline
Block & Redundancy\\
\hline
$\{000,100\}$ & $0 0 0\; 0 0 0\; 0 0 0\; 0 0 0\; 0 0 0$ \\
$\{001,101\}$ & $0 0 0\; 0 0 1\; 1 1 1\; 1 1 1 \;1 1 1$ \\
$\{111,011\}$ & $1 1 1\; 1 1 0\; 0 0 0\; 0 0 1 \;1 1 1$ \\
$\{110,010\}$ & $1 1 1 \;1 1 1\; 1 1 1 \;1 1 0 \;0 0 0$ \\
\hline
\end{tabular}
\end{subtable}
\hfill
\begin{subtable}[t]{0.36\textwidth}
\centering
\caption{$\mathcal{P}_1\!\vee\!\mathcal{P}_3$, $d_3\!=\!11$, $r\!=\!15$.}
\label{tab:ex3_P1vP3}
\begin{tabular}{c|c}
\hline
Block & Redundancy \\
\hline
$\{000,010\}$ & $0 0 0\; 0 0 0\; 0 0 0\; 0 0 0\; 0 0 0$ \\
$\{001,011\}$ & $0 0 0\; 0 0 1\; 1 1 1 \;1 1 1\; 1 1 1$ \\
$\{111,101\}$ & $1 1 1\; 1 1 0\; 0 0 0 \;0 1 1 \;1 1 1$ \\
$\{100,110\}$ & $1 1 1 \;1 1 1 \;1 1 1 \;1 0 0\; 0 0 0$ \\
\hline
\end{tabular}
\end{subtable}

\vspace{1em}

\begin{subtable}[t]{\textwidth}
\centering
\caption{$\mathcal{P}_1\!\vee\!\mathcal{P}_2\!\vee\!\mathcal{P}_3$, $d_3\!=\!11$, $r\!=\!17$.}
\label{tab:ex3_P1vP2vP3}
\begin{tabular}{c|c}
\hline
Block & Redundancy\\
\hline
$\{000\}$ & $0 0 0\;0 0 0\; 0 0 0\;0 0 0\;000 \;00$ \\
$\{100\}$ & $0 0 0 \;000\;0 11\; 1 1 1 \;1 1 1\;11$ \\
$\{001\}$ & $0 0 1 \;1 1 1\; 1 0 0 \;0 0 0 \;111\;11$ \\
$\{010\}$ & $0 0 1\; 1 1 1\; 1 1 1\; 1 1 1\; 0 0 0\; 0 0$ \\
$\{110\}$ & $1 1 0\; 0 1 1\; 1 0 0\; 0 1 1\; 0 0 1\; 1 1$ \\
$\{101\}$ & $1 1 0 \;0 1 1\; 1 1 1\; 1 0 0\; 1 1 0\; 0 0$ \\
$\{011\}$ & $1 1 1\; 1 0 0 \;0 0 0\; 1 1 1\; 1 1 0\; 0 1$ \\
$\{111\}$ & $1 1 1\; 1 0 0\; 0 1 1\; 0 0 0\; 0 0 1\; 1 0$ \\
\hline
\end{tabular}
\end{subtable}

\end{table*}

We now evaluate the upper bound of Theorem~\ref{thm:upper} for each partition $\mathcal{A}$ of the index set $[3] = \{1, 2, 3\}$. The set $[3]$ admits five partitions, and the corresponding bound values are listed in Table~\ref{tab:ex3_theorem2}.

\begin{table}[ht]
\centering
\caption{Evaluation of the Theorem~\ref{thm:upper} upper bound for all partitions of $[3]$.}
\label{tab:ex3_theorem2}
\begin{tabular}{c|c|c}
\hline
Partition $\mathcal{A}$ of $[3]$ & Expression & Value \\
\hline
$\big\{\{1\}, \{2\}, \{3\}\big\}$ & $r_{\mathcal{P}_1}(k : d_1) + r_{\mathcal{P}_2}(k : d_2) + r_{\mathcal{P}_3}(k : d_3)$ & $2 + 2 + 10 = 14$ \\
$\big\{\{1, 2\}, \{3\}\big\}$ & $r_{\mathcal{P}_1 \vee \mathcal{P}_2}(k : d_2) + r_{\mathcal{P}_3}(k : d_3)$ & $3 + 10 = \mathbf{13}$ \\
$\big\{\{1, 3\}, \{2\}\big\}$ & $r_{\mathcal{P}_1 \vee \mathcal{P}_3}(k : d_3) + r_{\mathcal{P}_2}(k : d_2)$ & $15 + 2 = 17$ \\
$\big\{\{2, 3\}, \{1\}\big\}$ & $r_{\mathcal{P}_2 \vee \mathcal{P}_3}(k : d_3) + r_{\mathcal{P}_1}(k : d_1)$ & $15 + 2 = 17$ \\
$\big\{\{1, 2, 3\}\big\}$ & $r_{\mathcal{P}_1 \vee \mathcal{P}_2 \vee \mathcal{P}_3}(k : d_3)$ & $17$ \\
\hline
\end{tabular}
\end{table}

The minimum over all partitions of $[3]$ is $13$, achieved by the partition $\mathcal{A} = \big\{\{1, 2\}, \{3\}\big\}$. This is strictly less than the value $14$ given by the finest partition and the value $17$ given by the coarsest partition. The partition $\mathcal{A} = \big\{\{1, 2\}, \{3\}\big\}$ is effective because it groups $\mathcal{P}_1$ and $\mathcal{P}_2$ together, which share the same low distance requirement $d_1 = d_2 = 3$, so the joint redundancy $r_{\mathcal{P}_1 \vee \mathcal{P}_2}(k : 3) = 3$ is only slightly larger than the individual values of $2$. In contrast, grouping either $\mathcal{P}_1$ or $\mathcal{P}_2$ with $\mathcal{P}_3$ forces the entire group to be protected at the much larger distance $d_3 = 11$, causing a significant increase in redundancy.

By Theorem~\ref{thm:upper}, we obtain
\[
r_{\mathcal{P}_1, \mathcal{P}_2, \mathcal{P}_3}(k : (d_1, d_2, d_3)) \leq 13.
\]

Furthermore, applying the multi-step construction method from Section~\ref{sec:construction}, we construct a valid $(\mathcal{P}_h : d_h;\, h \in [3])$-FCPC as follows. In Step~1, we construct a $(\mathcal{P}_1 \vee \mathcal{P}_2 \vee \mathcal{P}_3)$-encoding with distance $d_1 = 3$ and redundancy $r_1 = 3$. Since $d_1 = d_2 = 3$, Step~2 requires no additional redundancy ($r_2 = 0$). In Step~3, we upgrade the distance for $\mathcal{P}_3$ from $d_2 = 3$ to $d_3 = 11$ by appending redundancy $r_3 = 8$. Table~\ref{tab:ex3_multistep} shows the resulting encoding.

\begin{table}[ht]
\centering
\caption{Multi-step construction for Example~\ref{ex:upper-grouping}.}
\label{tab:ex3_multistep}
\begin{tabular}{c|c|c|c}
\hline
Message $u$ & \makecell{Step~1 red. \\ $\mathcal{P}_1\!\vee\!\mathcal{P}_2\!\vee\!\mathcal{P}_3$, $d_1\!=\!3$} & \makecell{Step~2 red. \\ $\mathcal{P}_2\!\vee\!\mathcal{P}_3$, $d_2\!=\!3$} & \makecell{Step~3 red. \\ $\mathcal{P}_3$, $d_3\!=\!11$} \\
\hline
$000$ & $000$ & $-$ & $000\;000\;00$ \\
$100$ & $110$ & $-$ & $000\;000\;00$ \\
$001$ & $101$ & $-$ & $111\;111\;11$ \\
$010$ & $011$ & $-$ & $000\;000\;00$ \\
$110$ & $101$ & $-$ & $000\;000\;00$ \\
$101$ & $011$ & $-$ & $111\;111\;11$ \\
$011$ & $110$ & $-$ & $111\;111\;11$ \\
$111$ & $000$ & $-$ & $111\;111\;11$ \\
\hline
\end{tabular}
\end{table}

The multi-step construction achieves total redundancy
\[
r^{\mathrm{ms}}_{\mathcal{P}_1, \mathcal{P}_2, \mathcal{P}_3}(k :\, (3, 3, 11)) = 3 + 0 + 8 = 11.
\]

We now evaluate the lower bound from Theorem~\ref{thm:lower}. Recall that $\mathcal{Q}_h = \mathcal{P}_h \vee \mathcal{P}_{h+1} \vee \cdots \vee \mathcal{P}_3$ for $h \in [3]$. We have
\begin{align*}
r_{\mathcal{Q}_1}(k : d_1) &= r_{\mathcal{P}_1 \vee \mathcal{P}_2 \vee \mathcal{P}_3}(k : 3) = 3, \\
r_{\mathcal{Q}_2}(k : d_2) &= r_{\mathcal{P}_2 \vee \mathcal{P}_3}(k : 3) = 3, \\
r_{\mathcal{Q}_3}(k : d_3) &= r_{\mathcal{P}_3}(k : 11) = 10.
\end{align*}
Therefore, the Theorem~\ref{thm:lower} lower bound gives
\[
\max_{h \in [3]}\, r_{\mathcal{Q}_h}(k : d_h) = \max(3,\, 3,\, 10) = 10.
\]

Combining the lower bound, the multi-step construction, and the upper bounds, we obtain
\[
10 \leq r_{\mathcal{P}_1, \mathcal{P}_2, \mathcal{P}_3}(k :\, (3, 3, 11)) \leq r^{\mathrm{ms}}_{\mathcal{P}_1, \mathcal{P}_2, \mathcal{P}_3}(k :\, (3, 3, 11)) = 11 < 13 < 14 < 17,
\]
where $10$ is the lower bound of  Theorem~\ref{thm:lower}, $13$ is the tightest value of the upper bound of Theorem~\ref{thm:upper}, $14$ is the sum of individual redundancies (the finest partition), and $17$ is the redundancy of a single FCPC for the joint partition at the highest distance (the coarsest partition). The gap between the lower bound and the multi-step construction method is only $1$.
\end{example}


\subsection{Lower Bounds from the Distance Requirement Matrix}\label{subsec:drm-lower}

Theorem~\ref{thm:drm_exact} characterizes the optimal redundancy as the minimum length of a $\mathcal{D}$-code for the full DRM. Computing $N(\mathcal{D}_{\mathcal{P}})$ for the full DRM of size $q^k \times q^k$ can be prohibitively expensive. However, restricting the DRM to a suitably chosen subset of messages yields a computable lower bound on the optimal redundancy.

\begin{theorem}\label{thm:drm_lower}
For any $m$ vectors $u_1, u_2, \ldots, u_m$ in $\mathbb{F}_q^k$, the optimal redundancy of a $(\mathcal{P}_h : d_h;\, h \in [H])$-FCPC is lower bounded by
\[
N\big(\mathcal{D}_{\mathcal{P}}(d_h, h \in [H] : u_1, \ldots, u_m)\big) \leq r_{\mathcal{P}_1, \ldots, \mathcal{P}_H}(k : \bm{d}),
\]
where $\bm{d}=(d_1, \ldots, d_H)$.
Additionally, for $|\mathcal{P}_H| \geq 2$, we have $r_{\mathcal{P}_1, \ldots, \mathcal{P}_H}(k : \bm{d}) \geq d_H - 1$.
\end{theorem}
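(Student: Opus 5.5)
The first inequality follows from Theorem~\ref{thm:drm_exact} by restriction. Let $\mathcal{C}$ be an optimal $(\mathcal{P}_h : d_h;\, h \in [H])$-FCPC with redundancy $r = r_{\mathcal{P}_1, \ldots, \mathcal{P}_H}(k : \bm{d})$. For each message $u$ write $\mathcal{C}(u) = (u, p(u))$, and set $p_i = p(u_i)$ for $i \in [m]$.

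Since $\mathcal{C}$ is systematic, $d(\mathcal{C}(u_i), \mathcal{C}(u_j)) = d(u_i, u_j) + d(p_i, p_j)$. Exactly as in the proof of Theorem~\ref{thm:drm_exact}, the FCPC condition then forces
\[
d(p_i, p_j) \geq \mathcal{D}_{\mathcal{P}}(d_h, h \in [H] : u_1, \ldots, u_m)_{i,j}
\]
for all $i, j \in [m]$. The reason is that each entry of the DRM for $u_1, \ldots, u_m$ coincides with the corresponding entry of the full DRM: the entry depends only on the pair $(u_i, u_j)$. The vectors $p_1, \ldots, p_m$ need not be distinct, and repeated $u_i$ are harmless, since the definition of a $\mathcal{D}$-code permits repetitions and the diagonal (or same-block) entries are $0$. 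Hence $p_1, \ldots, p_m$ form a $\mathcal{D}$-code of length $r$, and so $N(\mathcal{D}_{\mathcal{P}}(\cdot)) \leq r$.

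For the second claim, I apply the first part with $m = 2$. Because $|\mathcal{P}_H| \geq 2$, some block $P$ of $\mathcal{P}_H$ is a proper nonempty subset of $\mathbb{F}_q^k$. Connectivity of the Hamming graph then gives neighbors $u_1 \in P$ and $u_2 \notin P$ with $d(u_1, u_2) = 1$: walk coordinate by coordinate from a vector in $P$ to a vector outside $P$, and some step must leave $P$.

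This pair lies in different blocks of $\mathcal{P}_H$, so $h' = H$ and the off-diagonal entry of the $2 \times 2$ DRM equals $\max(d_H - 1, 0) = d_H - 1$. Any $\mathcal{D}$-code for this matrix needs two vectors at distance at least $d_H - 1$, so its length is at least $d_H - 1$. Conversely, length $d_H - 1$ suffices, using the all-zero and all-one vectors. Hence $N = d_H - 1$, and the first part yields $r \geq d_H - 1$.

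I expect no real obstacle. The only points needing care are checking that the restricted DRM entries agree with the full DRM, and the existence of an adjacent pair across blocks, which the coordinate-walk argument supplies.
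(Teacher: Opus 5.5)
Your proposal is correct and matches the paper's proof essentially step for step. The paper also restricts the parity vectors of an optimal encoding to $u_1,\ldots,u_m$ to get a $\mathcal{D}$-code for the submatrix, and also uses a distance-one path to find an adjacent pair straddling two blocks of $\mathcal{P}_H$, which gives $r \geq d_H - 1$ (your extra check that $N = d_H - 1$ exactly for this $2\times 2$ matrix is true but not needed).
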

\begin{proof}
Let $\mathcal{C}$ be an optimal $(\mathcal{P}_h : d_h;\, h \in [H])$-FCPC with parity vectors $p_1, \ldots, p_{q^k}$. By Theorem~\ref{thm:drm_exact}, these parity vectors form a $\mathcal{D}$-code for the full DRM $\mathcal{D}_{\mathcal{P}}(d_h, h \in [H] : u_1, \ldots, u_{q^k})$. Restricting to the subset $\{u_1, \ldots, u_m\}$, the corresponding parity vectors $p_1, \ldots, p_m$ form a $\mathcal{D}$-code for the submatrix $\mathcal{D}_{\mathcal{P}}(d_h, h \in [H] : u_1, \ldots, u_m)$. Since $N(\mathcal{D})$ denotes the minimum length of a $\mathcal{D}$-code, the lower bound follows.

For the second statement, since $|\mathcal{P}_H| \geq 2$, there exist vectors of $\mathbb{F}_q^k$ lying in different blocks of $\mathcal{P}_H$. Any two such vectors are joined by a path in which consecutive vectors are at Hamming distance $1$. Some consecutive pair on this path therefore lies in different blocks of $\mathcal{P}_H$. Fix such a pair $u, v$, so that $d(u, v) = 1$. The DRM for these two vectors has the single off-diagonal entry $\max(d_H - 1, 0) = d_H - 1$. A $\mathcal{D}$-code for this matrix requires parity vectors at distance at least $d_H - 1$, so $r \geq d_H - 1$.
\end{proof}

\begin{remark}\label{rem:plotkin}
The Plotkin bound for $\mathcal{D}$-codes applies directly to generalized FCPCs. By
Theorem~\ref{thm:drm_exact} the optimal redundancy is the minimum length of a $\mathcal{D}$-code for
the DRM, so \cite[Lemma~13]{RRHH20251} gives
\[
r_{\mathcal{P}_1, \ldots, \mathcal{P}_H}(k : \bm{d}) \geq
\frac{2q}{M^2(q-1) - a(q-a)} \sum_{1 \leq i < j \leq M} \mathcal{D}_{i,j},
\qquad a = M \bmod q,
\]
for $\mathcal{D} = \mathcal{D}_{\mathcal{P}}(d_h, h \in [H] : u_1, \ldots, u_M)$ with $M = q^k$. By
Theorem~\ref{thm:drm_lower}, the same bound holds for the DRM of any subset of $M$ messages. For $q = 2$ the bound reduces to the form of
\cite[Lemma~11]{RRHH20251}, stated there for $\mathcal{D}$-codes over the binary field and
attributed to \cite{LBZY2023}. The examples of Section~\ref{sec:lp} evaluate it on the full DRM over
$\mathbb{F}_q$, which means $M = q^k$.
\end{remark}


\subsection{Improved Lower Bounds for the Binary Case}\label{subsec:binary}

In this subsection, we consider the binary case $q = 2$ with two partitions $\mathcal{P}_1$ and $\mathcal{P}_2$ of $\mathbb{F}_2^k$ with distance requirements $d_1 \leq d_2$. We establish improved lower bounds on the optimal redundancy $r_{\mathcal{P}_1, \mathcal{P}_2}(k : (d_1, d_2))$ under specific structural conditions on the partitions. For a partition $\mathcal{P}$ of $\mathbb{F}_2^k$, we write $\mathcal{P}(u) = \mathcal{P}(v)$ to mean that $u$ and $v$ belong to the same block of $\mathcal{P}$, and $\mathcal{P}(u) \neq \mathcal{P}(v)$ to mean that they belong to different blocks.

We begin with a bound that depends only on the partition $\mathcal{P}_2$.

\begin{proposition}\label{prop:binary_P2}
If there exist vectors $u, v, w \in \mathbb{F}_2^k$ that belong to different blocks of $\mathcal{P}_2$ and satisfy $d(u, v) = d(u, w) = 1$ and $d(v, w) = 2$, then
\[
r_{\mathcal{P}_1, \mathcal{P}_2}(k : (d_1, d_2)) \geq \left\lceil \frac{3d_2}{2} - 2 \right\rceil.
\]
\end{proposition}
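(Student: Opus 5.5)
We apply Theorem~\ref{thm:drm_lower} to the three-vector submatrix of the DRM indexed by $u, v, w$. Since $u, v, w$ lie in pairwise different blocks of $\mathcal{P}_2$, and $\mathcal{P}_2$ carries the largest index, the index $h'$ in Definition~\ref{def:drm} equals $2$ for every pair. The DRM entries are therefore $\mathcal{D}_{u,v} = \mathcal{D}_{u,w} = [d_2 - 1]^+ = d_2 - 1$ and $\mathcal{D}_{v,w} = [d_2 - 2]^+$. We may assume $d_2 \geq 2$, since otherwise $\lceil 3d_2/2 - 2\rceil \leq 0$ and there is nothing to prove. The first step is thus to reduce the claim to showing that any binary $\mathcal{D}$-code $p_u, p_v, p_w \in \mathbb{F}_2^r$ satisfying
\[
d(p_u,p_v) \geq d_2 - 1,\quad d(p_u,p_w) \geq d_2 - 1,\quad d(p_v,p_w) \geq d_2 - 2
\]
must have $r \geq \lceil 3d_2/2 - 2\rceil$.

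The key step is the standard binary triangle counting argument, which is the Plotkin bound of Remark~\ref{rem:plotkin} specialised to $M=3$, $q=2$. In each coordinate, the three bits $(p_u)_i, (p_v)_i, (p_w)_i$ contribute either $0$ (all equal) or exactly $2$ to the sum $d(p_u,p_v)+d(p_u,p_w)+d(p_v,p_w)$. Hence
\[
2r \;\geq\; d(p_u,p_v)+d(p_u,p_w)+d(p_v,p_w) \;\geq\; 2(d_2-1)+(d_2-2) = 3d_2 - 4 .
\]
This gives $r \geq 3d_2/2 - 2$, and since $r$ is an integer, $r \geq \lceil 3d_2/2-2\rceil$. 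Equivalently, one can plug $M=3$, $a=1$, $q=2$ into Remark~\ref{rem:plotkin}, which gives the factor $4/(9-1)=1/2$ times the sum of the entries.

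There is no real obstacle here. The only points needing care are confirming that the relevant $h'$ is $2$ regardless of how $u,v,w$ sit in $\mathcal{P}_1$, which holds because $h'$ is the maximum separating index, and handling the positive part when $d_2$ is small. The bound is independent of $d_1$, consistent with the statement.
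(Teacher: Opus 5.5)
Your proof is correct and is essentially the paper's argument. You restrict the DRM to $u,v,w$, getting off-diagonal entries $d_2-1,\ d_2-1,\ d_2-2$, and then use that the pairwise distances of three binary vectors of length $r$ sum to at most $2r$; your handling of $d_2=1$ via the positive part and the identification with the Plotkin bound at $M=3$, $q=2$ are correct refinements the paper does not spell out.
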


\begin{proof}
Since $u$, $v$, and $w$ belong to different blocks of $\mathcal{P}_2$, and $d_2 \geq d_1$, the DRM for these three vectors is
\[
\mathcal{D}_{\mathcal{P}}(d_1, d_2 : u, v, w) = \begin{pmatrix} 0 & d_2 - 1 & d_2 - 1 \\ d_2 - 1 & 0 & d_2 - 2 \\ d_2 - 1 & d_2 - 2 & 0 \end{pmatrix}.
\]
Let $\{p_1, p_2, p_3\}$ be a $\mathcal{D}$-code for this DRM with length $r$. For any three binary vectors of length $r$, the sum of pairwise distances satisfies $d(p_1, p_2) + d(p_1, p_3) + d(p_2, p_3) \leq 2r$. Therefore,
\[
(d_2 - 1) + (d_2 - 1) + (d_2 - 2) \leq d(p_1, p_2) + d(p_1, p_3) + d(p_2, p_3) \leq 2r,
\]
which gives $3d_2 - 4 \leq 2r$, yielding the desired bound.
\end{proof}

We now give an example of partitions satisfying the condition of Proposition~\ref{prop:binary_P2} and evaluate the resulting bound.

\begin{example}\label{ex:prop1}
Let $k = 3$ and $q = 2$. Consider the partitions
\begin{align*}
\mathcal{P}_1 &= \big\{\{000, 001, 110, 111\},\; \{010, 011, 100, 101\}\big\}, \\
\mathcal{P}_2 &= \big\{\{000, 011\},\; \{001, 010\},\; \{100, 111\},\; \{101, 110\}\big\}.
\end{align*}
Let $d_1 = 3$ and $d_2 = 5$. The vectors $\mathbf{u} = 000$, $\mathbf{v} = 100$, and $\mathbf{w} = 010$ belong to three distinct blocks of $\mathcal{P}_2$, with $d(\mathbf{u}, \mathbf{v}) = d(\mathbf{u}, \mathbf{w}) = 1$ and $d(\mathbf{v}, \mathbf{w}) = 2$. Therefore, the conditions of Proposition~\ref{prop:binary_P2} are satisfied, and we obtain
\[
r_{\mathcal{P}_1, \mathcal{P}_2}(k : (3, 5)) \geq \left\lceil \frac{3 \cdot 5}{2} - 2 \right\rceil = 6.
\]
\end{example}

The bound in Proposition~\ref{prop:binary_P2} depends only on the partition $\mathcal{P}_2$ and its distance requirement $d_2$, without involving $\mathcal{P}_1$ or $d_1$. The following theorem strengthens this by exploiting the interaction between the two partitions, resulting in a bound that involves both $d_1$ and $d_2$. However, since $d_1 \leq d_2$, the bound of Proposition~\ref{prop:binary_P2} is tighter when the conditions of both results are met.

\begin{theorem}\label{thm:binary_bound}
Let $\mathcal{P}_1$ and $\mathcal{P}_2$ be two partitions of $\mathbb{F}_2^k$ with $d_1 \leq d_2$. Suppose some block $B$ of $\mathcal{P}_2$ contains vectors $v, w$ with $\mathcal{P}_1(v) \neq \mathcal{P}_1(w)$ satisfying one of the following conditions:
\begin{enumerate}
    \item $d(v, w) = 1$ and at least one of $v, w$ has a neighbor outside $B$, or
    \item $d(v, w) = 2$ and $v, w$ have a common neighbor outside $B$.
\end{enumerate}
Then the optimal redundancy of a $(\mathcal{P}_h : d_h;\, h \in [2])$-FCPC is lower bounded by
\[
r_{\mathcal{P}_1, \mathcal{P}_2}(k : (d_1, d_2)) \geq \left\lceil d_2 + \frac{d_1}{2} - 2 \right\rceil.
\]
\end{theorem}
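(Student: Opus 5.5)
We follow the approach of Proposition~\ref{prop:binary_P2}: pick three suitable vectors, write down their $3\times 3$ DRM, apply Theorem~\ref{thm:drm_lower} to that submatrix, and use the binary triangle (perimeter) inequality. For any three binary vectors $p_1,p_2,p_3$ of length $r$, each coordinate contributes either $0$ or $2$ to $d(p_1,p_2)+d(p_1,p_3)+d(p_2,p_3)$, so this sum is at most $2r$. In each case we will show that the three DRM entries sum to at least $2d_2+d_1-4$. That gives $2r \ge 2d_2+d_1-4$, i.e.\ $r\ge \lceil d_2+d_1/2-2\rceil$.

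\emph{Case 1: $d(v,w)=1$.} Without loss of generality, $v$ has a neighbor $x\notin B$. Then $\mathcal{P}_2(x)\neq\mathcal{P}_2(v)=\mathcal{P}_2(w)$, and $d(x,v)=1$. Since $x\neq w$ (as $w\in B$) and both $x$ and $w$ are neighbors of $v$, we get $d(x,w)=2$. The vectors $x$ and $w$ also lie in different blocks of $\mathcal{P}_2$. So the DRM entries are $d_2-1$ for the pair $(x,v)$ and $d_2-2$ for $(x,w)$. For the pair $(v,w)$, the vectors share a block of $\mathcal{P}_2$ but lie in different blocks of $\mathcal{P}_1$, so $h'=1$ and the entry is $d_1-1$. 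The sum is $(d_2-1)+(d_2-2)+(d_1-1)=2d_2+d_1-4$, which is exactly what we need. We must also check the clipping at zero: when $d_1\ge 1$ the entries are already nonnegative, and taking $\max(\cdot,0)$ can only increase them, so the lower bound on the sum still holds.

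\emph{Case 2: $d(v,w)=2$, with a common neighbor $x\notin B$.} Here $d(x,v)=d(x,w)=1$, and $x$ is in a different $\mathcal{P}_2$-block from both $v$ and $w$. This gives two entries of $d_2-1$. The pair $(v,w)$ lies in the same $\mathcal{P}_2$-block and in different $\mathcal{P}_1$-blocks, so its entry is $\max(d_1-2,0)\ge d_1-2$. The sum is at least $2d_2+d_1-4$ again.

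There is no real obstacle. The only care needed is to confirm that $h'$ is correctly identified in each pair: pairs involving $x$ use $h'=2$ because they are separated in $\mathcal{P}_2$, and the pair $(v,w)$ uses $h'=1$ precisely because of the hypothesis $\mathcal{P}_1(v)\neq\mathcal{P}_1(w)$. One should also note that the three chosen vectors are distinct. Finally, invoking Theorem~\ref{thm:drm_lower} with $m=3$ transfers the bound on $N(\mathcal{D})$ to $r_{\mathcal{P}_1,\mathcal{P}_2}(k:(d_1,d_2))$, and integrality of $r$ gives the ceiling.
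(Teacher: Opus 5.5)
Your proposal is correct and matches the paper's proof essentially step for step. The paper uses the same three-vector DRM submatrices (entries $d_2-1,\ d_2-2,\ d_1-1$ under Condition~1 and $d_2-1,\ d_2-1,\ d_1-2$ under Condition~2), the restriction argument of Theorem~\ref{thm:drm_lower}, and the binary bound $d(p_1,p_2)+d(p_1,p_3)+d(p_2,p_3)\le 2r$.

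Your remark on clipping at zero is slightly more careful than the paper, which writes the unclipped entries directly. Note, however, that the $(x,w)$ entry $d_2-2$ is negative when $d_1=d_2=1$. So the justification is the monotonicity of $\max(\cdot,0)$, not the assumption $d_1\ge 1$.
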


\begin{proof}
We consider each condition separately.

\textbf{Condition~1.} We have $v, w \in B$ with $d(v, w) = 1$ and $\mathcal{P}_1(v) \neq \mathcal{P}_1(w)$. Write $w = v + e_i$ for some $i$. Since at least one of $v, w$ has a neighbor outside $B$, without loss of generality let $u = v + e_j$ with $j \neq i$ be a vector not in $B$. Then $d(u, v) = 1$, $d(v, w) = 1$, and $d(u, w) = 2$. Since $u \notin B$ while $v, w \in B$, the vectors $u$ and $v$ lie in different blocks of $\mathcal{P}_2$, as do $u$ and $w$. Meanwhile, $v$ and $w$ lie in the same block of $\mathcal{P}_2$ but in different blocks of $\mathcal{P}_1$. The DRM for $u, v, w$ is therefore
\[
\mathcal{D}_{\mathcal{P}}(d_1, d_2 : u, v, w) = \begin{pmatrix} 0 & d_2 - 1 & d_2 - 2 \\ d_2 - 1 & 0 & d_1 - 1 \\ d_2 - 2 & d_1 - 1 & 0 \end{pmatrix}.
\]
Let $\{p_1, p_2, p_3\}$ be a $\mathcal{D}$-code for this DRM with length $r$. Then
\[
(d_2 - 1) + (d_2 - 2) + (d_1 - 1) \leq d(p_1, p_2) + d(p_1, p_3) + d(p_2, p_3) \leq 2r,
\]
which gives $2d_2 + d_1 - 4 \leq 2r$, yielding the desired bound.

\textbf{Condition~2.} We have $v, w \in B$ with $d(v, w) = 2$ and $\mathcal{P}_1(v) \neq \mathcal{P}_1(w)$. Let $u$ be a common neighbor of $v$ and $w$ that does not belong to $B$, so $d(u, v) = d(u, w) = 1$. Then $u$ lies in a different block of $\mathcal{P}_2$ from both $v$ and $w$, while $v$ and $w$ are in the same block of $\mathcal{P}_2$ but different blocks of $\mathcal{P}_1$. The DRM for $u, v, w$ is
\[
\mathcal{D}_{\mathcal{P}}(d_1, d_2 : u, v, w) = \begin{pmatrix} 0 & d_2 - 1 & d_2 - 1 \\ d_2 - 1 & 0 & d_1 - 2 \\ d_2 - 1 & d_1 - 2 & 0 \end{pmatrix}.
\]
Let $\{p_1, p_2, p_3\}$ be a $\mathcal{D}$-code for this DRM with length $r$. Then
\[
(d_2 - 1) + (d_2 - 1) + (d_1 - 2) \leq d(p_1, p_2) + d(p_1, p_3) + d(p_2, p_3) \leq 2r,
\]
which again gives $2d_2 + d_1 - 4 \leq 2r$, yielding the desired bound.
\end{proof}

We now present an example illustrating the types of partitions for which the bound in Theorem~\ref{thm:binary_bound} applies and show that the two sufficient conditions are independent of each other.

\begin{example}\label{ex:conditions}
Table~\ref{tab:ex-conditions} exhibits three cases showing that Conditions~1 and~2 of Theorem~\ref{thm:binary_bound} are logically independent. In each case, the vectors $v, w$ that satisfy a condition are underlined in $\mathcal{P}_2$, where they lie in a common block $B$, and the corresponding neighbor $u$ outside $B$ is listed separately.

\begin{table}[h]
\centering
\caption{Cases showing the independence of Conditions~1 and~2 of Theorem~\ref{thm:binary_bound}.}
\label{tab:ex-conditions}
\small
\begin{tabular}{ccllcc}
\hline
Case & Space & \multicolumn{1}{c}{Partitions} & \multicolumn{1}{c}{$(v, w; u)$} & C1 & C2 \\
\hline
(a) & $\mathbb{F}_2^3$ &
\makecell[l]{$\mathcal{P}_1 = \big\{\{000, 001, 100, 101\},\, \{010, 011, 110, 111\}\big\}$ \\
$\mathcal{P}_2 = \big\{\{\underline{000}, 001, \underline{010}, 011\},\, \{100, 101, 110, 111\}\big\}$} &
$(000, 010;\, 100)$ & \checkmark & $\times$ \\[2ex]
\hline 
(b) & $\mathbb{F}_2^2$ &
\makecell[l]{$\mathcal{P}_1 = \big\{\{00, 01\},\, \{10, 11\}\big\}$ \\
$\mathcal{P}_2 = \big\{\{\underline{00}, \underline{11}\},\, \{01, 10\}\big\}$} &
$(00, 11;\, 01)$ & $\times$ & \checkmark \\[2ex]
\hline
(c) & $\mathbb{F}_2^4$ &
\makecell[l]{$\mathcal{P}_1 = \big\{\{0000, 0001, 0010, 0011\},\, \{0100, 0101, 0110, 0111\},$ \\
$\qquad\quad \{1000, 1001, 1010, 1011\},\, \{1100, 1101, 1110, 1111\}\big\}$ \\
$\mathcal{P}_2 = \big\{\{\underline{0000}, \underline{0001}, \underline{0100}, 0110\},\, \{0010, 0011, 0101, 0111\},$ \\
$\qquad\quad \{1000, 1001, 1100, 1110\},\, \{1010, 1011, 1101, 1111\}\big\}$} &
\makecell[l]{$(0000, 0100;\, 1000)$ \\ $(0001, 0100;\, 0101)$} & \checkmark & \checkmark \\
\hline
\end{tabular}
\end{table}

In Case~(a), each pair of vectors at distance $2$ in a block of $\mathcal{P}_2$ has both of its
common neighbors in that block, so Condition~2 fails. In Case~(b), the two vectors of each block of $\mathcal{P}_2$ are at distance
$2$, so Condition~1 fails.

In each of the three cases, for any $d_1 \leq d_2$,
\[
r_{\mathcal{P}_1, \mathcal{P}_2}(k : (d_1, d_2)) \geq \left\lceil d_2 + \frac{d_1}{2} - 2 \right\rceil.
\]
\end{example}

We conclude by noting that the bound in Theorem~\ref{thm:binary_bound} relies on the property that for any three binary vectors of length $r$, the sum of pairwise Hamming distances is at most $2r$. This property is specific to $q = 2$, as the following example shows.

\begin{example}[The bound does not extend to $q \geq 3$]\label{ex:q3}
Let $q = 3$, $k = 2$, $d_1 = 3$, and $d_2 = 5$. Consider the function $f : \mathbb{F}_3^2 \to \mathbb{F}_3$ defined by $f(a, b) = a$, so that $|\mathrm{Im}(f)| = 3$. Taking $\mathcal{P}_1$ to be the finest partition and $\mathcal{P}_2$ to be the domain partition induced by $f$, the bound from Theorem~\ref{thm:binary_bound} (if it were to hold over $\mathbb{F}_3$) would give $r \geq d_2 + \lceil d_1/2 \rceil - 2 = 5 + 2 - 2 = 5$. However, the following encoding achieves redundancy $r = 4$:

\begin{table}[ht]
\centering
\caption{$(\mathcal{P}_h : d_h;\, h \in [2])$-FCPC over $\mathbb{F}_3^2$ with $d_1 = 3$, $d_2 = 5$, and redundancy $4$.}
\label{tab:ex_q3}
\begin{tabular}{c|c|c|c}
\hline
$u$ & $f(u)$ & Parity & Codeword \\
\hline
$00$ & $0$ & $0000$ & $000000$ \\
$01$ & $0$ & $0120$ & $010120$ \\
$02$ & $0$ & $0210$ & $020210$ \\
$10$ & $1$ & $1111$ & $101111$ \\
$11$ & $1$ & $1201$ & $111201$ \\
$12$ & $1$ & $1021$ & $121021$ \\
$20$ & $2$ & $2222$ & $202222$ \\
$21$ & $2$ & $2012$ & $212012$ \\
$22$ & $2$ & $2102$ & $222102$ \\
\hline
\end{tabular}
\end{table}

This encoding achieves minimum distance $d_1 = 3$ between all distinct codewords and minimum distance $d_2 = 5$ between codewords corresponding to different function values, using only $4$ symbols of redundancy. This shows that the binary bound from Theorem~\ref{thm:binary_bound} does not extend to $q \geq 3$.
\end{example}

\begin{remark}
The three-vector argument still applies for $q \geq 3$, but it does not improve on
Theorem~\ref{thm:drm_lower}. Replacing the bound $2r$ by $3r$ in the proof of
Theorem~\ref{thm:binary_bound} gives, under either condition,
\[
r_{\mathcal{P}_1, \mathcal{P}_2}(k : (d_1, d_2)) \geq \left\lceil \frac{2d_2 + d_1 - 4}{3} \right\rceil,
\]
and the same computation in Proposition~\ref{prop:binary_P2} gives
$r_{\mathcal{P}_1, \mathcal{P}_2}(k : (d_1, d_2)) \geq d_2 - 1$. Since $d_1 \leq d_2$, we have
$2d_2 + d_1 - 4 < 3(d_2 - 1)$, so both bounds are at most $d_2 - 1$. Theorem~\ref{thm:drm_lower}
already gives $r_{\mathcal{P}_1, \mathcal{P}_2}(k : (d_1, d_2)) \geq d_2 - 1$ for every $q$. Hence, no
improved bound over $\mathbb{F}_q$ with $q \geq 3$ follows from three vectors alone. In
Example~\ref{ex:q3}, the bound $d_2 - 1 = 4$ is attained. A bound of this nature for $q \geq 3$ requires more than three vectors. The Plotkin bound for
$D$-codes of Remark~\ref{rem:plotkin} and the linear programming bound of Section~\ref{sec:lp}
apply over any $\mathbb{F}_q$. These are the bounds we use in the non-binary case.
\end{remark}


\section{Linear Programming Bound}
\label{sec:lp}

In this section, we derive a linear programming (LP) lower bound on the optimal redundancy of generalized FCPCs over $\mathbb{F}_q$. Classical linear programming bounds for error-correcting codes constrain the size of a code of given length and minimum distance \cite[Th.~5.3.4]{vL1999}.
The bounds we present for GFCPCs run in the opposite direction. The message length is fixed and
the linear program constrains the block length, which gives a bound on the redundancy. Section~\ref{subsec:lp-prelim} makes this reformulation precise in the
error-correcting case, before any partition structure is imposed.

We first develop the bound for a single partition, which corresponds to a function-correcting code. The key ingredient is a MacWilliams-type positivity property for the distance distribution restricted to pairs lying in the same block of a partition. We then extend the bound to two partitions. By simultaneously tracking the distance distributions associated with the partitions $\mathcal{P}_1$ and $\mathcal{P}_2$, the two-partition linear programming bound can be strictly stronger than the one obtained from either partition alone, and we exhibit an example where this strict improvement occurs. Finally, we give the bound for an arbitrary number of partitions.

The bound applies to generalized FCPCs. The single-partition case covers any FCPC for a single partition, and hence any FCC \cite{LBZY2023}, and the two-partition case with $\mathcal{P}_1$ the finest partition covers any FCC with data protection \cite{RRHH20251}. For this reason we present the linear programming bound separately from the combinatorial bounds of Section~\ref{sec:bounds}. 

\subsection{Distance Distributions and MacWilliams-Type Inequalities}
\label{subsec:lp-dist}

Let $C \subseteq \mathbb{Z}_q^n$ be a code with $M = |C|$. We recall from Section~\ref{subsec:lp-prelim} the Krawtchouk polynomials $K_j(i)$ and the distance distribution $(B_0, B_1, \dots, B_n)$ of $C$. The \emph{Krawtchouk transform} of a sequence $(a_0, a_1, \dots, a_n)$ of real numbers is the sequence whose $j$-th entry is $\sum_{i=0}^{n} a_i\, K_j(i)$, for $j \in \{0, 1, \dots, n\}$. The linear programming bound rests on a positivity property of the Krawtchouk transform of the distance distribution, which we establish following the argument for error-correcting codes in \cite[Ch. 5]{vL1999}.
For the two lemmas of this subsection, we take the alphabet to be the ring $\mathbb{Z}_q = \mathbb{Z}/q\mathbb{Z}$, as in \cite[Ch. 5]{vL1999}. This is without loss of generality.
Let $\sigma : \mathbb{F}_q \to \mathbb{Z}_q$ be a bijection, extended coordinatewise to $\sigma : \mathbb{F}_q^n \to \mathbb{Z}_q^n$. For $x, y \in \mathbb{F}_q^n$ and $l \in [n]$ we have $x_l = y_l$ if and only if $\sigma(x_l) = \sigma(y_l)$, so $d(\sigma(x), \sigma(y)) = d(x, y)$. By a partition of $C$ into \emph{classes} we mean any partition $C_1, \dots, C_E$ of $C$ into
non-empty subsets.
Since $\sigma$ is a bijection on $\mathbb{F}_q^n$, a code $C \subseteq \mathbb{F}_q^n$ with classes $C_1, \dots, C_E$ has image $\sigma(C)$ with $|\sigma(C)| = |C|$ and $|\sigma(C_t)| = |C_t|$ for every $t$. 
The distance distributions used in this section are defined by counting ordered pairs of codewords at a given distance subject to conditions on the classes containing them, so they are unchanged by $\sigma$. The two lemmas are statements about these distributions, and therefore proving them over $\mathbb{Z}_q$ proves them over $\mathbb{F}_q$. Working over $\mathbb{Z}_q$ lets us write $\omega^{u \cdot x}$ for a $q^{\text{th}}$ root of unity $\omega$, since the exponent is then an element of $\mathbb{Z}_q$, which is used in the proof of Lemma \ref{lem:char-sum} and Lemma \ref{lem:macwilliams-block}.
Let $\omega$ be a primitive $q$th root of unity in $\mathbb{C}$. For $u \in \mathbb{Z}_q^n$, let $\omega^{u \cdot x}$ denote the value at $x \in \mathbb{Z}_q^n$ of the additive character indexed by $u$, where $u \cdot x$ is the inner product over $\mathbb{Z}_q$. We use that $\omega^{-a} = \overline{\omega^{a}}$ for $a \in \mathbb{Z}_q$. We first recall the identity relating these characters to the Krawtchouk polynomials
of~\eqref{eq:krawtchouk}.

\begin{lemma}[{\cite[Lemma 5.3.1]{vL1999}}]
\label{lem:char-sum}
For $v \in \mathbb{Z}_q^n$ with $\mathrm{wt}(v) = i$ and any $s \in \{0, 1, \dots, n\}$,
\[
\sum_{\substack{u \in \mathbb{Z}_q^n \\ \mathrm{wt}(u) = s}} \omega^{u \cdot v} = K_s(i).
\]
\end{lemma}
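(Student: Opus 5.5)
The plan is to reduce the sum to a product of one-coordinate sums. I will then compute each factor directly and read off the coefficient of $z^s$ in a generating function.

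Fix $v \in \mathbb{Z}_q^n$ with $\mathrm{wt}(v) = i$. Since the Hamming weight and the inner product are invariant under permuting coordinates simultaneously in $u$ and $v$, we may assume $\mathrm{supp}(v) = \{1, \dots, i\}$. Introduce a formal variable $z$ and consider the generating function
\[
F(z) = \sum_{u \in \mathbb{Z}_q^n} z^{\mathrm{wt}(u)}\, \omega^{u \cdot v}.
\]
Both $\mathrm{wt}(u)$ and $u \cdot v = \sum_l u_l v_l$ are sums over coordinates, so $F(z)$ factors as a product over $l \in [n]$ of the one-coordinate sums $\sum_{a \in \mathbb{Z}_q} z^{\mathrm{wt}(a)}\, \omega^{a v_l}$. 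The desired quantity $\sum_{\mathrm{wt}(u) = s} \omega^{u \cdot v}$ is the coefficient of $z^s$ in $F(z)$.

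Next we evaluate each factor. If $v_l = 0$, the factor is $1 + (q-1)z$. If $v_l \neq 0$, the factor is $1 + z \sum_{a \neq 0} \omega^{a v_l}$.

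The main step is to show that $\sum_{a \in \mathbb{Z}_q} \omega^{a v_l} = 0$ whenever $v_l \neq 0$. Since $\omega$ is a primitive $q$th root of unity and $v_l \not\equiv 0 \pmod q$, the element $\zeta = \omega^{v_l}$ is a $q$th root of unity different from $1$. The sum is then the geometric series $(\zeta^q - 1)/(\zeta - 1) = 0$. Note that $\zeta$ need not be primitive when $q$ is composite, but $\zeta \neq 1$ is all the argument needs, so working over $\mathbb{Z}_q$ causes no difficulty. It follows that $\sum_{a \neq 0} \omega^{a v_l} = -1$, and the factor for such $l$ is $1 - z$.

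Combining the factors gives
\[
F(z) = (1 - z)^i\, \bigl(1 + (q-1)z\bigr)^{n-i}.
\]
Expanding both factors binomially, the coefficient of $z^s$ is
\[
\sum_{l} (-1)^l \binom{i}{l} (q-1)^{s-l} \binom{n-i}{s-l},
\]
where $l$ is the number of weight-contributing coordinates chosen from the support of $v$. This is exactly $K_s(i)$ as defined in \eqref{eq:krawtchouk}, which completes the argument. The only care needed is the sign and index bookkeeping, with the convention that binomial coefficients vanish outside their natural range.
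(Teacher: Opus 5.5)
Your proof is correct. The paper gives no proof of its own and only cites van Lint's Lemma 5.3.1, whose argument is essentially yours: factor the character sum over coordinates and use $\sum_{a \neq 0}\omega^{a v_l} = -1$ for $v_l \neq 0$. The only difference is bookkeeping: van Lint fixes the support of $u$ (with $l$ positions in $\mathrm{supp}(v)$ and $s-l$ outside), obtains $(-1)^l (q-1)^{s-l}$, and multiplies by the $\binom{i}{l}\binom{n-i}{s-l}$ possible supports, while your generating function $(1-z)^i\bigl(1+(q-1)z\bigr)^{n-i}$ does that count automatically.
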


Let $C_1, \dots, C_E$ be a partition of $C$ into $E$ non-empty classes. Define the \emph{same-block distance distribution} $(B_0^{\emptyset}, \dots, B_n^{\emptyset})$ by
\begin{equation}
B_i^{\emptyset} = \frac{1}{M}\,\bigl|\{(x, y) \in C \times C : d(x, y) = i,\ x \text{ and } y \text{ lie in the same class}\}\bigr|.
\label{eq:Bempty}
\end{equation}

Every ordered pair of codewords at distance $i$ either lies entirely within one class or does not, so the pairs that are split across two distinct classes are counted by
\begin{equation}
B_i - B_i^{\emptyset} = \frac{1}{M}\,\bigl|\{(x, y) \in C \times C : d(x, y) = i,\ x \text{ and } y \text{ lie in different classes}\}\bigr|.
\label{eq:Bdiff}
\end{equation}
The following lemma will be the basis of our proposed bound.
\begin{lemma}
\label{lem:macwilliams-block}
Let $C \subseteq \mathbb{Z}_q^n$ be a code with $M = |C|$, let $C_1, \dots, C_E$ be a partition of $C$ into classes, and let $(B_i^{\emptyset})_{i=0}^{n}$ be given by \eqref{eq:Bempty}. Then for each $s \in \{0, 1, \dots, n\}$,
\[
\sum_{i=0}^{n} B_i^{\emptyset}\, K_s(i) \;\ge\; 0.
\]
\end{lemma}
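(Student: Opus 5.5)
We follow the classical Delsarte argument from \cite[Ch.~5]{vL1999}, applied class by class. The key observation is that the same-block distance distribution decomposes as a sum over classes: since a pair $(x,y)$ lies in the same class exactly when both lie in some $C_t$, we have
\[
M\,B_i^{\emptyset} \;=\; \sum_{t=1}^{E} \bigl|\{(x,y)\in C_t\times C_t : d(x,y)=i\}\bigr|.
\]
So it suffices to show, for each single class $C_t$, that the Krawtchouk transform of its own (unnormalized) distance-pair count is nonnegative. Summing nonnegative quantities over $t$ and dividing by $M>0$ then gives the claim.

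For a fixed class $C_t$ and fixed $s$, we write the transform in terms of characters. By Lemma~\ref{lem:char-sum}, applied to $v=x-y$ (whose weight is $d(x,y)$), we get
\[
\sum_{i=0}^{n} \bigl|\{(x,y)\in C_t^2 : d(x,y)=i\}\bigr|\,K_s(i)
=\sum_{(x,y)\in C_t^2} K_s(\mathrm{wt}(x-y))
=\sum_{(x,y)\in C_t^2}\ \sum_{\mathrm{wt}(u)=s}\omega^{u\cdot(x-y)}.
\]
Next we interchange the sums and factor, using $\omega^{u\cdot(x-y)}=\omega^{u\cdot x}\,\overline{\omega^{u\cdot y}}$. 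This gives
\[
\sum_{\mathrm{wt}(u)=s}\Bigl|\sum_{x\in C_t}\omega^{u\cdot x}\Bigr|^2 \;\ge\; 0 .
\]

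Summing over $t\in[E]$ and dividing by $M$ yields $\sum_i B_i^{\emptyset}K_s(i)\ge 0$ for every $s\in\{0,\dots,n\}$.

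There is no real obstacle here. The only point needing care is the bookkeeping. First, the counts are over ordered pairs, including $x=y$, so that the double sum factors exactly as $|\cdot|^2$. Second, the decomposition of same-class pairs over classes must be exact, which holds because the classes are disjoint. Third, the transfer from $\mathbb{F}_q$ to $\mathbb{Z}_q$ is already justified in the text via $\sigma$, since $\sigma$ preserves Hamming distances and classes.
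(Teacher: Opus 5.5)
Your proposal is correct and is essentially the paper's own proof: both write $M B_i^{\emptyset}$ as a sum over the classes $C_t$, replace $K_s(d(x,y))$ by $\sum_{\mathrm{wt}(u)=s}\omega^{u\cdot(x-y)}$ via Lemma~\ref{lem:char-sum}, interchange the sums, and recognize each class contribution as $\sum_{\mathrm{wt}(u)=s}\bigl|\sum_{x\in C_t}\omega^{u\cdot x}\bigr|^2 \ge 0$. Your closing remark about the transfer from $\mathbb{F}_q$ is not needed for this lemma, since it is already stated over $\mathbb{Z}_q$, but it is harmless.
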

\begin{proof}
By definition, $M B_i^{\emptyset} = \sum_{t=1}^E \bigl|\{(x, y) \in C_t \times C_t : d(x, y) = i\}\bigr|$. Multiplying by $K_s(i)$ and summing over $i$ gives
\begin{align*}
M \sum_{i=0}^{n} B_i^{\emptyset}\, K_s(i) &=  \sum_{i=0}^{n} K_s(i)\sum_{t=1}^E \bigl|\{(x, y) \in C_t \times C_t : d(x, y) = i\}\bigr|\\
&= \sum_{t=1}^E \sum_{i=0}^{n} K_s(i) \bigl|\{(x, y) \in C_t \times C_t : d(x, y) = i\}\bigr|\\
&=\sum_{t=1}^E \sum_{x, y \in C_t} K_s\bigl(d(x, y)\bigr).
\end{align*}
Since $d(x, y) = \mathrm{wt}(x - y)$, Lemma~\ref{lem:char-sum} gives $K_s\bigl(d(x, y)\bigr) = \displaystyle\sum_{\substack{u \in \mathbb{Z}_q^n \\ \mathrm{wt}(u) = s}} \omega^{u \cdot (x - y)}$. Substituting and interchanging the order of summation,
\[
M \sum_{i=0}^{n} B_i^{\emptyset}\, K_s(i) = \sum_{t=1}^E \sum_{\substack{u \in \mathbb{Z}_q^n \\ \mathrm{wt}(u) = s}} \sum_{x, y \in C_t} \omega^{u \cdot x}\,\overline{\omega^{u \cdot y}} = \sum_{t=1}^E \sum_{\substack{u \in \mathbb{Z}_q^n \\ \mathrm{wt}(u) = s}} \Bigl| \sum_{x \in C_t} \omega^{u \cdot x} \Bigr|^{2},
\]
which is nonnegative.
\end{proof}

Taking $E = 1$, so that the single class is $C$ itself and $B_i^{\emptyset} = B_i$, recovers the inequalities of Lemma~\ref{thm:macwilliams}. By the identification made above, Lemma~\ref{lem:char-sum} and Lemma~\ref{lem:macwilliams-block} apply to codes over $\mathbb{F}_q$, and we use them in that form for the rest of this section.
Now let $\mathcal{P}$ be a partition of $\mathbb{F}_q^k$ with block count $E = |\mathcal{P}|$, let $\mathcal{C} : \mathbb{F}_q^k \to \mathbb{F}_q^{n}$ be a systematic $(\mathcal{P} : d)$-encoding, and let $C = \mathrm{Im}(\mathcal{C})$ be its image. The partition $\mathcal{P}$ induces a partition of $C$ into classes $C_1, \dots, C_{E}$, where $C_t$ is the set of codewords whose corresponding messages lie in the $t$-th block of $\mathcal{P}$. Since the encoding is systematic, distinct messages produce distinct codewords, so $|C_t| = m_t$, where $m_t$ is the size of the $t$-th block of $\mathcal{P}$, and $M = \sum_{t} m_t = q^k$. A codeword determines its message, so for codewords $x, y \in C$ we write $\mathcal{P}(x) = \mathcal{P}(y)$ to mean that their messages lie in the same block of $\mathcal{P}$, and $\mathcal{P}(x) \ne \mathcal{P}(y)$ otherwise.
We now establish two counting identities that relate the full and same-block distance distributions. Summing the full distance distribution over all $i$ counts each ordered pair of codewords exactly once,
\begin{equation}
\sum_{i=0}^{n} B_i = \frac{1}{M} \sum_{i=0}^{n} \bigl|\{(x, y) \in C \times C : d(x, y) = i\}\bigr| = \frac{1}{M}\,|C \times C| = \frac{M^2}{M} = M.
\label{eq:sum-B}
\end{equation}
Summing the same-block distance distribution over all $i$ counts each ordered pair of codewords whose members lie in a common class,
\begin{equation}
\sum_{i=0}^{n} B_i^{\emptyset} = \frac{1}{M} \bigl|\{(x, y) \in C \times C : x, y \text{ lie in the same class}\}\bigr| = \frac{1}{M} \sum_{t=1}^E |C_t|^2 = \frac{1}{M} \sum_{t=1}^E m_t^2,
\label{eq:sum-Bempty}
\end{equation}
where the third equality holds because the classes $C_1, \dots, C_{E}$ are disjoint and the ordered pairs with both members in $C_t$ number $|C_t|^2 = m_t^2$.
Multiplying \eqref{eq:sum-Bempty} by $M^2$ gives $M^2 \sum_{i} B_i^{\emptyset} = M \sum_{t} m_t^2$. Dividing by $\sum_{t} m_t^2$ and using $\sum_{i} B_i = M$ from \eqref{eq:sum-B} yields the \emph{partition condition}
\begin{equation}
\sum_{i=0}^{n} B_i = \widetilde{E}(\mathcal{P}) \sum_{i=0}^{n} B_i^{\emptyset}, \qquad \text{where} \ \ \widetilde{E}(\mathcal{P}) = \frac{M^2}{\sum_t m_t^2} = \frac{\bigl(\sum_t m_t\bigr)^2}{\sum_t m_t^2}.
\label{eq:partition-cond}
\end{equation}
We call $\widetilde{E}(\mathcal{P})$ the \emph{effective number of blocks} of $\mathcal{P}$. The tilde distinguishes it from the block count $E = |\mathcal{P}|$, which it need not equal. It depends only on the block-size profile of $\mathcal{P}$ and is invariant under a common scaling of the block sizes.

\begin{remark}
\label{rem:eff-vs-blocks}
By the Cauchy--Schwarz inequality, $\bigl(\sum_t m_t\bigr)^2 \le E \sum_t m_t^2$, so
\[
\widetilde{E}(\mathcal{P}) \le E,
\]
with equality if and only if all blocks of $\mathcal{P}$ have the same size. We call such a partition \emph{balanced}. The partition condition \eqref{eq:partition-cond} is therefore an equality with the effective number of blocks $\widetilde{E}(\mathcal{P})$ and not with the block count $E$. The two agree exactly in the balanced case.
\end{remark}

\subsection{Linear Programming Bound for a Single Partition}
\label{subsec:lp-single}

Let $\mathcal{P}$ be a partition of $\mathbb{F}_q^k$ with effective number of blocks $\widetilde{E}(\mathcal{P})$, and fix a length $n$ and a distance requirement $d$. Consider the following linear program in the variables $(B_i)_{i=0}^n$ and $(B_i^{\emptyset})_{i=0}^n$:

\begin{linearprogram}\label{lp:single}
\begin{equation}
M_{\mathrm{LP}}(n, d, \widetilde{E}(\mathcal{P})) = \max \ \sum_{i=0}^{n} B_i
\label{eq:lp-single-obj}
\end{equation}
subject to
\begin{align}
&B_0 = B_0^{\emptyset} = 1, \label{eq:lp1}\\
&B_i^{\emptyset} \ge 0 \ \text{ and } \ B_i - B_i^{\emptyset} \ge 0, && i = 0, \dots, n, \label{eq:lp2}\\
&B_i - B_i^{\emptyset} = 0, && 1 \le i \le d - 1, \label{eq:lp3}\\
&\sum_{i=0}^{n} B_i\, K_j(i) \ge 0, && j = 0, \dots, n, \label{eq:lp4}\\
&\sum_{i=0}^{n} B_i^{\emptyset}\, K_j(i) \ge 0, && j = 0, \dots, n, \label{eq:lp5}\\
&\sum_{i=0}^{n} B_i = \widetilde{E}(\mathcal{P}) \sum_{i=0}^{n} B_i^{\emptyset}. \label{eq:lp6}
\end{align}
\end{linearprogram}
Constraint \eqref{eq:lp3} is the \emph{distance constraint}, since codewords whose messages lie in different blocks of $\mathcal{P}$ must be at Hamming distance at least $d$. Constraints \eqref{eq:lp4} and \eqref{eq:lp5} are the Delsarte inequalities of Lemma~\ref{thm:macwilliams} for the full code and of Lemma~\ref{lem:macwilliams-block} for the same-block distribution, respectively. Constraint \eqref{eq:lp6} is the partition condition \eqref{eq:partition-cond}. If the program has no feasible solution, which occurs for small $n$, we set
$M_{\mathrm{LP}}(n, d, \widetilde{E}(\mathcal{P})) = -\infty$. The same convention applies to the
programs of Sections~\ref{subsec:lp-two} and \ref{subsec:lp-multi}.

\begin{remark}
\label{rem:ecc-reduction}
Let $\mathcal{P}$ be the finest partition of $\mathbb{F}_q^k$, whose blocks are singleton sets, so that $\widetilde{E}(\mathcal{P}) = q^k$. Distinct messages lie in distinct blocks, so the image of a $(\mathcal{P} : d)$-encoding has $B_i^{\emptyset} = 0$ for $i \ge 1$ and same-block total $\sum_{i} B_i^{\emptyset} = 1$. Substituting these values, the distance constraint \eqref{eq:lp3} gives $B_i = 0$ for $1 \le i \le d-1$, the same-block Delsarte inequality \eqref{eq:lp5} holds trivially, and the program LP\ref{lp:single} becomes the linear program of \cite[Th.~5.3.4]{vL1999} for error-correcting codes, except for the
partition condition \eqref{eq:lp6}, which must be dropped.

Dropping \eqref{eq:lp6} is what the finest partition requires. Its purpose is to tie the same-block total $\sum_i B_i^{\emptyset}$ to the size of the blocks of $\mathcal{P}$, but once the blocks are singletons this total is already fixed to $1$ by $B_i^{\emptyset} = 0$ for $i \ge 1$. The condition \eqref{eq:lp6} would then read $\sum_i B_i = q^k$, fixing the objective, whereas the linear program of \cite[Th.~5.3.4]{vL1999} maximizes $\sum_i B_i$. Removing \eqref{eq:lp6} restores the maximization.

\end{remark}

The optimal redundancy of an FCPC is lower bounded as follows.

\begin{theorem}
\label{thm:lp-single}
Let $\mathcal{P}$ be a partition of $\mathbb{F}_q^k$ with effective number of blocks $\widetilde{E}(\mathcal{P})$, and let $d$ be a distance requirement. Then
\[
r_{\mathcal{P}}(k : d) \ge \min\bigl\{\, n' : M_{\mathrm{LP}}(n', d, \widetilde{E}(\mathcal{P})) \ge q^k \,\bigr\} - k.
\]
\end{theorem}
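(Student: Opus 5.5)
The plan is to take an optimal $(\mathcal{P} : d)$-encoding and exhibit a feasible point of LP\ref{lp:single} whose objective value is exactly $q^k$.

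Let $\mathcal{C} : \mathbb{F}_q^k \to \mathbb{F}_q^{n}$ be an optimal systematic $(\mathcal{P} : d)$-encoding, with $n = k + r_{\mathcal{P}}(k:d)$, and let $C = \mathrm{Im}(\mathcal{C})$. Partition $C$ into the classes $C_1, \dots, C_E$ induced by $\mathcal{P}$, as described after Lemma~\ref{lem:macwilliams-block}. Take $(B_i)$ to be the distance distribution of $C$ and $(B_i^{\emptyset})$ the same-block distribution \eqref{eq:Bempty}. I will show that this pair is feasible for LP\ref{lp:single} at length $n$ and that its objective value is $\sum_i B_i = M = q^k$. This gives $M_{\mathrm{LP}}(n, d, \widetilde{E}(\mathcal{P})) \ge q^k$. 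Then $n$ belongs to the set $\{n' : M_{\mathrm{LP}}(n', d, \widetilde{E}(\mathcal{P})) \ge q^k\}$, so $n$ is at least the minimum of that set. Subtracting $k$ gives the claim, following the same reasoning as \eqref{eq:nlp-ecc}.

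The feasibility check runs through the constraints in order.
\begin{enumerate}
\item Constraint \eqref{eq:lp1}: because the encoding is systematic, it is injective, so the only pairs at distance $0$ are the diagonal pairs $(x,x)$. There are $M$ of these and each lies in a single class, so $B_0 = B_0^{\emptyset} = 1$.
\item Constraint \eqref{eq:lp2}: $B_i^{\emptyset} \ge 0$ holds because it is a count, and $B_i - B_i^{\emptyset} \ge 0$ holds by \eqref{eq:Bdiff}.
\item Constraint \eqref{eq:lp3}: by \eqref{eq:Bdiff}, $B_i - B_i^{\emptyset}$ counts pairs at distance $i$ whose messages lie in different blocks of $\mathcal{P}$. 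The FCPC condition requires every such pair to be at distance at least $d$, so this count is zero for $1 \le i \le d-1$.
\item Constraint \eqref{eq:lp4}: this is Lemma~\ref{thm:macwilliams}.
\item Constraint \eqref{eq:lp5}: this is Lemma~\ref{lem:macwilliams-block}, applied through the identification of $\mathbb{F}_q$ with $\mathbb{Z}_q$ made in Section~\ref{subsec:lp-dist}.
\item Constraint \eqref{eq:lp6}: this is the partition condition \eqref{eq:partition-cond}, which uses $|C_t| = m_t$, again a consequence of injectivity.
\end{enumerate}

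No individual step is hard, since every constraint has already been verified as a property of such codes. The only point needing care is the direction of the final inference. The bound rests on the optimal length $n$ lying in the set whose minimum is taken; it does not require any monotonicity of $M_{\mathrm{LP}}$ in $n'$. Feasibility also guarantees that $M_{\mathrm{LP}}$ is not $-\infty$ at this $n$, so the $-\infty$ convention for infeasible programs plays no role.
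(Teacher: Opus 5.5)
Your proposal is correct and is essentially the paper's proof: you take an optimal $(\mathcal{P}:d)$-encoding, check that the full and same-block distance distributions of its image are feasible for LP\ref{lp:single} at $n = k + r_{\mathcal{P}}(k:d)$ with objective value $q^k$, and conclude that $n$ lies in the set whose minimum is taken. Your explicit checks of \eqref{eq:lp1}--\eqref{eq:lp2} and your remark that no monotonicity in $n'$ is needed spell out what the paper leaves implicit. One small quibble: $B_0 = 1$ holds for any set $C$, and injectivity is what you actually need for $M = q^k$ and $|C_t| = m_t$.
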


\begin{proof}
Let $\mathcal{C} : \mathbb{F}_q^k \to \mathbb{F}_q^{k + r}$ be an optimal $(\mathcal{P} : d)$-encoding, so $r = r_{\mathcal{P}}(k : d)$ and $n = k + r$. Its image $C = \mathrm{Im}(\mathcal{C})$ is a code of length $n$ with $|C| = q^k$. 
The distance distributions $(B_i)_{i=0}^{n}$ and $(B_i^{\emptyset})_{i=0}^{n}$ of $C$ satisfy \eqref{eq:lp1}--\eqref{eq:lp6}. Constraint \eqref{eq:lp3} holds because $\mathcal{C}$ is a $(\mathcal{P} : d)$-encoding, constraints \eqref{eq:lp4}--\eqref{eq:lp5} hold by Lemma~\ref{thm:macwilliams} and Lemma~\ref{lem:macwilliams-block}, and \eqref{eq:lp6} holds by \eqref{eq:partition-cond}. Hence, the distributions $(B_i)_{i=0}^{n}$ and $(B_i^{\emptyset})_{i=0}^{n}$ are feasible for the program, and
\[
q^k = \sum_{i=0}^{n} B_i \le M_{\mathrm{LP}}(n, d, \widetilde{E}(\mathcal{P})).
\]
Therefore, any length $n$ that admits such an encoding satisfies $M_{\mathrm{LP}}(n, d, \widetilde{E}(\mathcal{P})) \ge q^k$, so $$n \ge \min\{n' : M_{\mathrm{LP}}(n', d, \widetilde{E}(\mathcal{P})) \ge q^k\},$$ and $r = n - k$ gives the stated bound.
\end{proof}

We illustrate Theorem~\ref{thm:lp-single} with two examples. Example~\ref{ex6} considers a linear function, which corresponds to a balanced partition, and Example~\ref{ex:1P_UB} considers the Hamming weight function, which corresponds to an unbalanced partition. In both cases, we compare the resulting bound with the Hamming bound and the Plotkin bound for FCCs.

\begin{example}
\label{ex6}
Let $f : \mathbb{F}_2^4 \to \mathbb{F}_2^2$ be the linear function
\[
f(x) = \begin{pmatrix} 1 & 1 & 1 & 0 \\ 0 & 1 & 1 & 0 \end{pmatrix} x .
\]
Its domain partition $\mathcal{P}_f$ has four blocks,
\begin{align*}
f^{-1}(00) &= \{0000, 0001, 0110, 0111\}, & f^{-1}(01) &= \{1010, 1011, 1100, 1101\},\\
f^{-1}(10) &= \{1000, 1001, 1110, 1111\}, & f^{-1}(11) &= \{0010, 0011, 0100, 0101\},
\end{align*}
each of size $4$. The partition is balanced, so the effective number of blocks agrees with the
block count, $\widetilde{E}(\mathcal{P}_f) = E = 4$.

For each distance requirement $d$, we solve LP\ref{lp:single} at successive lengths until we find the
smallest length $n_d$ at which the optimum is greater than or equal to $2^4$. Theorem~\ref{thm:lp-single} then gives
$r_{\mathcal{P}_f}(4 : d) \geq n_d - 4$. Table~\ref{tab:ex6-threshold} lists the optimal values at
$n_d$ and at the preceding length. 

\begin{table}[!t]
\centering
\caption{Optimal values of LP\ref{lp:single} in Example~\ref{ex6}, where $n_d$ is the smallest length
with $M_{\mathrm{LP}}\bigl(n, d, \widetilde{E}(\mathcal{P}_f)\bigr) \geq 2^4$.}
\label{tab:ex6-threshold}
\renewcommand{\arraystretch}{1.15}
\setlength{\tabcolsep}{9pt}
\begin{tabular}{c|cc|cc}
\hline
& \multicolumn{2}{c|}{$n = n_d - 1$} & \multicolumn{2}{c}{$n = n_d$} \\
$d$ & $n$ & $M_{\mathrm{LP}}(n, d, 4)$ & $n$ & $M_{\mathrm{LP}}(n, d, 4)$ \\
\hline
$3$  & $5$  & $4$ & $6$  & $16$      \\
$5$  & $8$  & $4$ & $9$  & $20$      \\
$7$  & $11$ & $4$ & $12$ & $1408/27$ \\
$9$  & $14$ & $4$ & $15$ & $64$      \\
$11$ & $17$ & $4$ & $18$ & $1440/17$ \\
$13$ & $20$ & $4$ & $21$ & $3740/43$ \\
\hline
\end{tabular}
\end{table}

Table~\ref{tab:ex6-bounds} compares the resulting bounds with the Hamming bound for FCCs
of \cite{RRHH20251} and the Plotkin bound of \cite{LBZY2023}.

\begin{table}[!t]
\centering
\caption{Lower bounds on the optimal redundancy in Example~\ref{ex6}.}
\label{tab:ex6-bounds}
\renewcommand{\arraystretch}{1.15}
\setlength{\tabcolsep}{9pt}
\begin{tabular}{c|ccc}
\hline
& \multicolumn{3}{c}{Lower bound on $r_{\mathcal{P}_f}(4 : d)$} \\
$d$ & Hamming \cite{RRHH20251} & Plotkin \cite{LBZY2023} & LP \\
\hline
$3$  & $r \geq 3$  & $r \geq 2$  & $r \geq 2$  \\
$5$  & $r \geq 5$  & $r \geq 5$  & $r \geq 5$  \\
$7$  & $r \geq 7$  & $r \geq 8$  & $r \geq 8$  \\
$9$  & $r \geq 10$ & $r \geq 11$ & $r \geq 11$ \\
$11$ & $r \geq 12$ & $r \geq 14$ & $r \geq 14$ \\
$13$ & $r \geq 14$ & $r \geq 17$ & $r \geq 17$ \\
\hline
\end{tabular}
\end{table}

At $d = 3$, the Hamming bound is the strongest of the three. At $d = 5$ all three agree. For every
larger value of $d$ in the table the linear programming bound strictly improves on the Hamming bound,
and the gap grows with $d$. The linear programming
bound and the Plotkin bound coincide at every value of $d$ listed here. 
\end{example}

Example~\ref{ex:1P_UB} considers an unbalanced partition, for which the linear programming bound
improves on the Hamming bound at every distance requirement listed and is strictly stronger than
the Plotkin bound at some values.

\begin{example}\label{ex:1P_UB}
Consider the weight function $g : \mathbb{F}_2^5 \to \{0, 1, 2, 3, 4, 5\}$ defined by
\[
g(u) = \mathrm{wt}(u),
\]
for every $u \in \mathbb{F}_2^5$. Its domain partition $\mathcal{P}_g$ has six blocks,
\begin{align*}
g^{-1}(0) &= \{00000\},\\
g^{-1}(1) &= \{10000, 01000, 00100, 00010, 00001\},\\
g^{-1}(2) &= \{11000, 10100, 10010, 10001, 01100, 01010, 01001, 00110, 00101, 00011\},\\
g^{-1}(3) &= \{00111, 01011, 01101, 01110, 10011, 10101, 10110, 11001, 11010, 11100\},\\
g^{-1}(4) &= \{01111, 10111, 11011, 11101, 11110\},\\
g^{-1}(5) &= \{11111\},
\end{align*}
of sizes $1, 5, 10, 10, 5, 1$. The partition is not balanced, and the effective number of blocks
$\widetilde{E}(\mathcal{P}_g) = 256/63$ is less than the block count $E = 6$.

Solving LP\ref{lp:single} gives the values in
Table~\ref{tab:weight-threshold}, and the resulting bounds appear in Table~\ref{tab:weight-bounds}.
Here the program has no feasible solution at $n = n_d - 1$ for $d \geq 7$. At $n = n_d - 1$, the program has no feasible solution for $d \geq 7$, so the entry is $-\infty$. For
$d = 3$ and $d = 5$, a feasible solution exists but is less than $2^5$.

\begin{table}[!t]
\centering
\caption{Optimal values of LP\ref{lp:single} in Example~\ref{ex:1P_UB}, where $n_d$ is the smallest
length with $M_{\mathrm{LP}}\bigl(n, d, \widetilde{E}(\mathcal{P}_g)\bigr) \geq 2^5$.}
\label{tab:weight-threshold}
\renewcommand{\arraystretch}{1.15}
\setlength{\tabcolsep}{9pt}
\begin{tabular}{c|cc|cc}
\hline
& \multicolumn{2}{c|}{$n = n_d - 1$} & \multicolumn{2}{c}{$n = n_d$} \\
$d$ & $n$ & $M_{\mathrm{LP}}(n, d, 256/63)$ & $n$ & $M_{\mathrm{LP}}(n, d, 256/63)$ \\
\hline
$3$  & $6$  & $512/33$  & $7$  & $188416/3285$ \\
$5$  & $9$  & $256/13$  & $10$ & $768/11$      \\
$7$  & $11$ & $-\infty$ & $12$ & $352/7$       \\
$9$  & $14$ & $-\infty$ & $15$ & $12032/231$   \\
$11$ & $17$ & $-\infty$ & $18$ & $256/5$       \\
$13$ & $20$ & $-\infty$ & $21$ & $128/3$       \\
\hline
\end{tabular}
\end{table}

Table~\ref{tab:weight-bounds} compares the resulting bounds with the Hamming bound for FCCs
of \cite{RRHH20251} and the Plotkin bound of \cite{LBZY2023}.

\begin{table}[!t]
\centering
\caption{Lower bounds on the optimal redundancy in Example~\ref{ex:1P_UB}.}
\label{tab:weight-bounds}
\renewcommand{\arraystretch}{1.15}
\setlength{\tabcolsep}{9pt}
\begin{tabular}{c|ccc}
\hline
& \multicolumn{3}{c}{Lower bound on $r_{\mathcal{P}_g}(5 : d)$} \\
$d$ & Hamming \cite{RRHH20251} & Plotkin \cite{LBZY2023} & LP \\
\hline
$3$  & $r \geq 1$  & $r \geq 1$  & $r \geq 2$  \\
$5$  & $r \geq 3$  & $r \geq 4$  & $r \geq 5$  \\
$7$  & $r \geq 6$  & $r \geq 7$  & $r \geq 7$  \\
$9$  & $r \geq 8$  & $r \geq 10$ & $r \geq 10$ \\
$11$ & $r \geq 10$ & $r \geq 13$ & $r \geq 13$ \\
$13$ & $r \geq 12$ & $r \geq 16$ & $r \geq 16$ \\
\hline
\end{tabular}
\end{table}

The linear programming bound is strictly stronger than the Hamming bound at every distance
requirement in the table. It is also strictly stronger than the Plotkin bound at $d = 3$ and
$d = 5$, and the two agree for the larger values of $d$. 
\end{example}

\subsection{Linear Programming Bound for Two Partitions}
\label{subsec:lp-two}

We now consider two partitions $\mathcal{P}_1$ and $\mathcal{P}_2$ of $\mathbb{F}_q^k$ with distance requirements $d_1 \le d_2$. The refinement of the bound comes from splitting the distance distribution according to which of the two partitions separates a given pair. For the image $C$ of a systematic $(\mathcal{P}_h : d_h;\, h \in [2])$-FCPC and each $i \in \{0, \dots, n\}$, define, using the convention of Section~\ref{subsec:lp-dist} for the action of a partition on codewords,
\begin{align}
B_i^{\emptyset} &= \tfrac{1}{M}\,\bigl|\{(x, y) \in C^2 : d(x, y) = i,\ \mathcal{P}_1(x) = \mathcal{P}_1(y),\ \mathcal{P}_2(x) = \mathcal{P}_2(y)\}\bigr|, \label{eq:two-empty}\\
B_i^{1} &= \tfrac{1}{M}\,\bigl|\{(x, y) \in C^2 : d(x, y) = i,\ \mathcal{P}_1(x) \ne \mathcal{P}_1(y),\ \mathcal{P}_2(x) = \mathcal{P}_2(y)\}\bigr|, \label{eq:two-1}\\
B_i^{2} &= \tfrac{1}{M}\,\bigl|\{(x, y) \in C^2 : d(x, y) = i,\ \mathcal{P}_1(x) = \mathcal{P}_1(y),\ \mathcal{P}_2(x) \ne \mathcal{P}_2(y)\}\bigr|, \label{eq:two-2}\\
B_i^{1,2} &= \tfrac{1}{M}\,\bigl|\{(x, y) \in C^2 : d(x, y) = i,\ \mathcal{P}_1(x) \ne \mathcal{P}_1(y),\ \mathcal{P}_2(x) \ne \mathcal{P}_2(y)\}\bigr|. \label{eq:two-12}
\end{align}
The superscript represents the set of partitions in which the pair differs. The four distributions are disjoint and sum to the full distance distribution, $B_i = B_i^{\emptyset} + B_i^{1} + B_i^{2} + B_i^{1,2}$.

Four combinations of these distributions are same-block distance distributions, one for each of the partitions $\mathcal{P}_1$, $\mathcal{P}_2$, their join $\mathcal{P}_1 \vee \mathcal{P}_2$, and the trivial one-block partition:
\begin{align*}
\text{whole code:} && B_i^{\emptyset} + B_i^{1} + B_i^{2} + B_i^{1,2} &= B_i,\\[2pt]
\text{same block of } \mathcal{P}_1: && B_i^{\emptyset} + B_i^{2}, &\\[2pt]
\text{same block of } \mathcal{P}_2: && B_i^{\emptyset} + B_i^{1}, &\\[2pt]
\text{same block of } \mathcal{P}_1 \vee \mathcal{P}_2: && B_i^{\emptyset}. &
\end{align*}
Each of these is the same-block distance distribution of $C$ with respect to one of the partitions $\mathcal{P}_1$, $\mathcal{P}_2$, $\mathcal{P}_1 \vee \mathcal{P}_2$, or the trivial one-block partition, so Lemma~\ref{lem:macwilliams-block} applies to each and shows that its Krawtchouk transform is nonnegative. We emphasize that this positivity holds only for the four same-block combinations. A single distribution $B_i^{S}$ with $S \ne \emptyset$, or a combination corresponding to differing in a partition, such as $B_i^{1} + B_i^{1,2}$ for the pairs differing in $\mathcal{P}_1$, is not a same-block distribution, and its Krawtchouk transform need not be sign-definite. Delsarte inequalities may therefore be imposed only on the four same-block combinations above.

A pair differing in $\mathcal{P}_1$ must be separated by distance $d_1$, and a pair differing in $\mathcal{P}_2$ by distance $d_2$. Since $d_1 \le d_2$, the pairs counted by $B_i^{1}$ (differing in $\mathcal{P}_1$ only) require distance $d_1$, while the pairs counted by $B_i^{2}$ and $B_i^{1,2}$ (differing in $\mathcal{P}_2$) require distance $d_2$.

Fix a length $n$ and distances $d_1 \le d_2$. Consider the linear program in the variables $(B_i^{\emptyset}, B_i^{1}, B_i^{2}, B_i^{1,2})_{i=0}^{n}$:
\begin{linearprogram}\label{lp:two}
\begin{equation}
M_{\mathrm{LP}}\bigl(n, d_1, d_2, \widetilde{E}(\mathcal{P}_1), \widetilde{E}(\mathcal{P}_2), \widetilde{E}(\mathcal{P}_1 \vee \mathcal{P}_2)\bigr) = \max \ \sum_{i=0}^{n} \bigl(B_i^{\emptyset} + B_i^{1} + B_i^{2} + B_i^{1,2}\bigr)
\label{eq:lp-two-obj}
\end{equation}
subject to
\begin{align}
&B_0^{\emptyset} = 1, \quad B_0^{1} = B_0^{2} = B_0^{1,2} = 0, \label{eq:tlp1}\\
&B_i^{\emptyset},\, B_i^{1},\, B_i^{2},\, B_i^{1,2} \ge 0, && i = 0, \dots, n, \label{eq:tlp2}\\
&B_i^{1} = 0, && 1 \le i \le d_1 - 1, \label{eq:tlp3a}\\
&B_i^{2} = B_i^{1,2} = 0, && 1 \le i \le d_2 - 1, \label{eq:tlp3b}\\
&\sum_{i=0}^{n} B_i\, K_j(i) \ge 0, && j = 0, \dots, n, \label{eq:tlp4a}\\
&\sum_{i=0}^{n} \bigl(B_i^{\emptyset} + B_i^{2}\bigr) K_j(i) \ge 0, && j = 0, \dots, n, \label{eq:tlp4b}\\
&\sum_{i=0}^{n} \bigl(B_i^{\emptyset} + B_i^{1}\bigr) K_j(i) \ge 0, && j = 0, \dots, n, \label{eq:tlp4c}\\
&\sum_{i=0}^{n} B_i^{\emptyset}\, K_j(i) \ge 0, && j = 0, \dots, n, \label{eq:tlp4d}\\
&\sum_{i=0}^{n} B_i = \widetilde{E}(\mathcal{P}_1) \sum_{i=0}^{n} \bigl(B_i^{\emptyset} + B_i^{2}\bigr) \notag\\
&\hphantom{\sum_{i=0}^{n} B_i} = \widetilde{E}(\mathcal{P}_2) \sum_{i=0}^{n} \bigl(B_i^{\emptyset} + B_i^{1}\bigr) \notag\\
&\hphantom{\sum_{i=0}^{n} B_i} = \widetilde{E}(\mathcal{P}_1 \vee \mathcal{P}_2) \sum_{i=0}^{n} B_i^{\emptyset}. \label{eq:tlp5}
\end{align}
\end{linearprogram}
Constraints \eqref{eq:tlp3a} and \eqref{eq:tlp3b} are the \emph{distance constraints}. Constraints \eqref{eq:tlp4a}--\eqref{eq:tlp4d} are the Delsarte inequalities for the whole code, the same-$\mathcal{P}_1$, the same-$\mathcal{P}_2$, and the same-join distributions. The chain of equalities \eqref{eq:tlp5} is the partition condition \eqref{eq:partition-cond} applied to $\mathcal{P}_1$, $\mathcal{P}_2$, and $\mathcal{P}_1 \vee \mathcal{P}_2$, and the common left-hand side is $\sum_i B_i = M$.

\begin{theorem}
\label{thm:lp-two}
Let $\mathcal{P}_1$ and $\mathcal{P}_2$ be partitions of $\mathbb{F}_q^k$ with distance requirements $d_1 \le d_2$. Then
\[
r_{\mathcal{P}_1, \mathcal{P}_2}(k : (d_1, d_2)) \ge \min\bigl\{\, n' : M_{\mathrm{LP}}\bigl(n', d_1, d_2, \widetilde{E}(\mathcal{P}_1), \widetilde{E}(\mathcal{P}_2), \widetilde{E}(\mathcal{P}_1 \vee \mathcal{P}_2)\bigr) \ge q^k \,\bigr\} - k.
\]
\end{theorem}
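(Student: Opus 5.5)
The plan follows the proof of Theorem~\ref{thm:lp-single}. Take an optimal $(\mathcal{P}_h : d_h;\, h \in [2])$-FCPC $\mathcal{C} : \mathbb{F}_q^k \to \mathbb{F}_q^{n}$ with $n = k + r$ and $r = r_{\mathcal{P}_1, \mathcal{P}_2}(k : (d_1, d_2))$. Let $C = \mathrm{Im}(\mathcal{C})$. Since the encoding is systematic, $|C| = M = q^k$, and every codeword determines its message, so $\mathcal{P}_1(x)$ and $\mathcal{P}_2(x)$ are well defined on $C$. Form the four distributions \eqref{eq:two-empty}--\eqref{eq:two-12} of $C$. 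The whole task is to check that these distributions satisfy every constraint of LP\ref{lp:two}. Feasibility then gives $q^k = \sum_i B_i \le M_{\mathrm{LP}}(n, d_1, d_2, \widetilde{E}(\mathcal{P}_1), \widetilde{E}(\mathcal{P}_2), \widetilde{E}(\mathcal{P}_1 \vee \mathcal{P}_2))$. Hence $n$ is at least the stated minimum, and subtracting $k$ gives the bound.

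\textbf{Step 1: normalization and distance constraints.} All four quantities are nonnegative by definition, which gives \eqref{eq:tlp2}. At $i = 0$ the only pairs are $(x,x)$, which lie in the same block of both partitions. So $B_0^{\emptyset} = M/M = 1$ and the other three vanish, which is \eqref{eq:tlp1}. Now take a pair counted in $B_i^{1}$. Its messages lie in different blocks of $\mathcal{P}_1$, so by Definition~\ref{def:gfcpc} we have $i \ge d_1$, giving \eqref{eq:tlp3a}. Next take a pair counted in $B_i^{2}$ or $B_i^{1,2}$. Its messages lie in different blocks of $\mathcal{P}_2$, so $i \ge d_2$, giving \eqref{eq:tlp3b}.

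\textbf{Step 2: Delsarte inequalities.} Apply Lemma~\ref{lem:macwilliams-block} four times, once for each induced partition of $C$ into classes.
\begin{itemize}
\item With the trivial one-block partition, the same-block distribution is $B_i$, giving \eqref{eq:tlp4a}.
\item With the classes induced by $\mathcal{P}_1$, a pair is same-class exactly when $\mathcal{P}_1(x) = \mathcal{P}_1(y)$. These pairs are counted by $B_i^{\emptyset} + B_i^{2}$, giving \eqref{eq:tlp4b}.
\item Symmetrically, the classes induced by $\mathcal{P}_2$ give \eqref{eq:tlp4c}.
\item With the classes induced by $\mathcal{P}_1 \vee \mathcal{P}_2$, two messages are in the same join block if and only if they are in the same block of both partitions. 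So the same-block distribution is $B_i^{\emptyset}$, giving \eqref{eq:tlp4d}.
\end{itemize}
The only point to verify is this identification of the four same-block distributions with the right sums of the $B^S$. It follows because the four sets in \eqref{eq:two-empty}--\eqref{eq:two-12} partition $C^2$ by which partitions separate the pair.

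\textbf{Step 3: partition conditions.} Apply \eqref{eq:partition-cond} separately to $\mathcal{P}_1$, $\mathcal{P}_2$ and $\mathcal{P}_1 \vee \mathcal{P}_2$, using the same-block sums identified in Step 2. The classes of $C$ have the same sizes as the corresponding message blocks, because the encoding is injective. So each effective number of blocks computed on $C$ equals $\widetilde{E}$ of the corresponding partition of $\mathbb{F}_q^k$. This gives each equality in \eqref{eq:tlp5} with common left side $\sum_i B_i = M$.

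The argument is largely bookkeeping. The main point needing care is the join case: one must confirm that the join's blocks are exactly the nonempty intersections of blocks of $\mathcal{P}_1$ and $\mathcal{P}_2$. Once that holds, the Delsarte and partition-condition constraints for $\mathcal{P}_1 \vee \mathcal{P}_2$ apply to $B_i^{\emptyset}$ alone.
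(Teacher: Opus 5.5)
Your proposal is correct and follows essentially the same route as the paper. You take an optimal code and check that its four split distance distributions are feasible for LP\ref{lp:two}: the distance constraints come from the FCPC definition, the Delsarte inequalities from Lemma~\ref{lem:macwilliams-block} applied to the trivial partition, $\mathcal{P}_1$, $\mathcal{P}_2$ and $\mathcal{P}_1 \vee \mathcal{P}_2$, and the chain \eqref{eq:tlp5} from \eqref{eq:partition-cond}, after which you conclude as in Theorem~\ref{thm:lp-single}. Your version only makes explicit what the paper leaves implicit, namely the normalization at $i=0$ and the injectivity argument that class sizes match block sizes.
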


\begin{proof}
Let $\mathcal{C} : \mathbb{F}_q^k \to \mathbb{F}_q^{k + r}$ be an optimal $(\mathcal{P}_h : d_h;\, h \in [2])$-FCPC, so $r = r_{\mathcal{P}_1, \mathcal{P}_2}(k : (d_1, d_2))$ and $n = k + r$, and let $C = \mathrm{Im}(\mathcal{C})$. The distance distributions \eqref{eq:two-empty}--\eqref{eq:two-12} of $C$ satisfy \eqref{eq:tlp1}--\eqref{eq:tlp5}. The distance constraints \eqref{eq:tlp3a}--\eqref{eq:tlp3b} hold because $\mathcal{C}$ separates $\mathcal{P}_1$-differing pairs by $d_1$ and $\mathcal{P}_2$-differing pairs by $d_2$. The Delsarte inequalities \eqref{eq:tlp4a}--\eqref{eq:tlp4d} hold by Lemma~\ref{lem:macwilliams-block} applied to the partitions of $C$ induced by $\mathcal{P}_1$, $\mathcal{P}_2$, $\mathcal{P}_1 \vee \mathcal{P}_2$, and the trivial one-block partition. The chain \eqref{eq:tlp5} holds by the partition condition \eqref{eq:partition-cond}. Feasibility gives $q^k = \sum_i B_i \le M_{\mathrm{LP}}\bigl(n, d_1, d_2, \widetilde{E}(\mathcal{P}_1), \widetilde{E}(\mathcal{P}_2), \widetilde{E}(\mathcal{P}_1 \vee \mathcal{P}_2)\bigr)$. The bound follows as in the proof of Theorem~\ref{thm:lp-single}.
\end{proof}

\begin{remark}
\label{rem:reduction}
If $\mathcal{P}_1$ is a refinement of $\mathcal{P}_2$, then every pair in the same block of $\mathcal{P}_1$ is also in the same block of $\mathcal{P}_2$, so $B_i^{2} = 0$ for all $i$. Consequently $\mathcal{P}_1 \vee \mathcal{P}_2 = \mathcal{P}_1$ and $\widetilde{E}(\mathcal{P}_1 \vee \mathcal{P}_2) = \widetilde{E}(\mathcal{P}_1)$, the same-$\mathcal{P}_1$ distribution $B_i^{\emptyset} + B_i^{2}$ equals the same-join distribution $B_i^{\emptyset}$, and \eqref{eq:tlp4b} duplicates \eqref{eq:tlp4d}. The first and third equalities of \eqref{eq:tlp5} then coincide as well. The program involves only the three distributions $B_i$, $B_i^{\emptyset} + B_i^{1}$, and $B_i^{\emptyset}$, retaining the whole-code, same-$\mathcal{P}_2$, and same-join Delsarte inequalities \eqref{eq:tlp4a}, \eqref{eq:tlp4c}, \eqref{eq:tlp4d} together with two of the three equalities in \eqref{eq:tlp5}. When $\mathcal{P}_1$ is moreover the finest partition, whose blocks are singletons, this is the data-protection case, where $\mathcal{P}_1$ enforces the distance requirement $d_1$ on all distinct message pairs~\cite{RRHH20251}. The caution of Remark~\ref{rem:ecc-reduction} applies here as well, since $\widetilde{E}(\mathcal{P}_1) = q^k$ and the constraints do not force $B_i^{\emptyset} = 0$ for $i \ge 1$.

\end{remark}

\begin{example}\label{ex9}
Let $\mathcal{P}_2 = \mathcal{P}_f$ be the domain partition of the linear function $f$ of
Example~\ref{ex6}, and let $\mathcal{P}_1$ be the finest partition of $\mathbb{F}_2^4$. By
Remark~\ref{rem:generalization}, a $(\mathcal{P}_h : d_h;\, h \in [2])$-FCPC is then an
$(f : d_1, d_2)$-FCC, so the bound of Theorem~\ref{thm:lp-two} applies to an FCC with data protection,
where $d_1$ is the distance for the data and $d_2$ the distance for the function value. The blocks of
$\mathcal{P}_1$ are singletons, so $\widetilde{E}(\mathcal{P}_1) = 16$, and
$\widetilde{E}(\mathcal{P}_2) = 4$ as in Example~\ref{ex6}. Since $\mathcal{P}_1$ refines
$\mathcal{P}_2$, the join $\mathcal{P}_1 \vee \mathcal{P}_2$ is $\mathcal{P}_1$ and
$\widetilde{E}(\mathcal{P}_1 \vee \mathcal{P}_2) = 16$, and LP\ref{lp:two} takes the reduced form
described in Remark~\ref{rem:reduction}.

Solving LP\ref{lp:two} gives the bounds in Table~\ref{tab:dp-bounds}.

\begin{table}[!t]
\centering
\caption{Lower bounds on the optimal redundancy in Example~\ref{ex9}.}
\label{tab:dp-bounds}
\renewcommand{\arraystretch}{1.15}
\setlength{\tabcolsep}{9pt}
\begin{tabular}{c|ccc}
\hline
& \multicolumn{3}{c}{Lower bound on $r_{\mathcal{P}_1, \mathcal{P}_2}(4 : (d_1, d_2))$} \\
$(d_1, d_2)$ & Hamming \cite{RRHH20251} & Plotkin \cite{LBZY2023} & LP \\
\hline
$(3, 5)$  & $r \geq 6$  & $r \geq 5$  & $r \geq 6$  \\
$(3, 7)$  & $r \geq 8$  & $r \geq 8$  & $r \geq 9$  \\
$(3, 9)$  & $r \geq 10$ & $r \geq 11$ & $r \geq 12$ \\
$(5, 11)$ & $r \geq 13$ & $r \geq 15$ & $r \geq 16$ \\
$(7, 11)$ & $r \geq 14$ & $r \geq 16$ & $r \geq 16$ \\
$(9, 11)$ & $r \geq 14$ & $r \geq 16$ & $r \geq 17$ \\
$(9, 13)$ & $r \geq 16$ & $r \geq 19$ & $r \geq 20$ \\
\hline
\end{tabular}
\end{table}

The linear programming bound is at least as strong as the Plotkin bound at every pair in the table,
and strictly stronger at all of them except $(7, 11)$. Comparing with Example~\ref{ex6} shows the
cost of adding data protection. For $d_2 = 5$, $7$, and $9$ the single-partition bound of
Example~\ref{ex6} gives $r \geq 5$, $8$, and $11$, while protecting the data at $d_1 = 3$ raises
these to $r \geq 6$, $9$, and $12$. 
\end{example}

In Example~\ref{ex9} the partition $\mathcal{P}_1$ refines $\mathcal{P}_2$. We now take two
partitions for which neither refines the other.

\begin{example}\label{ex:lp-two-order}
Consider the functions $g_1 : \mathbb{F}_3^4 \to \mathbb{F}_3$ and
$g_2 : \mathbb{F}_3^4 \to \mathbb{F}_3^2$ defined by
\[
g_1(u) = \sum_{i=1}^{4} u_i \pmod 3, \qquad
g_2(u) = (u_1 u_2,\ u_3 + u_4),
\]
for every $u = (u_1, u_2, u_3, u_4) \in \mathbb{F}_3^4$. The domain partition
$\mathcal{P}_{g_1}$ has three blocks,
\[\lvert g_1^{-1}(0)  \rvert= \lvert g_1^{-1}(1)  \rvert= \lvert g_1^{-1}(2)  \rvert =27, \]
and the domain partition $\mathcal{P}_{g_2}$ has nine blocks,
\begin{align*}
   \lvert g_2^{-1}(0,0)  \rvert&= \lvert g_2^{-1}(0,1)  \rvert= \lvert g_2^{-1}(0,2)  \rvert =15, \\
\lvert g_2^{-1}(1,0)  \rvert&= \lvert g_2^{-1}(1,1)  \rvert= \lvert g_2^{-1}(1,2)  \rvert= \lvert g_2^{-1}(2,0)  \rvert= \lvert g_2^{-1}(2,1)  \rvert= \lvert g_2^{-1}(2,2)  \rvert =6.
\end{align*}
Hence $\widetilde{E}(\mathcal{P}_{g_1}) = 3$ and
$\widetilde{E}(\mathcal{P}_{g_2}) = 81/11$. The join has block sizes
\[
(\underbrace{3,3,\ldots,3}_{9\text{ times}},\underbrace{6,6,\ldots,6}_{9\text{ times}})
\]
and
$\widetilde{E}(\mathcal{P}_{g_1} \vee \mathcal{P}_{g_2}) = 81/5$. Neither partition refines the
other, so LP\ref{lp:two} retains all four Delsarte inequalities and all three equalities
of~\eqref{eq:tlp5}.
Since we assume $d_1 \leq d_2$, the partition carrying the smaller requirement is the one labelled
$\mathcal{P}_1$. The two columns of Table~\ref{tab:two-order} are the two ways of doing this, so they are
two different sets of distance constraints and hence two different examples.

\begin{table}[!t]
\centering
\caption{Lower bounds on the optimal redundancy in Example~\ref{ex:lp-two-order} for the two
assignments of the distance requirements to $\mathcal{P}_{g_1}$ and $\mathcal{P}_{g_2}$.}
\label{tab:two-order}
\renewcommand{\arraystretch}{1.15}
\setlength{\tabcolsep}{9pt}
\begin{tabular}{c|cc|cc}
\hline
& \multicolumn{2}{c|}{$\mathcal{P}_1 = \mathcal{P}_{g_1}$, $\mathcal{P}_2 = \mathcal{P}_{g_2}$}
& \multicolumn{2}{c}{$\mathcal{P}_1 = \mathcal{P}_{g_2}$, $\mathcal{P}_2 = \mathcal{P}_{g_1}$} \\
$(d_1, d_2)$ & Plotkin \cite{LBZY2023} & LP & Plotkin \cite{LBZY2023} & LP \\
\hline
$(2, 3)$   & $r \geq 1$  & $r \geq 2$  & $r \geq 1$  & $r \geq 1$  \\
$(3, 3)$   & $r \geq 1$  & $r \geq 2$  & $r \geq 1$  & $r \geq 2$  \\
$(3, 4)$   & $r \geq 2$  & $r \geq 3$  & $r \geq 2$  & $r \geq 3$  \\
$(4, 4)$   & $r \geq 2$  & $r \geq 3$  & $r \geq 2$  & $r \geq 3$  \\
$(4, 5)$   & $r \geq 4$  & $r \geq 4$  & $r \geq 3$  & $r \geq 4$  \\
$(3, 5)$   & $r \geq 4$  & $r \geq 4$  & $r \geq 3$  & $r \geq 3$  \\
$(5, 7)$   & $r \geq 6$  & $r \geq 6$  & {$r \geq 6$}  & $r \geq 6$  \\
$(5, 9)$   & $r \geq 9$  & $r \geq 9$  & {$r \geq 8$}  & $r \geq 8$  \\
$(9, 11)$  & $r \geq 12$ & $r \geq 12$ & $r \geq 11$ & $r \geq 11$ \\
$(11, 13)$ & $r \geq 15$ & $r \geq 15$ & $r \geq 14$ & $r \geq 14$ \\
$(11, 15)$ & $r \geq 17$ & $r \geq 17$ & {$r \geq 16$} & $r \geq 16$ \\
\hline
\end{tabular}
\end{table}
The linear programming bound is strictly stronger than the Plotkin bound at $(3, 3)$, $(3, 4)$ and
$(4, 4)$ under both assignments, at $(2, 3)$ under the first assignment, and at $(4, 5)$ under the
second. At the remaining pairs, the two bounds agree. The bound is larger for the first assignment at
$(2, 3)$, $(3, 5)$, $(5, 9)$, $(9, 11)$, $(11, 13)$ and $(11, 15)$, and the two assignments agree at
the rest. At $(3, 3)$ and $(4, 4)$ they must agree, since the two assignments describe the same code
when $d_1 = d_2$. Protecting $\mathcal{P}_{g_2}$, which has the larger effective number of blocks,
at the larger distance requirement is the more demanding of the two, and the linear program detects
this even though the pair of distance requirements is the same in both columns.
\end{example}

\subsection{Linear Programming Bound for Multiple Partitions}
\label{subsec:lp-multi}

We now extend the bound to $H$ partitions $\mathcal{P}_1, \dots, \mathcal{P}_H$ of $\mathbb{F}_q^k$ with distance requirements $d_1 \le d_2 \le \dots \le d_H$. The distance distribution is split according to the set of partitions that separate a given pair. For the image $C$ of a systematic $(\mathcal{P}_h : d_h;\ h \in [H])$-FCPC, each subset $S \subseteq [H]$, and each $i \in \{0, \dots, n\}$, define
\begin{equation}
B_i^{S} = \frac{1}{M}\,\bigl|\{(x, y) \in C^2 : d(x, y) = i,\ \mathcal{P}_j(x) \ne \mathcal{P}_j(y) \ \forall\, j \in S,\ \mathcal{P}_j(x) = \mathcal{P}_j(y) \ \forall\, j \in [H] \setminus S\}\bigr|.
\label{eq:multi-BS}
\end{equation}
The superscript represents the set of partitions in which the pair differs, as in Section~\ref{subsec:lp-two}. The $2^H$ distributions are disjoint and sum to the full distance distribution,
\begin{equation}
B_i = \sum_{S \subseteq [H]} B_i^{S}.
\label{eq:multi-full}
\end{equation}

For $T \subseteq [H]$, write $\mathcal{P}_T = \bigvee_{j \in T} \mathcal{P}_j$, with the convention that $\mathcal{P}_{\emptyset}$ is the trivial one-block partition, so $\widetilde{E}(\mathcal{P}_{\emptyset}) = 1$. Two codewords lie in the same block of $\mathcal{P}_T$ if and only if they lie in the same block of $\mathcal{P}_j$ for every $j \in T$. The same-block distance distribution of $C$ with respect to $\mathcal{P}_T$ is therefore
\begin{equation}
\sum_{\substack{S \subseteq [H] \\ S \cap T = \emptyset}} B_i^{S}.
\label{eq:multi-sameblock}
\end{equation}
For $T = \emptyset$ this is the full distribution $B_i$. Lemma~\ref{lem:macwilliams-block} applies to each of these $2^H$ combinations and shows that its Krawtchouk transform is nonnegative. As in the two-partition case, the positivity holds only for the same-block combinations \eqref{eq:multi-sameblock}.

A pair counted by $B_i^{S}$ with $S \ne \emptyset$ differs in $\mathcal{P}_j$ exactly for $j \in S$, so it must be separated by distance $d_j$ for every $j \in S$. Since $d_1 \le \dots \le d_H$, the combined requirement is $d_{\max S}$, where $\max S$ denotes the largest element of $S$.

Fix a length $n$ and distances $d_1 \le \dots \le d_H$. Consider the linear program in the $2^H(n+1)$ variables $(B_i^{S})_{i=0,\dots,n}^{S \subseteq [H]}$:
\begin{linearprogram}\label{lp:multi}
\begin{equation}
M_{\mathrm{LP}}\bigl(n, d_1, \dots, d_H, \{\widetilde{E}(\mathcal{P}_T)\}_{\emptyset \ne T \subseteq [H]}\bigr) = \max \ \sum_{i=0}^{n} \sum_{S \subseteq [H]} B_i^{S}
\label{eq:lp-multi-obj}
\end{equation}
subject to
\begin{align}
&B_0^{\emptyset} = 1, \quad B_0^{S} = 0, && \emptyset \ne S \subseteq [H], \label{eq:mlp1}\\
&B_i^{S} \ge 0, && i = 0, \dots, n, \quad S \subseteq [H], \label{eq:mlp2}\\
&B_i^{S} = 0, && 1 \le i \le d_{\max S} - 1, \quad \emptyset \ne S \subseteq [H], \label{eq:mlp3}\\
&\sum_{i=0}^{n} \Biggl(\ \sum_{\substack{S \subseteq [H] \\ S \cap T = \emptyset}} B_i^{S}\Biggr) K_j(i) \ge 0, && j = 0, \dots, n, \quad T \subseteq [H], \label{eq:mlp4}\\
&\sum_{i=0}^{n} B_i = \widetilde{E}(\mathcal{P}_T) \sum_{i=0}^{n} \sum_{\substack{S \subseteq [H] \\ S \cap T = \emptyset}} B_i^{S}, && \emptyset \ne T \subseteq [H]. \label{eq:mlp5}
\end{align}
\end{linearprogram}
Constraint \eqref{eq:mlp3} collects the distance constraints. Constraint \eqref{eq:mlp4} collects the Delsarte inequalities for the $2^H$ same-block distributions \eqref{eq:multi-sameblock}, where $T = \emptyset$ gives the whole-code inequality. Constraint \eqref{eq:mlp5} is the partition condition \eqref{eq:partition-cond} applied to every join $\mathcal{P}_T$, and the common left-hand side is $\sum_i B_i = M$.

\begin{theorem}
\label{thm:lp-multi}
Let $\mathcal{P}_1, \dots, \mathcal{P}_H$ be partitions of $\mathbb{F}_q^k$ with distance requirements $d_1 \le \dots \le d_H$, and let $\bm{d} = (d_1, \dots, d_H)$. Then
\[
r_{\mathcal{P}_1, \dots, \mathcal{P}_H}(k : \bm{d}) \ge \min\bigl\{\, n' : M_{\mathrm{LP}}\bigl(n', d_1, \dots, d_H, \{\widetilde{E}(\mathcal{P}_T)\}_{\emptyset \ne T \subseteq [H]}\bigr) \ge q^k \,\bigr\} - k.
\]
\end{theorem}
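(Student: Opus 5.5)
The plan mirrors the proofs of Theorem~\ref{thm:lp-single} and Theorem~\ref{thm:lp-two}: show that the distance distributions of an optimal encoding form a feasible point of LP\ref{lp:multi}, and then read off the bound on the length. Let $\mathcal{C} : \mathbb{F}_q^k \to \mathbb{F}_q^{k+r}$ be an optimal $(\mathcal{P}_h : d_h;\, h \in [H])$-FCPC with $r = r_{\mathcal{P}_1, \dots, \mathcal{P}_H}(k : \bm{d})$ and $n = k + r$. Let $C = \mathrm{Im}(\mathcal{C})$. Because the encoding is systematic, $|C| = M = q^k$, and each codeword determines its message. Each $\mathcal{P}_T$ therefore induces a partition of $C$ into classes. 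I would then take the $2^H$ sequences $(B_i^{S})$ defined in \eqref{eq:multi-BS} and verify each constraint in turn.

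For \eqref{eq:mlp1}: distance $0$ forces $x = y$. Such a pair lies in the same block of every $\mathcal{P}_j$, so $B_0^{\emptyset} = M/M = 1$ and $B_0^{S} = 0$ for $S \neq \emptyset$. Nonnegativity \eqref{eq:mlp2} is immediate from the definition as counts. For \eqref{eq:mlp3}: a pair counted in $B_i^{S}$ with $S \ne \emptyset$ lies in different blocks of $\mathcal{P}_{\max S}$. Definition~\ref{def:gfcpc} then gives $d(x,y) \ge d_{\max S}$, so no such pair has $1 \le i \le d_{\max S} - 1$.

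For \eqref{eq:mlp4}, fix $T \subseteq [H]$. Two codewords lie in the same block of $\mathcal{P}_T$ exactly when the set $S$ of separating partitions satisfies $S \cap T = \emptyset$. Hence $\sum_{S \cap T = \emptyset} B_i^{S}$ is precisely the same-block distribution \eqref{eq:Bempty} for the partition of $C$ induced by $\mathcal{P}_T$. For $T = \emptyset$ this is the one-class partition, giving the whole code. Lemma~\ref{lem:macwilliams-block} then yields nonnegativity of its Krawtchouk transform for every $j$.

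For \eqref{eq:mlp5}, apply the partition condition \eqref{eq:partition-cond} to the classes induced by $\mathcal{P}_T$, whose sizes are the block sizes of $\mathcal{P}_T$. Together with \eqref{eq:multi-full} and \eqref{eq:sum-B}, this gives $\sum_i B_i = \widetilde{E}(\mathcal{P}_T)\sum_i\sum_{S\cap T=\emptyset}B_i^S$.

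The only point needing care, and the step I expect to be the main obstacle, is the identification of the same-block combinations with the classes of the join. It relies on the characterization of $\mathcal{P}_T$ as ``same block in every $\mathcal{P}_j$, $j \in T$'' and on the disjointness of the $B_i^{S}$. Once these are in place, feasibility gives
\[
q^k = \sum_{i}\sum_{S} B_i^{S} \le M_{\mathrm{LP}}\bigl(n, d_1, \dots, d_H, \{\widetilde{E}(\mathcal{P}_T)\}_{\emptyset \ne T \subseteq [H]}\bigr).
\]
So $n$ belongs to the set over which the minimum is taken, and $n$ is at least that minimum. Subtracting $k$ gives the stated bound on $r$.
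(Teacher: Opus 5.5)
Your proposal is correct and follows essentially the same route as the paper: you show that the distributions $B_i^{S}$ of the image of an optimal encoding are feasible for LP\ref{lp:multi}, using $\mathcal{P}_{\max S}$ for the distance constraints, Lemma~\ref{lem:macwilliams-block} on the classes induced by each join $\mathcal{P}_T$ for the Delsarte inequalities, and the partition condition for each $\mathcal{P}_T$, and then you conclude as in Theorem~\ref{thm:lp-single}. The only difference is that you also verify \eqref{eq:mlp1} and \eqref{eq:mlp2} explicitly, which the paper leaves implicit.
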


\begin{proof}
Let $\mathcal{C} : \mathbb{F}_q^k \to \mathbb{F}_q^{k + r}$ be an optimal $(\mathcal{P}_h : d_h;\ h \in [H])$-FCPC, so $r = r_{\mathcal{P}_1, \dots, \mathcal{P}_H}(k : \bm{d})$ and $n = k + r$, and let $C = \mathrm{Im}(\mathcal{C})$. The distance distributions \eqref{eq:multi-BS} of $C$ satisfy \eqref{eq:mlp1}--\eqref{eq:mlp5}. The distance constraints \eqref{eq:mlp3} hold because a pair counted by $B_i^{S}$ differs in $\mathcal{P}_j$ for every $j \in S$ and is therefore separated by distance $d_{\max S}$. The Delsarte inequalities \eqref{eq:mlp4} hold by Lemma~\ref{lem:macwilliams-block} applied to the partition of $C$ induced by $\mathcal{P}_T$ for each $T \subseteq [H]$. The equalities \eqref{eq:mlp5} hold by the partition condition \eqref{eq:partition-cond} applied to each join $\mathcal{P}_T, \ T \subseteq [H]$. Since the distance distributions of $C$ are feasible for the program, $M_{\mathrm{LP}}\bigl(n, d_1, \dots, d_H, \{\widetilde{E}(\mathcal{P}_T)\}\bigr) \ge \sum_i B_i = q^k$, and the bound follows as in the proof of Theorem~\ref{thm:lp-single}.
\end{proof}

\begin{remark}
\label{rem:lp-multi-recover}
For $H = 2$, LP\ref{lp:multi} is exactly LP\ref{lp:two}. For $H = 1$ the variables are $B_i^{\emptyset}$ and $B_i^{\{1\}}$, and the change of variables $B_i^{\{1\}} = B_i - B_i^{\emptyset}$ identifies the program with the single-partition program LP\ref{lp:single}.
\end{remark}

\begin{remark}
\label{rem:lp-multi-subfamily}
The $H$-partition program is at least as strong as the program for any subfamily of the partitions. Let $T' \subseteq [H]$ and consider LP\ref{lp:multi} for the partitions $\{\mathcal{P}_j\}_{j \in T'}$. Given a feasible point of the full program, set $\widetilde{B}_i^{S'} = \sum_{S : S \cap T' = S'} B_i^{S}$ for $S' \subseteq T'$. For every $T \subseteq T'$ this aggregation satisfies
\[
\sum_{\substack{S' \subseteq T' \\ S' \cap T = \emptyset}} \widetilde{B}_i^{S'} \;=\; \sum_{\substack{S \subseteq [H] \\ S \cap T = \emptyset}} B_i^{S},
\]
so the Delsarte inequalities \eqref{eq:mlp4} and the partition conditions \eqref{eq:mlp5} of the subfamily program are exactly those of the full program indexed by $T \subseteq T'$. Conditions \eqref{eq:mlp1} and \eqref{eq:mlp2} are preserved by the aggregation, and the distance constraints \eqref{eq:mlp3} hold because $\max S \ge \max S'$ whenever $S \cap T' = S'$. The objective value is unchanged, since $\sum_{S' \subseteq T'} \widetilde{B}_i^{S'} = B_i$. Hence the maximum of the full program is at most that of the subfamily program, and the resulting redundancy bound is at least as large.
\end{remark}

\begin{example}\label{ex:lp-three}
Let $u = (u_1, u_2, u_3, u_4,u_5) \in \mathbb{F}_3^5$ and consider the functions  $g_1 : \mathbb{F}_3^5 \to \mathbb{F}_3$, $g_2 : \mathbb{F}_3^5 \to \mathbb{F}_3^2$, and
$g_3 : \mathbb{F}_3^5 \to \mathbb{F}_3^3$ with
\[
g_1(u) = u_1 u_2 + u_3 u_4 \pmod 3, \qquad
g_2(u) = (u_1 + u_2,\ u_4 + u_5), \qquad
g_3(u) = (u_1, u_2, u_3).
\]
 Write $\mathcal{P}_1$, $\mathcal{P}_2$, and
$\mathcal{P}_3$ for the domain partitions induced by $g_1, g_2$ and $g_3$, respectively. The function $g_1$ takes the value $0$
on $99$ of the $243$ points and each of the values $1$ and $2$ on $72$ points, so $\mathcal{P}_1$ has three blocks of sizes $99$, $72$ and $72$. The blocks of
$\mathcal{P}_2$ all have size $27$, and those of $\mathcal{P}_3$ all have size $9$.
Table~\ref{tab:block_sizes} lists the block sizes of these partitions and of their joins, together
with the effective number of blocks of each.

\begin{table}[!t]
\centering
\caption{Block sizes and effective number of blocks of the partitions induced by $g_1$, $g_2$, and
$g_3$ and of their joins, in Example~\ref{ex:lp-three}.}
\label{tab:block_sizes}
\renewcommand{\arraystretch}{1.15}
\setlength{\tabcolsep}{9pt}
\begin{tabular}{c|c|c}
\hline
Partition $\mathcal{P}_T$ & Block sizes & $\widetilde{E}(\mathcal{P}_T)$ \\
\hline
$\mathcal{P}_1$ & $(99, 72,72)$ & $243/83$ \\
$\mathcal{P}_2$ & $(\underbrace{27, \ldots, 27}_{9})$ & $9$ \\
$\mathcal{P}_3$ & $(\underbrace{9, \ldots, 9}_{27})$ & $27$ \\
$\mathcal{P}_1 \vee \mathcal{P}_2$ & $(\underbrace{6, \ldots, 6}_{9},\underbrace{9, \ldots, 9}_{9},\underbrace{12, \ldots, 12}_{9})$ & $729/29$ \\
$\mathcal{P}_1 \vee \mathcal{P}_3$ & $(\underbrace{3, \ldots, 3}_{54}, \underbrace{9, \ldots, 9}_{9})$ & $243/5$ \\
$\mathcal{P}_2 \vee \mathcal{P}_3$ & $(\underbrace{3, \ldots, 3}_{81})$ & $81$ \\
$\mathcal{P}_1 \vee \mathcal{P}_2 \vee \mathcal{P}_3$
  & $(\underbrace{3, \ldots, 3}_{27}, \underbrace{1, \ldots, 1}_{162})$ & $729/5$ \\
\hline
\end{tabular}
\end{table}

Solving LP\ref{lp:multi} gives the bounds in
Table~\ref{tab:three-bounds}, for twelve triples $(d_1, d_2, d_3)$.

\begin{table}[!t]
\centering
\caption{Lower bounds on the optimal redundancy in Example~\ref{ex:lp-three}.}
\label{tab:three-bounds}
\renewcommand{\arraystretch}{1.15}
\setlength{\tabcolsep}{8pt}
\begin{tabular}{c|cc||c|cc}
\hline
$(d_1, d_2, d_3)$ & Plotkin \cite{LBZY2023} & LP
& $(d_1, d_2, d_3)$ & Plotkin \cite{LBZY2023} & LP \\
\hline
$(3, 5, 7)$     & $r \geq 6$  & $r \geq 7$  & $(3, 7, 13)$  & $r \geq 15$ & $r \geq 15$ \\
$(5, 7, 9)$     & $r \geq 9$  & $r \geq 10$ & $(3, 7, 15)$  & $r \geq 17$ & $r \geq 18$ \\
$(7, 9, 11)$    & $r \geq 12$ & $r \geq 13$ & $(3, 9, 11)$  & $r \geq 12$ & $r \geq 13$ \\
$(9, 11, 13)$   & $r \geq 15$ & $r \geq 16$ & $(5, 9, 17)$  & $r \geq 20$ & $r \geq 21$ \\
$(11, 13, 15)$  & $r \geq 18$ & $r \geq 18$ & $(5, 11, 19)$ & $r \geq 23$ & $r \geq 24$ \\
$(13, 15, 17)$  & $r \geq 21$ & $r \geq 21$ & $(5, 13, 17)$ & $r \geq 21$ & $r \geq 21$ \\
\hline
\end{tabular}
\end{table}

The linear programming bound improves on the Plotkin bound by one symbol of redundancy at eight of
the twelve triples. The two agree at $(11, 13, 15)$, $(13, 15, 17)$, $(3, 7, 13)$ and
$(5, 13, 17)$. The first six triples have consecutive distance
requirements, and the remaining six spread the requirements further apart.
\end{example}


\section{Multi-Step Construction Method}\label{sec:construction}

We now present a multi-step construction method for generalized FCPCs. The construction method produces a valid encoding and therefore gives an upper bound on the optimal redundancy.

\subsection{Construction Method}\label{subsec:construction-proc}

Let $\mathcal{P}_1, \mathcal{P}_2, \ldots, \mathcal{P}_H$ be partitions of $\mathbb{F}_q^k$ with distance requirements $d_1, d_2, \ldots, d_H$, respectively. Without loss of generality, assume
\[
d_1 \leq d_2 \leq \cdots \leq d_H.
\]
The construction method proceeds in $H$ steps. Let $r_h$ denote the redundancy added at Step~$h$, for $h \in [H]$.

\begin{itemize}
    \item \textbf{Step~1} (Baseline protection for all partitions with distance $d_1$). Define $\mathcal{Q}_1 = \mathcal{P}_1 \vee \mathcal{P}_2 \vee \cdots \vee \mathcal{P}_H$. Construct a systematic encoding $\mathcal{C}_1 : \mathbb{F}_q^k \to \mathbb{F}_q^{k+r_1}$ such that for all $u, v \in \mathbb{F}_q^k$ lying in different blocks of $\mathcal{Q}_1$,
    \[
    d\big(\mathcal{C}_1(u),\, \mathcal{C}_1(v)\big) \geq d_1.
    \]

    \item \textbf{Step~$h$}, for $2 \leq h \leq H$ (Upgrade protection from $d_{h-1}$ to $d_h$ for $\mathcal{P}_h, \ldots, \mathcal{P}_H$). Define the partition $\mathcal{Q}_h = \mathcal{P}_h \vee \mathcal{P}_{h+1} \vee \cdots \vee \mathcal{P}_H$. Treat the image $\mathrm{Im}(\mathcal{C}_{h-1}) \subseteq \mathbb{F}_q^{k + \sum_{i=1}^{h-1} r_i}$ as the new message set, and construct a systematic map
    \[
    \varphi_h : \mathrm{Im}(\mathcal{C}_{h-1}) \to \mathbb{F}_q^{k + \sum_{i=1}^{h} r_i},
    \]
    by which we mean that $\varphi_h$ preserves its input as a prefix, so that $\varphi_h(x) = \big(x,\, \psi_h(x)\big)$ for some map $\psi_h : \mathrm{Im}(\mathcal{C}_{h-1}) \to \mathbb{F}_q^{r_h}$, such that for all $u, v \in \mathbb{F}_q^k$ lying in different blocks of $\mathcal{Q}_h$,
    \[
    d\big(\varphi_h(\mathcal{C}_{h-1}(u)),\, \varphi_h(\mathcal{C}_{h-1}(v))\big) \geq d_h.
    \]
    Then define $\mathcal{C}_h = \varphi_h \circ \mathcal{C}_{h-1}$.
\end{itemize}

The partitions $\mathcal{Q}_1, \ldots, \mathcal{Q}_H$ are the same joins that appear in the lower bound of Theorem~\ref{thm:lower}.

The overall encoder after Step~$H$ is
\[
\mathcal{C}_H : \mathbb{F}_q^k \to \mathbb{F}_q^{k + \sum_{i=1}^{H} r_i}, \qquad \mathcal{C}_H = \varphi_H \circ \varphi_{H-1} \circ \cdots \circ \varphi_2 \circ \mathcal{C}_1.
\]
Then $\mathcal{C}_H$ is a $(\mathcal{P}_h : d_h;\, h \in [H])$-FCPC with total redundancy $r = \sum_{i=1}^{H} r_i$.

\subsection{Proof of Correctness}\label{subsec:construction-proof}

\begin{theorem}\label{thm:construction}
The encoder $\mathcal{C}_H$ produced by the $H$-step construction method is a valid $(\mathcal{P}_h : d_h;\, h \in [H])$-FCPC.
\end{theorem}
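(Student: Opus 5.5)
We need to show that for every $h\in[H]$ and every $u,v$ in different blocks of $\mathcal{P}_h$, $d(\mathcal{C}_H(u),\mathcal{C}_H(v))\ge d_h$. We also note that $\mathcal{C}_H$ is systematic. The plan rests on two facts.

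\emph{Monotonicity.} Because each $\varphi_\ell$ keeps its input as a prefix, $\mathcal{C}_\ell(u)$ is a prefix of $\mathcal{C}_H(u)$ for every $\ell\le H$. We prove this by induction on the number of composed maps. Since $\mathcal{C}_\ell(u)$ and $\mathcal{C}_\ell(v)$ are prefixes of equal length, Hamming distance can only grow under appending coordinates. Hence
\[
d\big(\mathcal{C}_H(u),\mathcal{C}_H(v)\big)\ \ge\ d\big(\mathcal{C}_\ell(u),\mathcal{C}_\ell(v)\big)\quad\text{for all } \ell\in[H].
\]
The same prefix property shows that $u$ is a prefix of $\mathcal{C}_H(u)$, because $\mathcal{C}_1$ is systematic. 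So $\mathcal{C}_H$ is a systematic encoding of redundancy $\sum_i r_i$.

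\emph{Refinement.} Fix $h$ and take $u,v$ in different blocks of $\mathcal{P}_h$. The join $\mathcal{Q}_h=\mathcal{P}_h\vee\cdots\vee\mathcal{P}_H$ refines $\mathcal{P}_h$, so $u,v$ lie in different blocks of $\mathcal{Q}_h$; this is the same observation used in the proofs of Theorem~\ref{thm:lower} and Theorem~\ref{thm:upper}. We then apply the defining property of the step that handles $\mathcal{Q}_h$. For $h=1$ this is Step~1: $d(\mathcal{C}_1(u),\mathcal{C}_1(v))\ge d_1$. For $h\ge2$ it is Step~$h$: $d(\varphi_h(\mathcal{C}_{h-1}(u)),\varphi_h(\mathcal{C}_{h-1}(v)))=d(\mathcal{C}_h(u),\mathcal{C}_h(v))\ge d_h$. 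Combining this with monotonicity at $\ell=h$ gives $d(\mathcal{C}_H(u),\mathcal{C}_H(v))\ge d_h$, which completes the argument.

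There is no real obstacle; the only point needing care is the bookkeeping. The step-$h$ condition is stated in terms of the messages $u,v$ rather than the intermediate codewords. This is well-defined because $\mathcal{C}_{h-1}$ is injective, being systematic. One might also worry whether the earlier guarantees survive later steps, but monotonicity settles this: appending parity never decreases any pairwise distance. Note that the ordering $d_1\le\cdots\le d_H$ is not actually needed for correctness. It only matters for the interpretation of each step as an \emph{upgrade}.
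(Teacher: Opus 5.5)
Your proof is correct and follows the paper's route. Since $\mathcal{Q}_h$ refines $\mathcal{P}_h$, the messages $u,v$ lie in different blocks of $\mathcal{Q}_h$, so Step~$h$ gives $d(\mathcal{C}_h(u),\mathcal{C}_h(v))\ge d_h$, and the prefix-preserving maps $\varphi_\ell$ for $\ell>h$ cannot decrease this distance. Your two side remarks are also accurate: you explicitly check that $\mathcal{C}_H$ is systematic, which the definition of a GFCPC requires and the paper leaves implicit, and you note that the ordering $d_1\le\cdots\le d_H$ is not needed for correctness.
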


\begin{proof}
Fix any $h \in [H]$ and let $u, v \in \mathbb{F}_q^k$ lie in different blocks of $\mathcal{P}_h$. Since $\mathcal{Q}_h = \mathcal{P}_h \vee \mathcal{P}_{h+1} \vee \cdots \vee \mathcal{P}_H$ is a refinement of $\mathcal{P}_h$, the vectors $u$ and $v$ also lie in different blocks of $\mathcal{Q}_h$. By the design of Step~$h$, we have
\[
d\big(\mathcal{C}_h(u),\, \mathcal{C}_h(v)\big) \geq d_h.
\]

It remains to show that the subsequent steps do not decrease this distance. For any $\ell > h$, the map $\varphi_\ell$ preserves its input as a prefix, so for any two vectors $x, y \in \mathrm{Im}(\mathcal{C}_{\ell-1})$,
\[
d\big(\varphi_\ell(x),\, \varphi_\ell(y)\big) \geq d(x, y).
\]
Applying this inequality iteratively for $\ell = h+1, h+2, \ldots, H$, we obtain
\[
d\big(\mathcal{C}_H(u),\, \mathcal{C}_H(v)\big) \geq d\big(\mathcal{C}_h(u),\, \mathcal{C}_h(v)\big) \geq d_h.
\]

Since this holds for every $h \in [H]$ and every pair $u, v$ in different blocks of $\mathcal{P}_h$, the encoder $\mathcal{C}_H$ is a valid $(\mathcal{P}_h : d_h;\, h \in [H])$-FCPC.
\end{proof}

\begin{remark}\label{rem:rms}
By Theorem~\ref{thm:construction}, every encoding produced by the $H$-step construction is a valid
$(\mathcal{P}_h : d_h;\, h \in [H])$-FCPC. Hence
\[
r_{\mathcal{P}_1, \ldots, \mathcal{P}_H}(k : \bm{d}) \leq r^{\mathrm{ms}}_{\mathcal{P}_1, \ldots, \mathcal{P}_H}(k : \bm{d}),
\]
where $r^{\mathrm{ms}}_{\mathcal{P}_1, \ldots, \mathcal{P}_H}(k : \bm{d})$ is as defined in Subsection~\ref{subsec:summary}. The multi-step redundancy thus provides an upper bound on the
optimal redundancy.
\end{remark}


\subsection{Examples}\label{subsec:examples}

We now illustrate the multi-step construction method through two examples over $\mathbb{F}_3^3$. The first example demonstrates that the construction method can achieve strictly smaller redundancy than the sum of the redundancies of separate FCPCs designed for each partition individually. The second example shows that even a single FCPC designed for the join partition $\mathcal{P}_1 \vee \mathcal{P}_2$ at the larger distance requirement $d_2$ can require more redundancy than the construction method, since the construction method protects each partition only at the level that partition requires.

Together, these examples show why the proposed construction method saves redundancy by building the encoding in multiple steps. Each step reuses the parity symbols added in the earlier steps and upgrades protection only for the partitions that require a stronger distance, rather than treating the partitions separately or protecting all of them at the strongest distance.

\begin{example}\label{ex:ms-sum}
Let $f_1 : \mathbb{F}_3^3 \to \{0, 1, 2, 3\}$ be the Hamming weight function defined by
\[
f_1(u) = \mathrm{wt}(u), \quad \text{for } u \in \mathbb{F}_3^3.
\]
The domain partition induced by $f_1$ is
\begin{align*}
\mathcal{P}_1 = \big\{
& f_1^{-1}(0) = \{000\},\;
  f_1^{-1}(1) = \{100, 010, 001, 200, 020, 002\}, \\
& f_1^{-1}(2) = \{110, 101, 011, 220, 202, 022, 120, 102, 012, 210, 201, 021\}, \\
& f_1^{-1}(3) = \{111, 222, 112, 121, 211, 221, 122, 212\} \big\}.
\end{align*}
Let $f_2 : \mathbb{F}_3^3 \to \mathbb{F}_3$ be the function defined by
\[
f_2(x_1, x_2, x_3) = x_1 + x_2 + x_3 \pmod{3}.
\]
The domain partition induced by $f_2$ is
\begin{align*}
\mathcal{P}_2 = \big\{
& f_2^{-1}(0) = \{000, 012, 021, 102, 120, 201, 210, 111, 222\}, \\
& f_2^{-1}(1) = \{001, 010, 100, 022, 202, 220, 112, 121, 211\}, \\
& f_2^{-1}(2) = \{002, 020, 200, 011, 101, 110, 122, 212, 221\} \big\}.
\end{align*}

We set the distance requirements $d_1 = 3$ and $d_2 = 5$ (corresponding to error-correction capabilities $t_1 = 1$ and $t_2 = 2$, respectively).
Table~\ref{tab:ex1_P1} shows an optimal $(\mathcal{P}_1 : 3)$-encoding with redundancy $r_{\mathcal{P}_1}(k: 3) = 2$, and Table~\ref{tab:ex1_P2} shows an optimal $(\mathcal{P}_2 : 5)$-encoding with redundancy $r_{\mathcal{P}_2}(k:5) = 4$.

\begin{table}[ht]
\centering
\caption{$(\mathcal{P}_1: 3)$-encoding with redundancy $2$.}
\label{tab:ex1_P1}
\begin{tabular}{c|c}
\hline
Block of $\mathcal{P}_1$ & Redundancy \\
\hline
$f_1^{-1}(0)$ & $00$ \\
$f_1^{-1}(1)$ & $11$ \\
$f_1^{-1}(2)$ & $20$ \\
$f_1^{-1}(3)$ & $01$ \\
\hline
\end{tabular}
\end{table}

\begin{table}[ht]
\centering
\caption{$(\mathcal{P}_2: 5)$-encoding with redundancy $4$.}
\label{tab:ex1_P2}
\begin{tabular}{c|c}
\hline
Block of $\mathcal{P}_2$ & Redundancy \\
\hline
$f_2^{-1}(0)$ & $0000$ \\
$f_2^{-1}(1)$ & $1111$ \\
$f_2^{-1}(2)$ & $2222$ \\
\hline
\end{tabular}
\end{table}

The join partition is
\begin{align*}
\mathcal{P}_1 \vee \mathcal{P}_2 = \big\{
& \{000\},\, \{001, 010, 100\},\, \{002, 020, 200\}, \\
& \{011, 101, 110\},\, \{012, 021, 102, 120, 201, 210\},\, \{022, 202, 220\}, \\
& \{111, 222\},\, \{112, 121, 211\},\, \{122, 212, 221\} \big\}.
\end{align*}

We now apply the multi-step construction method. In Step~1, we construct a $(\mathcal{P}_1 \vee \mathcal{P}_2 : 3)$-encoding with redundancy $r_1 = 3$. In Step~2, we upgrade the distance for $\mathcal{P}_2$ from $d_1 = 3$ to $d_2 = 5$ by appending additional redundancy $r_2 = 2$. The upgrade assigns one Step~2 parity vector to each block of $\mathcal{P}_2$, by the assignment $f_2^{-1}(0) \mapsto 00$, $f_2^{-1}(1) \mapsto 11$, $f_2^{-1}(2) \mapsto 22$. Table~\ref{tab:ex1_joint} shows the resulting encoding.

\begin{table}[ht]
\centering
\caption{Multi-step construction of a $(\mathcal{P}_h : d_h;\, h \in [2])$-FCPC with $d_1 = 3$ and $d_2 = 5$ in Example~\ref{ex:ms-sum}.}
\label{tab:ex1_joint}
\begin{tabular}{c|c|c}
\hline
Block of $\mathcal{P}_1 \vee \mathcal{P}_2$ & Redundancy (Step~1) & Redundancy (Step~2) \\
\hline
$\{000\}$ & $000$ & $00$ \\
$\{001, 010, 100\}$ & $110$ & $11$ \\
$\{002, 020, 200\}$ & $220$ & $22$ \\
$\{011, 101, 110\}$ & $200$ & $22$ \\
$\{012, 021, 102, 120, 201, 210\}$ & $001$ & $00$ \\
$\{022, 202, 220\}$ & $100$ & $11$ \\
$\{111, 222\}$ & $010$ & $00$ \\
$\{112, 121, 211\}$ & $120$ & $11$ \\
$\{122, 212, 221\}$ & $111$ & $22$ \\
\hline
\end{tabular}
\end{table}

In this example, the redundancy achieved by the multi-step construction method is $5$, therefore
\[
r^{\mathrm{ms}}_{\mathcal{P}_1, \mathcal{P}_2}(k: (3, 5)) \leq 5,
\]
while the sum of the individual optimal redundancies is
\[
r_{\mathcal{P}_1}(k:3) + r_{\mathcal{P}_2}(k:5) = 2 + 4 = 6.
\]
Thus,
\[
r_{\mathcal{P}_1, \mathcal{P}_2}(k : (3, 5)) \leq r^{\mathrm{ms}}_{\mathcal{P}_1, \mathcal{P}_2}(k: (3, 5)) \leq 5 < 6 = r_{\mathcal{P}_1}(k: 3) + r_{\mathcal{P}_2}(k:5),
\]
demonstrating that the multi-step construction method achieves a strict redundancy saving over using separate codes for each partition.
\end{example}

\begin{example}\label{ex:ms-join}
Let $g_1 : \mathbb{F}_3^3 \to \{0, 1, 2, 3\}$ be the Hamming weight function defined by
\[
g_1(u) = \mathrm{wt}(u), \quad \text{for } u \in \mathbb{F}_3^3.
\]
The domain partition induced by $g_1$ is
\begin{align*}
\mathcal{P}_1 = \big\{
& g_1^{-1}(0) = \{000\},\;
  g_1^{-1}(1) = \{100, 010, 001, 200, 020, 002\}, \\
& g_1^{-1}(2) = \{110, 101, 011, 220, 202, 022, 120, 102, 012, 210, 201, 021\}, \\
& g_1^{-1}(3) = \{111, 222, 112, 121, 211, 221, 122, 212\} \big\},
\end{align*}
which is the same as the partition $\mathcal{P}_1$ in Example~\ref{ex:ms-sum}.

Let $g_2 : \mathbb{F}_3^3 \to \mathbb{F}_3$ be the coordinate projection defined by
\[
g_2(x_1, x_2, x_3) = x_1.
\]
The domain partition induced by $g_2$ is
\begin{align*}
\mathcal{P}_2 = \big\{
& g_2^{-1}(0) = \{000,001,002,010,020,011,022,012,021\}, \\
& g_2^{-1}(1) = \{100,101,102,110,120,111,122,112,121\}, \\
& g_2^{-1}(2) = \{200,201,202,210,220,211,222,212,221\} \big\}.
\end{align*}

We set $d_1 = 3$ and $d_2 = 5$. Since $\mathcal{P}_1$ is the same as in Example~\ref{ex:ms-sum}, we have $r_{\mathcal{P}_1}(k:3) = 2$ (see Table~\ref{tab:ex1_P1}). We also have $r_{\mathcal{P}_2}(k:5) = 4$. The messages $000$ and $100$ lie in different blocks of $\mathcal{P}_2$ at distance $1$, which forces a parity distance of at least $4$, and the assignment $g_2^{-1}(0) \mapsto 0000$, $g_2^{-1}(1) \mapsto 1111$, $g_2^{-1}(2) \mapsto 2222$ attains it.

The join partition $\mathcal{P}_1 \vee \mathcal{P}_2$ has $9$ blocks:
\begin{align*}
\mathcal{P}_1 \vee \mathcal{P}_2 = \big\{
& \{000\},\, \{001, 002, 010, 020\},\, \{011, 012, 021, 022\}, \\
& \{100\},\, \{101, 102, 110, 120\},\, \{111, 112, 121, 122\}, \\
& \{200\},\, \{201, 202, 210, 220\},\, \{211, 212, 221, 222\} \big\}.
\end{align*}

We now apply the multi-step construction method. In Step~1, we construct a $(\mathcal{P}_1 \vee \mathcal{P}_2 : 3)$-encoding with redundancy $r_1 = 2$. In Step~2, we upgrade the distance for $\mathcal{P}_2$ from $d_1 = 3$ to $d_2 = 5$ by appending additional redundancy $r_2 = 2$, using the assignment $g_2^{-1}(0) \mapsto 00$, $g_2^{-1}(1) \mapsto 11$, $g_2^{-1}(2) \mapsto 22$. Table~\ref{tab:ex2_joint} shows the resulting encoding.

\begin{table}[ht]
\centering
\caption{Multi-step construction of a $(\mathcal{P}_h : d_h;\, h \in [2])$-FCPC with $d_1 = 3$ and $d_2 = 5$ in Example~\ref{ex:ms-join}.}
\label{tab:ex2_joint}
\begin{tabular}{c|c|c}
\hline
Block of $\mathcal{P}_1 \vee \mathcal{P}_2$ & Redundancy (Step~1) & Redundancy (Step~2) \\
\hline
$\{000\}$ & $00$ & $00$ \\
$\{001, 002, 010, 020\}$ & $11$ & $00$ \\
$\{011, 012, 021, 022\}$ & $22$ & $00$ \\
$\{100\}$ & $21$ & $11$ \\
$\{101, 102, 110, 120\}$ & $02$ & $11$ \\
$\{111, 112, 121, 122\}$ & $10$ & $11$ \\
$\{200\}$ & $12$ & $22$ \\
$\{201, 202, 210, 220\}$ & $20$ & $22$ \\
$\{211, 212, 221, 222\}$ & $01$ & $22$ \\
\hline
\end{tabular}
\end{table}

In this example, the redundancy achieved by the multi-step construction method is $4$, therefore
\[
r^{\mathrm{ms}}_{\mathcal{P}_1, \mathcal{P}_2}(k: (3, 5)) \leq 4.
\]
We now show that the optimal redundancy of a single FCPC for the join partition $\mathcal{P}_1 \vee \mathcal{P}_2$ at the larger distance requirement $d_2 = 5$ satisfies $r_{\mathcal{P}_1 \vee \mathcal{P}_2}(k : 5) = 5$, which is strictly larger than $4$.

Consider four message vectors $u_1=000, u_2=100, u_3=200, u_4=010$, which lie in four distinct blocks of $\mathcal{P}_1 \vee \mathcal{P}_2$, with $d(u_1,u_2)=d(u_1,u_3)=d(u_1,u_4)=d(u_2,u_3)=1$ and $d(u_2,u_4)=d(u_3,u_4)=2$. Let $p_1, p_2, p_3, p_4$ denote the parity vectors assigned to the messages $u_1, u_2, u_3, u_4$, respectively, in a $(\mathcal{P}_1 \vee \mathcal{P}_2 : 5)$-FCPC of redundancy $r$. The FCPC condition $d(\mathcal{C}(u_i),\mathcal{C}(u_j)) \geq 5$ together with the message distances above forces
$$d(p_1, p_2), d(p_1,p_3), d(p_1, p_4), d(p_2,p_3) \geq 4, \quad \text{and} \quad d(p_2, p_4), d(p_3,p_4) \geq 3.$$
The first group of inequalities already gives $r \geq 4$. Suppose $r = 4$. Without loss of generality take $p_1 = 0000$. Each of $p_2, p_3, p_4$ then differs from $p_1$ in all four coordinates, so $p_2, p_3, p_4 \in \{1,2\}^4$. Since $d(p_2, p_3) \geq 4$, the vectors $p_2$ and $p_3$ differ in every coordinate, so $p_3$ is obtained from $p_2$ by exchanging the symbols $1$ and $2$ in each coordinate. Consequently $p_4$ agrees with exactly one of $p_2$ and $p_3$ in each coordinate, which gives $d(p_2, p_4) + d(p_3, p_4) = 4$. This contradicts $d(p_2, p_4), d(p_3,p_4) \geq 3$. Therefore $r_{\mathcal{P}_1 \vee \mathcal{P}_2}(k:5)\geq 5$.

A $(\mathcal{P}_1 \vee \mathcal{P}_2 : 5)$-FCPC with redundancy $5$ is given by the assignment in Table~\ref{tab:ex2_join5}.

\begin{table}[ht]
\centering
\caption{$(\mathcal{P}_1 \vee \mathcal{P}_2: 5)$-encoding with redundancy $5$.}
\label{tab:ex2_join5}
\begin{tabular}{c|c}
\hline
Block of $\mathcal{P}_1 \vee \mathcal{P}_2$ & Redundancy \\
\hline
$\{000\}$ & $00000$ \\
$\{001, 002, 010, 020\}$ & $11110$ \\
$\{011, 012, 021, 022\}$ & $22200$ \\
$\{100\}$ & $22120$ \\
$\{101, 102, 110, 120\}$ & $21001$ \\
$\{111, 112, 121, 122\}$ & $10020$ \\
$\{200\}$ & $12201$ \\
$\{201, 202, 210, 220\}$ & $00121$ \\
$\{211, 212, 221, 222\}$ & $01012$ \\
\hline
\end{tabular}
\end{table}

Therefore, we conclude that $r_{\mathcal{P}_1 \vee \mathcal{P}_2}(k : 5) = 5$. Furthermore, the optimal redundancy of the GFCPC satisfies
\[
r_{\mathcal{P}_1, \mathcal{P}_2}(k : (3, 5)) \geq r_{\mathcal{P}_2}(k : 5)= 4,
\]
which matches the upper bound, so $r_{\mathcal{P}_1, \mathcal{P}_2}(k : (3, 5)) = 4$. The relevant quantities compare as follows:
\[
r_{\mathcal{P}_1, \mathcal{P}_2}(k: (3,5)) = r^{\mathrm{ms}}_{\mathcal{P}_1, \mathcal{P}_2}(k: (3,5)) \leq 4 < r_{\mathcal{P}_1 \vee \mathcal{P}_2}(k: 5) = 5 < r_{\mathcal{P}_1}(k:3) + r_{\mathcal{P}_2}(k: 5) = 6.
\]
This demonstrates that the multi-step construction method achieves strictly smaller redundancy not only compared to using separate codes for each partition, but also compared to using a single FCPC for the join partition at the larger distance requirement. In this case, the multi-step construction method is optimal. Since the two partitions of $[2]$ give the values $6$ and $5$, the upper bound of Theorem~\ref{thm:upper} evaluates to $5$, which the multi-step construction method strictly improves. The saving arises because the multi-step construction method does not force the partition $\mathcal{P}_1$ to be protected at the stronger level $t_2 = 2$. It needs only $t_1 = 1$.
\end{example}


\section{Conclusion}\label{sec:conclusion}
We introduced generalized function-correcting partition codes (FCPCs), a framework that simultaneously protects multiple partitions of the message space with different distance requirements. This framework unifies and extends both FCCs with data protection and FCPCs for multiple functions with a common error-correction level. We introduced the distance requirement matrix for generalized FCPCs, used it to characterize the optimal redundancy in terms of $\mathcal{D}$-codes, and obtained computable lower bounds from its submatrices. We derived general lower and upper bounds on the optimal redundancy. For the binary case with two partitions, we established improved lower bounds under specific structural conditions on the partitions. We then derived a linear programming lower bound on the optimal redundancy of GFCPCs over $\mathbb{F}_q$. This bound applies to generalized FCPCs, and hence to FCPCs, to FCCs, and to FCCs with data protection. We presented a multi-step construction method for these codes and proved its correctness. Examples show that the construction method can require strictly less redundancy than protecting each partition separately and than protecting the join partition at the largest distance requirement, and that it is optimal in some cases.

Several directions remain open. Determining the exact optimal redundancy for specific families of partitions and closing the gap between the lower and upper bounds in general are natural next steps. Example~\ref{ex:q3} shows that the improved binary bounds do not extend as stated to $q \geq 3$, so finding analogues for non-binary alphabets is a further question. 



\begin{thebibliography}{99}

\bibitem{LBZY2023}
A.~Lenz, R.~Bitar, A.~Wachter-Zeh, and E.~Yaakobi,
``Function-correcting codes,''
\textit{IEEE Trans. Inf. Theory}, vol.~69, no.~9, pp.~5604--5618, Sep.~2023.

\bibitem{RRHH20261}
C.~Rajput, B.~S.~Rajan, R.~Freij-Hollanti, and C.~Hollanti,
``Function-correcting partition codes,''
\textit{arXiv preprint arXiv:2601.06450}, Jan.~2026.

\bibitem{RRHH20251}
C.~Rajput, B.~S.~Rajan, R.~Freij-Hollanti, and C.~Hollanti,
``Function-correcting codes with data protection,''
\textit{IEEE Trans. Inf. Theory}, vol.~72, no.~7, pp.~4860--4880, Jul.~2026.
 


\bibitem{PR2024}
R.~Premlal and B.~S.~Rajan,
``On function-correcting codes,''
in \textit{Proc. 2024 IEEE Information Theory Workshop (ITW)}, Shenzhen, China, 2024, pp.~603--608.
 
\bibitem{GXZZ2025}
G.~Ge, Z.~Xu, X.~Zhang, and Y.~Zhang,
``Optimal redundancy of function-correcting codes,''
\textit{arXiv preprint arXiv:2502.16983}, Feb.~2025.

\bibitem{XLC2024}
Q.~Xia, H.~Liu, and B.~Chen,
``Function-correcting codes for symbol-pair read channels,''
\textit{IEEE Trans. Inf. Theory}, vol.~70, no.~11, pp.~7807--7819, Nov.~2024.
 
\bibitem{SSY2025}
A.~Singh, A.~K.~Singh, and E.~Yaakobi,
``Function-correcting codes for $b$-symbol read channels,''
\textit{arXiv preprint arXiv:2503.12894}, Mar.~2025.

 \bibitem{RRHH20252}
C.~Rajput, B.~S.~Rajan, R.~Freij-Hollanti, and C.~Hollanti,
``Function-correcting codes for locally bounded functions,''
in \textit{Proc. 2025 IEEE Information Theory Workshop (ITW)}, Sydney, Australia, 2025, pp.~851--856.


\bibitem{VSA2025}
G.~K.~Verma, A.~Singh, and A.~K.~Singh,
``Function-correcting $b$-symbol codes for locally $(\lambda, \rho, b)$-functions,''
\textit{IEEE Trans. Inf. Theory}, vol.~72, no.~1, pp.~331--341, Jan. 2026.

\bibitem{SR2025}
S.~Sampath and B.~S.~Rajan,
``On Plotkin bound for function-correcting codes for $b$-symbol read channels,''
in \textit{Proc. 2025 IEEE Information Theory Workshop (ITW)}, Sydney, Australia, 2025, pp.~698--703.

\bibitem{LS2025}
H.~Ly and E.~Soljanin,
``On the redundancy of function-correcting codes over finite fields,''
in \textit{Proc. 2025 13th International Symposium on Topics in Coding (ISTC)}, Los Angeles, CA, USA, 2025, pp.~1--5.

\bibitem{RRHH20262}
C.~Rajput, B.~S.~Rajan, R.~Freij-Hollanti, and C.~Hollanti,
``Non-existence of some function-correcting codes with data protection,''
accepted for presentation in \emph{Proc. IEEE Int. Symp. Inf. Theory (ISIT)}, 2026.
Available: arXiv:2603.01049.

\bibitem{HH2026}
H.~Liu and H.~Liu,
``Function-correcting codes with homogeneous distance,''
\textit{Finite Fields Their Appl.}, vol.~112, p.~102791, 2026.

\bibitem{VS2025}
G.~K.~Verma and A.~K.~Singh,
``On function-correcting codes in the Lee metric,''
\textit{arXiv preprint arXiv:2507.17654}, Jul.~2025.

\bibitem{VS2026}
G.~K.~Verma and A.~K.~Singh,
``Function-correcting codes for linear and locally bounded functions over a finite chain ring,''
\textit{arXiv preprint arXiv:2603.14471}, Mar.~2026.

\bibitem{HUR2025}
K.~Hareesh, N.~T.~Rashid~Ummer, and B.~S.~Rajan,
``Plotkin-like bound and explicit function-correcting code constructions for Lee metric channels,''
\textit{arXiv preprint arXiv:2508.01702}, Aug.~2025.

\bibitem{SS2025}
A.~Singh and A.~K.~Singh,
``Function-correcting codes for insertion-deletion channel,''
\textit{arXiv preprint arXiv:2512.07243}, Dec.~2025.

\bibitem{PBMR2026}
R.~Pandey, S.~Bajpai, A.~A.~Mahesh, and B.~S.~Rajan,
``Function correcting codes for maximally-unbalanced Boolean functions,''
\textit{arXiv preprint arXiv:2601.10135}, Jan.~2026.

\bibitem{DMKPR2026}
S.~S.~Durgi, A.~A.~Mahesh, A.~Kumari, R.~Pandey, and B.~S.~Rajan,
``Function-correcting codes with optimal data protection for Hamming code membership,''
\textit{arXiv preprint arXiv:2602.21932}, Feb.~2026.


\bibitem{D1973} P. Delsarte, ``An algebraic approach to the association schemes of coding theory,'' \emph{Philips Res. Rep. Suppl.}, no. 10, 1973.


\bibitem{MS1977} F. J. MacWilliams and N. J. A. Sloane, \emph{The Theory of Error-Correcting Codes}. Amsterdam, The Netherlands: North-Holland, 1977.

\bibitem{Aumann1976}
R.~J.~Aumann,
``Agreeing to disagree,''
\textit{Ann. Stat.}, vol.~4, no.~6, pp.~1236--1239, Nov.~1976.

\bibitem{Quint2014}
D.~Quint,
``Lecture 3: Common knowledge and agreeing to disagree,''
lecture notes, Univ. Wisconsin--Madison, Madison, WI, USA, 2014.
[Online]. Available: \url{https://users.ssc.wisc.edu/~dquint/econ698/lecture%203.pdf}






\bibitem{vL1999} J. H. van Lint, \emph{Introduction to Coding Theory}, 3rd ed. Berlin, Germany: Springer, 1999.

\end{thebibliography}
\end{document}